\newcommand{\fullversion}[1]{#1}
\newcommand{\fullornot}[2]{#1}
 \renewcommand{\C}{\mathcal{C}}
\newcommand{\xorassump}{planted XOR assumption}
\title{New constructions of pseudorandom codes}
\author{Surendra Ghentiyala \thanks{Cornell University.  \email{sg974@cornell.edu}. This work is supported in part by the NSF under Grants Nos.~CCF-2122230 and CCF-2312296, a Packard Foundation Fellowship, and a generous gift from Google. This work was done while the author was visiting the Simons Institute for the Theory of Computing.}
\and Venkatesan Guruswami \thanks{Simons Institute for the Theory of Computing and Department of EECS, University of California, Berkeley. \email{venkatg@berkeley.edu}. Research supported in part by a Simons Investigator award, and NSF grant CCF-2211972.}}
\date{May 2025}
\pgfplotsset{width=10cm,compat=1.10}
\begin{document}
\pagenumbering{roman}

\maketitle

\listoffixmes

\begin{abstract}
    Introduced in \cite{christ2024pseudorandom}, pseudorandom error-correcting codes (PRCs) are a new cryptographic primitive with applications in watermarking generative AI models. These are codes where a collection of polynomially many codewords is computationally indistinguishable from random for an adversary that does not have the secret key, but anyone with the secret key is able to efficiently decode corrupted codewords. In this work, we examine the assumptions under which PRCs with robustness to a constant error rate exist.
\begin{enumerate}
    \item 
    We show that if both the planted hyperloop assumption introduced in \cite{BKR} and security of a version of Goldreich's PRG hold, then there exist public-key PRCs for which no efficient adversary can distinguish a polynomial number of codewords from random with better than $o(1)$ advantage.
    \item 
    We revisit the construction of \cite{christ2024pseudorandom} and show that it can be based on a wider range of assumptions than presented in \cite{christ2024pseudorandom}. To do this, we introduce a weakened version of the planted XOR assumption which we call the weak planted XOR assumption and which may be of independent interest.
    \item 
    We initiate the study of PRCs which are secure against space-bounded adversaries. We show how to construct secret-key PRCs of length $O(n)$ which are \emph{unconditionally} indistinguishable from random by $\poly(n)$ time, $O(n^{1.5-\varepsilon})$ space adversaries.
\end{enumerate}
\end{abstract}

\thispagestyle{empty}
\newpage
\tableofcontents
\newpage
\pagenumbering{arabic}

\section{Introduction}
\label{sec:intro}
The ability of malicious actors to easily and cheaply generate large amounts of AI generated content is becoming a larger issue as generative AI models progress. Digital watermarking mitigates some of these concerns by offering a way to generate AI content which is later easily recognizable (to someone with the secret key) as AI generated. One may also hope to recover other information possibly embedded in the watermark at the time of creation, like date of generation. Cryptographic watermarking leverages cryptography to give watermarking schemes with provable guarantees, as opposed to ad-hoc schemes. In this work, we expand the set of assumptions under which one can achieve cryptographic watermarking.

\subsection{PRCs and Applications}
An exciting recent work by Christ and Gunn \cite{christ2024pseudorandom} introduced the notion of pseudorandom error-correcting codes (PRCs) with the intent to watermark generative AI models. Informally, pseudorandom error-correcting codes are keyed coding schemes with the following three properties (see \cref{def: secret-keyPRC} and \cref{def: public-keyPRC} for details).
\begin{enumerate}
    \item
    Pseudorandomness: codewords are computationally indistinguishable from random for any algorithm which does not have the secret key.
    \item 
    Robustness: anyone with the secret key can decode corrupted codewords with overwhelming probability.
    \item 
    Soundness: any fixed $x \in \{ 0, 1\}^n$ has a negligible probability (where the probability is over the key generation algorithm) of being decoded to a message by the decoding algorithm.
\end{enumerate}

One of the beautiful insights of \cite{christ2024pseudorandom} is that PRCs can be used to watermark generative models if one simply reinterprets the generative model as a channel which corrupts the randomness it uses. Consider an abstracted polynomial time generative algorithm $\mathsf{Generate}$ that as part of its input takes in a random input seed $r \in \{ 0, 1\}^n$, and produces content $t \in \{ 0, 1\}^n$. We model an adversary trying to evade detection by a channel $\mathcal{E}': \{ 0, 1\}^n \rightarrow \{ 0, 1\}^n$ which corrupts the content $t$ into $\tilde{t}$. Furthermore, we assume that there exists an algorithm $\mathsf{Recover}$ which recovers an approximation $\tilde{r}$ of $r$ from $\tilde{t}$. The channel $\mathcal{E} = \mathsf{Recover} \circ \mathcal{E}' \circ \mathsf{Generate}$ then acts as a corrupting channel for the input seed $r$.

Say we wish to watermark the output of our generative model with some message $m$. Let $c$ be a PRC codeword for $m$. Notice that if we run $\mathsf{Generate}$ seeded with $c$, rather than truly random $r$, we obtain several desirable properties.
\begin{enumerate}
    \item Undetectability \cite{christ2024undetectable}: the pseudorandomness of PRC outputs guarantees that watermarked content is computationally indistinguishable from unwatermarked content. This guarantees that the quality of the outputs is not degraded by watermarking.

    \item Tamper resistance: applying our PRC's robust decoding algorithm to the tampered AI generated content $\mathcal{E}(c)$ lets us recover the watermark $m$. Therefore, the watermark is not removed by the tampering by $\mathcal{E}'$ to the generated content.

    \item Few false positives: the soundness property guarantees that for any fixed human generated text $z_1 \dots z_n$, with overwhelming probability, the decoding algorithm will not flag it as a corrupted codeword (and thus watermarked text).
\end{enumerate}

In this work, we are concerned with new constructions of pseudorandom codes. The assumptions required for the construction of pseudorandom codes in \cite{christ2024pseudorandom} are relatively strong (see \cref{subsec: revisit}), and were subsequently weakened in the case of secret-key PRCs to the existence of a local weak pseudorandom function family~\cite{golowich2024editdistancerobustwatermarks}.

We restrict ourselves to constructing zero-bit PRCs (the encoded message is always ``1'', see \cref{def: zero-bit}) which are robust to all channels which introduce errors at a rate of $1/2-\varepsilon$ (for constant $\varepsilon$). As shown in~\cite{christ2024pseudorandom}, such constructions can be  bootstrapped into constant rate PRCs (see \cref{def: rate}). Furthermore, \cite{christ2024pseudorandom, golowich2024editdistancerobustwatermarks} show how to bootstrap such constructions into codes which are robust to other types of errors than just substitution errors.

\subsection{Watermarking large language models}
We wish to emphasize that the framework of watermarking using PRCs is not restricted to any one type of generative AI model. However, to help make the philosophy of watermarking using PRCs more concrete, we review how \cite{christ2024pseudorandom} instantiate a PRC based scheme for watermarking large language models (LLMs).

Imagine an abstracted model of an LLM which works over the binary alphabet and always outputs text of length $n$. Concretely, consider an efficiently computable function $f: \{ 0, 1\}^* \times \{ 0, 1\}^* \rightarrow [0, 1]$ which takes in the prompt and the output text so far as the input and outputs the probability $p \in [0, 1]$ that the next token will be 1. The use of the binary alphabet in $f$ is without loss of generality since all tokens can be represented in binary. Text generation on a prompt $y \in \{ 0, 1\}^*$ works by iteratively sampling $z_{i} \leftarrow \text{Ber}(f(y, z_1 \dots z_{i-1}))$ for all $i \in [1, n]$. The final output of the LLM is then $z_1 \dots z_n$.

Let us now consider a different procedure to sample from the same distribution. We first sample $x_1 \dots x_n$, each independently from $\text{Ber}(1/2)$. To generate from the LLM on a prompt $y \in \{ 0, 1\}^*$, we iteratively sample $z_i$ for $i \in [1, n]$ as follows. Let $p_i = f(y, z_1, \dots, z_{i-1})$, if $p_i \leq 1/2$, sample $z_i$ from $\text{Ber}(2 p_i x_i)$, otherwise sample $z_i$ from $\text{Ber}(1-(1-x_i)(1-p_i))$. Note that since each $x_i$ is sampled uniformly from $\text{Ber}(1/2)$, $z_i$ is still distributed as $\text{Ber}(p_i)$, and therefore the output distribution of the LLM on a prompt $y$ remains unchanged from the previous example.

The key now is to create an LLM that samples $x_1 \dots x_n$ from a pseudorandom error-correcting code. We will call this new LLM the watermarking LLM. We assume that the original LLM is a polynomial time algorithm (formally, we need a family of LLMs parameterized by input length for the notion of a polynomial time algorithm to make sense, but we omit such details for the sake of exposition). Therefore, the \emph{pseudorandomness property} guarantees that the output distribution of the watermarking LLM is computationally indistinguishable from the case when $x_1 \dots x_n$ are sampled at random, which we just saw is the same as the original LLM output distribution. 

Furthermore, notice that if $0 < p_i <1$, then $z_i = x_i$ with probability greater than $1/2$. Therefore, if many $p_i$ are bounded away from one, the output of the watermarking LLM is relatively close to the codeword $x_1 \dots x_n$. The watermarking LLM takes the codeword $x_1 \dots x_n$ as one of its inputs and outputs $z_1 \dots z_n$, and in this way it functions as a corrupting channel. For sufficiently high entropy outputs, many $p_i$ are sufficiently close to $1/2$, therefore $z_1 \dots z_n$ is relatively close to $x_1 \dots x_n$, and anyone with the secret key can decode $z_1 \dots z_n$, thereby confirming that the output has been watermarked. Furthermore, the LLM output $z = z_1 \dots z_n$ is also robust to corruptions by an adversary trying to evade detection since $\Tilde{z}$ will still be decoded by someone with a secret key assuming that $\Delta(z, \Tilde{z})$ is small (which it will be if the adversary does not make significant changes to $z$). Therefore, watermarked and edited text corresponds to corrupted PRC codewords.

For a discussion of how to watermark LLM text using PRCs as well as a other application of PRCs (robust steganography), we refer the reader to \cite{christ2024pseudorandom}.
\subsection{Our results}
For the purpose of watermarking, our PRCs usually need to be robust to $p$-bounded channels (see \cref{def: p-bounded}). Informally, these are channels where an adversary can arbitrarily flip any $pn$ bits of a codeword.

We begin with a warmup in which we present a construction under the minimal cryptographic assumption of one-way functions (\cref{thm: warmup_thm}). We view this warmup as a way to build intuition about PRCs and what assumptions we use to construct them. We also view it as an interesting result telling us what parameters we should try to achieve in our constructions. The simple warmup PRC scheme shows that it is easy to build PRCs which are robust to any sub-constant noise rate over the binary alphabet or robust to any constant noise rate over increasing alphabet sizes. We therefore restrict our attention in the following sections to constructing PRCs which are robust to a constant noise rate, which surprisingly turns out to require much stronger assumptions and involved constructions.

\subsubsection{Planted hyperloop construction}
The planted hyperloop assumption, introduced in \cite{BKR}, asserts that a random 5-hypergraph is distinguishable from a random 5-hypergraph with a special $\Theta(\log n)$ size 3-hypergraph planted in it with advantage at most $o(1)$. \cite{BKR} show that if both the planted hyperloop assumption and the security of Goldreich's PRG \cite{Goldreich2011} instantiated with the predicate $P_5(x_1, \dots, x_5) = x_1 \oplus x_2 \oplus x_3 \oplus x_4 x_5$ hold, then public key cryptography exists. We show that that similar assumptions imply a type of public-key PRC.

\begin{theorem}[informal version of \cref{thm: hyperloop_bigthm}]
    Under the assumption used to construct public key cryptography in \cite{BKR} and $o(1)$-pseudorandomness of Goldreich's PRG instantiated with the $P_5(x_1, \dots, x_5) = x_1 \oplus x_2 \oplus x_3 \oplus x_4 x_5$ predicate, there exist public-key PRCs robust to $p$-bounded channels for constant $p<1/2$ and with $o(1)$ pseudorandomness against PPT adversaries.
\end{theorem}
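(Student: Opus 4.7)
The plan is to adapt the \cite{BKR} public-key encryption construction directly into a zero-bit PRC. The key generator samples a random 5-hypergraph $H$ on $n$ input variables and plants a $\Theta(\log n)$-size 3-hyperloop in the ``linear side'' (coordinates $1,2,3$ of the $P_5$ predicate) of $H$; the secret key is the location of the planted structure and the public key is $H$ itself. An encoder (given $H$) chooses a uniform seed $r \in \{0,1\}^n$ and outputs the Goldreich PRG evaluation $G_H^{P_5}(r)$, i.e.\ each output coordinate is $r_{i_1} \oplus r_{i_2} \oplus r_{i_3} \oplus r_{i_4} r_{i_5}$ for the indices of the corresponding hyperedge. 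To achieve negligible soundness error, a single codeword will actually be the concatenation of $\poly(n)$ independent PRG evaluations under independent seeds $r_1, \dots, r_m$, all using the same public hypergraph $H$.

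\textbf{Pseudorandomness.} I would argue pseudorandomness by a standard two-step hybrid. In $H_0$, the distinguisher sees $H$ and $\poly(n)$ codewords $\{G_H^{P_5}(r_j)\}_j$. In $H_1$, I replace $H$ by a uniformly random 5-hypergraph $H'$; the gap is bounded by the planted hyperloop assumption. In $H_2$, I replace each $G_{H'}^{P_5}(r_j)$ with a uniform random string; the gap reduces to $o(1)$-pseudorandomness of Goldreich's PRG with predicate $P_5$ against polynomially many output blocks (folding the per-block hybrid into the final $o(1)$ bound). Triangle inequality gives $o(1)$ total advantage against any PPT distinguisher.

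\textbf{Decoding and robustness.} Let $S$ be the planted set of $\Theta(\log n)$ hyperedges whose linear sides XOR to zero. On a clean codeword $y = G_H^{P_5}(r)$, the XOR $\bigoplus_{e \in S} y_e$ has no linear contribution from $r$ and reduces to a XOR of the AND terms $r_{i_4(e)} r_{i_5(e)}$; the \cite{BKR} planted hyperloop is engineered so that this residual is biased away from $\frac12$ by a constant $\delta > 0$ over the choice of $r$. Under a $p$-bounded adversarial channel with $p < 1/2$, the bias degrades to at worst $\delta \cdot (1-2p)^{|S|} = 1/\poly(n)$. The decoder computes this length-$|S|$ parity in each of the $m = \poly(n)$ PRG blocks and thresholds the empirical count; by a Chernoff bound, making $m$ a sufficiently large polynomial yields completeness against $p$-bounded channels and negligible soundness error against any fixed string $x$ (for which each parity is an unbiased coin, independent of $H$).

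\textbf{Main obstacle.} The main difficulty is that the signal per parity check is only $1/\poly(n)$, because the hyperloop has size $\Theta(\log n)$ and the noise rate is a constant bounded below $1/2$. This forces a delicate parameter balance: the number of PRG blocks $m$ must be large enough that a Chernoff bound on bias-$1/\poly(n)$ indicators concentrates tightly enough to beat the soundness error for \emph{every} fixed $x$, while still keeping codeword length polynomial and preserving the pseudorandomness hybrid (whose loss grows with $m$). A secondary subtlety is verifying that the residual quadratic terms in the \cite{BKR} hyperloop genuinely produce a constant-sign bias rather than averaging to $0$; this is what pins down $\delta$ and is the step where I would need to lean on the specific combinatorial structure of the planted hyperloop rather than treat it as a black box.
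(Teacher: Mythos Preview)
Your overall architecture matches the paper: the same planted-hyperloop construction, the same two-step hybrid for pseudorandomness, and the same idea of amplifying a $1/\poly(n)$ per-block parity bias via concatenation and thresholding. But there are two genuine gaps in your amplification step that the paper's version (\cref{lem: amplify_public}) explicitly addresses.

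\textbf{Soundness.} Your concatenated codeword is $F_H(r_1)\|\cdots\|F_H(r_m)$ with the \emph{same} secret parity set $S$ used in every block. Take the fixed string $x=0^{m\cdot|\text{output}|}$: the parity over $S$ in every block is $0$, so your threshold decoder outputs $1$ with probability $1$ over \emph{every} key. More generally, for any fixed $x$ the $m$ block-parities are highly correlated through the shared $S$, so the ``unbiased coin'' claim and the Chernoff bound both fail. The paper fixes this by including in the key independent random shifts $z_1,\dots,z_m$ (one per block) and XORing block $i$ with $z_i$ before applying the parity; this makes each block uniformly random from the decoder's perspective, so the weak-soundness bound of $1/2$ per block applies and the parities become independent.

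\textbf{Robustness to $p$-bounded channels.} Your bias estimate $\delta\cdot(1-2p)^{|S|}$ is the formula for a binary symmetric (or hypergeometric) channel, not an adversarial one. A $p$-bounded adversary can choose which $p\cdot(\text{length})$ bits to flip after seeing the codeword, and nothing in your write-up argues that this cannot concentrate damage on the parity positions across blocks. The paper's amplification lemma also folds a uniformly random permutation $\pi$ of all $m\cdot(\text{block length})$ coordinates into the key; this makes any adversarial error pattern look like a random weight-$\le pn$ error to the decoder and reduces to the hypergeometric case, where the $(1-2p)^{|S|}$ computation is valid.

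A smaller point: the clean per-block bias is not a constant $\delta$. With $|S|=\ell=\Theta(\log n)$ disjoint hyperedges, the residual is an XOR of $\ell$ independent $\mathrm{Ber}(1/4)$ terms, giving bias $2^{-\ell}=1/\poly(n)$ already \emph{before} the channel (this is \cref{lem: bias_hyperloop}). You correctly land on a $1/\poly(n)$ signal, but attribute it to the wrong source; this matters if you ever try to tune $\ell$.
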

Informally, by $\gamma$ pseudorandomness here, we mean that any PPT algorithm can distinguish a polynomial number of samples from random with at most $\gamma$ advantage.

This result mostly follows from observing that the \cite{BKR} construction exhibits some robustness to errors. We do require some care since \cite{BKR} are able to apply standard techniques to amplify security and correctness in their public key cryptography scheme, whereas such amplification techniques would break the robust decoding property needed in PRCs. 

There are at least two ways to interpret \cref{thm: hyperloop_bigthm}. The more obvious is simply the construction of public-key PRCs from studied cryptographic assumptions. However, one can also view it as suggesting that either the planted hyperloop assumption or the security of Goldreich's PRG with the $P_5$ predicate is a surprisingly strong assumption. In particular, since the only other known construction of public-key PRCs relies on fairly strong assumptions (see \cref{subsec: revisit}), this puts the assumptions of \cref{thm: hyperloop_bigthm} into a select group of assumptions implying public-key PRCs.

\subsubsection{Revisiting \cite{christ2024pseudorandom} and planted XOR assumption}
\label{subsec: revisit}
In \cref{sec: weakxor}, we revisit the assumptions under which \cite{christ2024pseudorandom} construct PRCs. Their construction is secure if either of the following hold
\begin{enumerate}
    \item
    The \xorassump{} and polynomial security of LPN with constant noise rate
    \item 
    $2^{O(\sqrt{n})}$ security of LPN
\end{enumerate}
We revisit the first of these assumptions. While polynomial security of LPN with constant noise rate is a very well established cryptography assumption, the \xorassump{} (introduced in \cite{sparse-k-sum}) is relatively new. It is therefore the most critical vulnerability in the \cite{christ2024pseudorandom} construction. Informally, the \xorassump{} says that a random matrix $G \in \{ 0, 1\}^{m \times n}$ modified so that $O(\log n)$ rows xor to $0^m$ is computationally indistinguishable from a truly random matrix. We therefore generalize and relax the assumption to what we call the weak \xorassump{} (see \cref{assump: sparse-xor}). Informally, the weak planted XOR assumption (with noise rate $\varepsilon$) says that a random matrix $G \in \{ 0, 1\}^{m \times n}$ modified so that $O(\log n)$ rows xor to a vector $v$ sampled from $\text{Ber}(m, \varepsilon)$ is computationally indistinguishable from a truly random matrix.

We observe that both LPN and the weak \xorassump{} have a noise rate parameter, $\eta$ and $\varepsilon$ respectively. 
We show that there is a wide range of points along the $\varepsilon, \eta$ parameter trade-off curve for which pseudorandom codes robust to a constant noise rate exist. 

\begin{restatable*}{theorem}{WeakXorFinal}
    \label{thm: weakxor_bigtm}
    For efficiently computable $m = \poly(n), t = O(\log n), \eta = o(1), \varepsilon = O(\log(m)/(\eta m))$ which are functions of $n$ and constant $p \in [0, 1/2)$, if $\textsf{XOR}_{m, t, \varepsilon}$ holds and $\mathsf{LPN}[\eta]$ holds, then there exists a $(1-\negl(n), 1-\negl(n), \negl(n))$-public-key PRC which is robust to all $p$-bounded channels and pseudorandom against all PPT adversaries.
\end{restatable*}

One can choose to read this result as saying more about the planted XOR assumption than the construction of PRCs. We will see that adding noise to the planted xor assumption seems to weaken it (for which we have some minor evidence \cref{subsec: evidence}) and interacts nicely with the LPN assumption. This seems to imply that the weak planted xor assumption may be the next natural variant of the planted xor assumption to study.

\subsubsection{Unconditional PRCs for space-bounded adversaries}
A natural and fundamental question in this area is whether we can prove the unconditional existence of PRCs (not based on cryptographic conjectures). To this end, we initiate the study of PRCs which are pseudorandom against polynomial time, space-bounded adversaries. Here we rely on the results showing that the problem of learning sparse parities (possibly with noise) is hard for space-bounded adversaries \cite{RazLearningRequiresGoodMemory, koltime-space, garg2021memorysample}.
\begin{theorem}[informal]
    There exists a zero-bit PRC with codeword length $O(n)$ that is robust to error rate $p$ for any constant $p<1/2$ and is \textbf{unconditionally} pseudorandom against adversaries which have $O(n^{3/2-\varepsilon})$ space and $\poly(n)$ time.
\end{theorem}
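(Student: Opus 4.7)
The plan is to lift the LPN-based secret-key PRC construction of Christ--Gunn to the bounded-memory setting, replacing the cryptographic hardness of LPN with the unconditional memory lower bounds for learning noisy sparse parities from Raz, Kol--Raz--Tal, and Garg--Raz--Tal. The secret key is a uniform $k$-sparse vector $s \in \{0,1\}^{\ell}$; a codeword packages $m$ fresh LPN samples $(a_1, b_1), \ldots, (a_m, b_m)$ with each $a_i \in \{0,1\}^{\ell}$ uniform and $b_i = \langle a_i, s\rangle \oplus r_i$ for $r_i \sim \mathrm{Ber}(\eta)$ at a small constant $\eta$, and the decoder on $(\tilde a_i, \tilde b_i)_i$ thresholds the empirical correlation $T = \sum_{i=1}^m (-1)^{\tilde b_i \oplus \langle \tilde a_i, s \rangle}$. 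To engage the memory lower bound at the target level, I would choose $\ell \approx n^{1-\varepsilon}$, $k \approx \sqrt{n}$, and $m \approx n^{\varepsilon}$, yielding codeword length $m(\ell+1) = O(n)$ and a Garg--Raz--Tal-style memory lower bound of $\Omega(\ell k) = \Omega(n^{3/2-\varepsilon})$ for distinguishing $k$-sparse LPN samples from uniform.

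The analysis then has four steps. First, write down the encoder and decoder and verify the codeword length. Second, argue soundness via a Chernoff bound on $T$: for any fixed $y$ independent of the key, the $m$ summands of $T$ are mean-zero over the uniform $a_i$, so the threshold is crossed with probability $\exp(-\Omega(m))$, which is negligible since $m = n^{\varepsilon}$. Third, argue pseudorandomness by a direct black-box reduction: any $\poly(n)$-time, $o(n^{3/2-\varepsilon})$-space distinguisher for codewords from uniform yields a sparse-LPN distinguisher of the same resources, contradicting Garg--Raz--Tal (sharpened by the Kol--Raz--Tal time--space tradeoff in the $\poly(n)$-time regime). Fourth, argue robustness by showing that on a true codeword the expected value of $T$ under any $p$-bounded channel still exceeds the soundness threshold with overwhelming probability, which together with a second Chernoff bound gives $1 - \negl(n)$ decoding.

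The last step is the main obstacle. With $k \approx \sqrt{n}$, a $p$-bounded adversary that spreads its $pm(\ell+1) = O(n)$ flips uniformly over the codeword drives each $\langle e, s\rangle$ essentially uniform: the hypergeometric $|e \cap \mathrm{supp}(s)|$ has mean $pk = \Theta(\sqrt{n}) \gg 1$, so the per-summand bias $(1-2p)^k$ is exponentially small in $\sqrt{n}$ and the raw parity test is destroyed. To rescue robustness I would first encode each sample $(a_i, b_i)$ with a fixed public constant-rate, constant-distance binary code, so that block-by-block decoding recovers the clean $(a_i, b_i)$ on all but a small fraction of blocks against any $p$-bounded adversary, and then apply the parity test only to the recovered samples. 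The balancing act is to do this while simultaneously (a) keeping the total codeword length $O(n)$; (b) preserving pseudorandomness against $o(n^{3/2-\varepsilon})$-space adversaries despite the added structure (likely by xoring with a pseudorandom mask derived from the sample so that the codeword distribution remains close to uniform on the image of the outer code); and (c) attaining robustness uniformly for every constant $p < 1/2$.
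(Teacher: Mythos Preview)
Your plan diverges from the paper at the most delicate point, and the gap is real. Wrapping each sample $(a_i,b_i)$ in a fixed public error-correcting code destroys pseudorandomness: the codeword is then supported on (a coset of) a known proper subspace, which a constant-space adversary can test by a single parity check. Your proposed remedy of ``xoring with a pseudorandom mask derived from the sample'' cannot save this. If the mask is computable from the sample alone, the adversary computes it too and strips it; if it depends on the secret key, it must be fresh per codeword (else XORing two codewords exposes the code), which forces either a PRG (reintroducing a computational assumption) or a one-time pad as long as the codeword. So step~(b) of your balancing act fails, and with it the whole robustness patch.

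The paper sidesteps this by a different tradeoff in the parameters. It keeps the secret sparsity at $\Theta(\log n)$, so the per-check bias after a $p$-bounded (in fact BSC) channel is $(1-2p)^{\Theta(\log n)} \ge n^{-\delta}$, only polynomially small rather than $\exp(-\sqrt n)$. The memory lower bound is then recovered not from the sparsity but from the noise rate: taking LPN noise $1/2-\varepsilon$ with $\varepsilon \approx n^{-1/2+\delta}$, the Garg--Raz--Tal bound gives $\Omega(n\log^{0.99} n/\varepsilon)=\Omega(n^{3/2-2\delta})$ space. To amplify the tiny $\varepsilon$-bias, the construction uses \emph{many} independent $\Theta(\log n)$-sparse secrets $s_1,\dots,s_{k'}$ against a \emph{single} shared $a$ (the opposite of your one-secret, many-$a_i$ layout), and handles the resulting weak dependence among the $k'$ checks with a read-$k$ Chernoff bound. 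No outer code is needed, and robustness is proved only for $\mathrm{BSC}(p)$, not arbitrary $p$-bounded channels.

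A secondary issue: your soundness argument takes the randomness over ``the uniform $a_i$'', but in the soundness game the received string (hence all $\tilde a_i,\tilde b_i$) is fixed and only the key is random. The $m$ summands of $T$ are then correlated through the single $s$, and an adversarial input (say all $\tilde a_i=0$) makes $T$ deterministic. The paper deals with the analogous issue by first rejecting inputs whose $\tilde a$ is not sufficiently balanced.
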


We emphasize that these results are for the one-way space model in which the adversary has $O(n^{3/2-\varepsilon})$ and tries to distinguish between a stream of random bits and a stream of codewords (and must write down any bits from the stream it wishes to recall later).

For context, in the space-bound cryptography setting, the best secret-key cryptography scheme where the communicating parties must store the entire length $O(n)$ ciphertext is only secure against $O(n^2)$ space adversaries. So while the gap between the power of the adversary and that of the players is admittedly small in our PRC (which is essentially a secret-key scheme with a robust decoding algorithm), the gap is still somewhat close to the best known for security against space bounded adversaries (where there is no requirement of robustness). To our knowledge, we are the first to study robust decoding schemes in the cryptographic space-bounded setting.

Unfortunately, our scheme is unlikely to be practical for watermarking generative AI as most generative models use more than $O(n^{3/2})$ auxiliary space (where $n$ is the size of the output of the generative model). One can therefore view this result as a first step towards practical unconditional PRCs. As the field of space-bounded cryptography progresses, one may hope we will eventually be able to construct PRCs which are pseudorandom against $O(n^{5})$ space and $\poly(n)$ time adversaries, which may indeed be practical for watermarking generative AI. Conversely, we believe that our scheme may already be useful for other use cases, such as robust steganography for particular types of steganographic channels (see \cite{christ2024pseudorandom} for details on robust steganography).

\subsection{Further directions}
\begin{enumerate}
    \item 
    The construction of public-key pseudorandom codes from unstructured assumptions is possibly the biggest question left open by this work. All known constructions rely on structured assumptions like hardness of the learning parity with noise problem or the planted hyperloop assumption. Even the construction of public-key PRCs from such a strong assumption as indistinguishability obfuscation would be new and interesting.
    \item 
    The weak \xorassump{} introduced in \cref{sec: weakxor} merits more cryptoanalytic study. Can we find more evidence that such an assumption is indeed weaker than the standard \xorassump{}?
    \item
    It may also be interesting to study from a theoretical perspective whether there exist a general set of generative model properties such that watermarking models with those properties using some explicit error correcting codes (from some class of constructions) rather than a PRC does not significantly degrade quality of model outputs.
\end{enumerate}

\subsection{Related work}
The idea of pseudorandom error-correcting codes was introduced in \cite{christ2024pseudorandom} with the intent of watermarking generative AI. They constructed a binary zero-bit encryption scheme robust to the bounded adversarial substitution channel and used that to construct a binary, constant rate PRC robust to both the bounded substitution channel and the random deletion channel. Followup work by Golowich and Moitra \cite{golowich2024editdistancerobustwatermarks} showed a construction of pseudorandom codes where the alphabet size grows polynomially in the output length of the code from a zero-bit PRC. They showed how to use such large alphabet pseudorandom codes to watermark LLM texts so that they are robust to bounded edit-distance channels (channels allowing insertions, substitutions, and deletions). Interestingly, their construction assumes the existence of a $O(\log n)$-local weak pseudorandom function family. This is quite similar to \cref{sec: hyperloop} which (among other things), assumes the security of Goldreich's PRG, which is $O(1)$-local. 

PRCs are perhaps most closely related to backdoored pseudorandom generators. Backdoored PRGs (first introduced in \cite{VVbackdooredPRG}) are pseudorandom generators where anyone with a secret key can distinguish PRG outputs from random. Zero-bit PRCs can just as well be thought of as backdoored PRGs where the mechanism to distinguish PRG outputs from random is robust to errors in its input.

The planted hyperloop construction of public-key cryptography \cite{BKR} is itself based on \cite{ApplebaumBarakWigdersonPKCFromDifferentAssump} and \cite{Goldreich2011}. These all belong to lines of work labeled expander-based cryptography which utilize or change the structure of expander graphs to build cryptographic primitives \cite{ExpanderBasedCryptoPlusNaturalProofs}.

\cref{sec: time-space} is based on \cite{RazLearningRequiresGoodMemory, koltime-space, garg2021memorysample}, which show that the problem of learning sparse parities with noise is hard for space-bounded algorithms. These results are intimately connected to the area of space-bounded cryptography. In space-bounded cryptography (introduced in \cite{MaurerUeli1992}), it is assumed all adversaries are space-bounded (have at most, say, $o(n^2)$ space, where $n$ is the message length). Unlike traditional cryptography, researchers have been able to prove unconditional results in the bounded storage setting \cite{KaliskiBurton1997, Ding2001, Dodis2023}.

    \subsection*{Acknowledgements}

    The authors would like to thank Yinuo Zhang for helpful discussion. They would also like to thank Sam Gunn and Noah Stephens-Davidowitz for reviewing early drafts of this work. We also wish to thank Miranda Christ for the observation that our warmup implies that secret-key PRCs with $\omega(1)$ alphabet size and robustness to constant rate errors is trivial to achieve.

\section{Preliminaries}
\label{sec:prelims}
\subsection{Notation}
We will use the notation ${{[n]} \choose k}$ to denote the set of all size $k$ subsets of $[n]$. We also often use the notation $x_{[a, b]}$ to denote bits $a$ through $b$ (inclusive) of the string $x$. We write $\text{Ber}(n, \eta)$ to denote the distribution $x_1 x_2 \dots x_n$ where each bit $x_i \in \{ 0, 1\}$ is sampled independently from Ber($\eta$).  We write BSC($p$) to denote the binary symmetric channel with crossover probability $p$. This is the channel where each bit is flipped with probability $p$ and remains the same with probability $1-p$. For $x, y \in \{ 0, 1\}^n$, $\Delta(x, y) = |\{ i : x_i \neq y_i \}|$ is the Hamming distance between $x$ and $y$. We also use $\mathcal{S}_{t, n} = \{ x \in \{ 0, 1\}^n : |x| = t \}$ to denote the Hamming sphere of dimension $n$ and radius $t$.

We will write $x_1, \dots, x_n \leftarrow \mathcal{D}$ to denote sampling $x_1, \dots, x_n$ each independently from a distribution $\mathcal{D}$ and also occasionally overload this notation by writing $x_1, \dots, x_n \leftarrow S$ to denote sampling $x_1, \dots, x_n$ each independently and uniformly from the set $S$.

If $a \in \{ 0, 1\}^n$ and $b \in \{ 0, 1\}^m$, then $ab \in \{ 0, 1\}^{n+m}$ denotes the concatenation of $a$ and $b$. For a matrix $G \in \{ 0, 1\}^{n \times m}$, $G_{i} \in \{ 0, 1\}^m$ is row $i$ of $G$.
\subsection{Probability and combinatorics}
\begin{definition}
    We say a string $a \in \{ 0,1 \}^n$ is $\delta$-biased if $|\{ i: a_i =0\} - \{ i : a_i = 1\}| \leq \delta n$.
\end{definition}

\begin{lemma}
    \label{lem: xoring_bias}
    Let $p_1, \dots, p_n \in [0, 1/2]$, if $X_i \sim \text{Bern}(p_i)$, then
    $$\underset{X_1, \dots, X_n}{\operatorname{Pr}}[X_1 \oplus \dots \oplus X_n = 0] = \frac{1}{2} \left( 1 + \prod_{i=1}^n (1-2p_i) \right) \ . $$
\end{lemma}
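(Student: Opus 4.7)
The plan is to use the standard $\pm 1$ encoding of bits, which turns XOR into multiplication and reduces the claim to an expectation identity that factors by independence.

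First, I would set $Y_i = (-1)^{X_i}$, so that $Y_i \in \{+1,-1\}$ and $Y_i = 1$ iff $X_i = 0$. The key observation is that $Y_1 Y_2 \cdots Y_n = (-1)^{X_1 \oplus \cdots \oplus X_n}$, since the parity of $X_1+\cdots+X_n$ over the integers matches the mod-2 parity. A direct calculation gives $\mathbb{E}[Y_i] = (1-p_i) \cdot 1 + p_i \cdot (-1) = 1 - 2p_i$.

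Next, by independence of the $X_i$'s (and hence of the $Y_i$'s), the expectation of the product factors:
\[
\mathbb{E}\!\left[\prod_{i=1}^n Y_i\right] = \prod_{i=1}^n \mathbb{E}[Y_i] = \prod_{i=1}^n (1-2p_i).
\]
On the other hand, letting $q = \Pr[X_1 \oplus \cdots \oplus X_n = 0]$, the left-hand side equals $q \cdot 1 + (1-q) \cdot (-1) = 2q - 1$. Solving for $q$ yields exactly $q = \tfrac{1}{2}\bigl(1 + \prod_{i=1}^n (1-2p_i)\bigr)$, which is the claim. The hypothesis $p_i \in [0,1/2]$ is actually not needed for the identity itself (it holds for all $p_i \in [0,1]$), though it ensures each factor $1-2p_i$ is nonnegative, which is how the lemma will be applied later.

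There is no real obstacle here; the only thing to be careful about is the step equating $(-1)^{\sum_i X_i}$ with $(-1)^{X_1\oplus \cdots \oplus X_n}$, which simply uses that $(-1)^k$ depends only on $k \bmod 2$. An equivalent induction-on-$n$ proof also works cleanly via the identity $\Pr[A \oplus B = 0] = \Pr[A=0]\Pr[B=0] + \Pr[A=1]\Pr[B=1]$ together with the algebraic fact $(1-p)(1+s) + p(1-s) = 1 + s(1-2p)$ applied with $s = \prod_{i<n}(1-2p_i)$ and $p = p_n$, but the character-based argument above is shorter and more transparent.
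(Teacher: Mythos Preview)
Your argument is correct; the $\pm 1$ character trick is the standard way to prove this identity, and every step is justified. Note that the paper states this lemma without proof (it is the classical piling-up lemma), so there is no paper proof to compare against.
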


\begin{lemma}[Chernoff Bound \cite{Chernoff}]
    \label{lem: chernoff}
    Let $X_1, \dots, X_n$ be independent random variables, each distributed as Ber($p$). Let $\mu = np$, and $X = X_1 + \dots + X_n$. If $\delta \geq 0$, then
    $$\underset{X_1, \dots, X_n}{\operatorname{Pr}}[X \geq (1+\delta) \mu] \leq e^{-\delta^2 \mu/(2+\delta)} \ .$$
    If $0 < \delta < 1$, then
    $$\underset{X_1, \dots, X_n}{\operatorname{Pr}}[X \leq (1-\delta) \mu] \leq e^{-\delta^2 \mu/3} \ . $$
\end{lemma}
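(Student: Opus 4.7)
The plan is to apply the standard moment generating function (Cramér–Chernoff) method. For any $t > 0$, applying Markov's inequality to $e^{tX}$ gives
\[
\underset{X_1, \dots, X_n}{\operatorname{Pr}}[X \geq (1+\delta)\mu] \;\leq\; \frac{\expect[e^{tX}]}{e^{t(1+\delta)\mu}}.
\]
Independence of the $X_i$'s factors the numerator as $\prod_i \expect[e^{tX_i}] = (1 + p(e^t - 1))^n$, which I would bound using $1 + x \leq e^x$ by $\exp(\mu(e^t - 1))$ since $np = \mu$.

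Second, I would optimize over $t$. Setting $t = \ln(1+\delta)$ for the upper tail yields the raw bound $\bigl(e^\delta/(1+\delta)^{1+\delta}\bigr)^\mu$. For the lower tail, I would repeat the same argument with $t < 0$ (equivalently, apply the upper-tail argument to the variables $1 - X_i$), and then set $t = \ln(1-\delta)$ to obtain the analogous $\bigl(e^{-\delta}/(1-\delta)^{1-\delta}\bigr)^\mu$.

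The only remaining work — the one mildly technical piece — is to massage these raw bounds into the clean forms stated in the lemma. For the upper tail, I would verify the elementary inequality $(1+\delta)\ln(1+\delta) - \delta \geq \delta^2/(2+\delta)$ for all $\delta \geq 0$; this follows by noting that both sides and their first derivatives vanish at $\delta = 0$, and comparing second derivatives. For the lower tail, I would verify $-\delta - (1-\delta)\ln(1-\delta) \leq -\delta^2/3$ on $(0,1)$ by expanding $(1-\delta)\ln(1-\delta)$ as $-\delta + \delta^2/2 + \delta^3/6 + \cdots$ and observing that the leading quadratic contribution $\delta^2/2$ already exceeds $\delta^2/3$, with all higher-order terms strengthening the inequality. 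I expect no serious obstacle: the MGF step is completely mechanical and the closing inequalities are short calculus exercises. The one common pitfall is handling the sign of $t$ correctly when switching to the lower tail, but this is standard.
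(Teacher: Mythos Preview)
Your approach is correct and is precisely the standard Cram\'er--Chernoff argument. Note, however, that the paper does not prove this lemma at all: it is stated as a cited preliminary (the citation \cite{Chernoff}) and used as a black box throughout. So there is no ``paper's own proof'' to compare against; you have simply supplied the textbook derivation that the authors chose to omit. Your plan is sound and the calculus inequalities you isolate are exactly the ones needed (in fact your lower-tail argument yields the slightly stronger $e^{-\delta^2\mu/2}$, which of course implies the stated $e^{-\delta^2\mu/3}$).
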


We will use the same insights as \cite{christ2024pseudorandom} to reduce the case of $p$-bounded adversarial channels to the case of the hypergeometric channel. For this we need the following lemma regarding the hypergeometric distribution. Let $\text{Hyp}(N, K, n)$ denote the distribution of the number of good elements chosen when choosing $n$ elements without replacement from a population of size $N$ which contains $K$ good elements.
\begin{lemma}[\cite{Hoeffding1994}]
    \label{lem: hypergeometric_chernoff}
    Let $X \sim \text{Hyp}(N, K, n)$ and $p = K/N$. Then for any $0<t<K/N$,
    $$\underset{}{\operatorname{Pr}}[X \geq (p+\varepsilon)n] \leq e^{-2 \varepsilon^2 n} \ . $$
\end{lemma}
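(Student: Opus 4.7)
The plan is to prove this bound via the standard moment generating function (Chernoff) technique, reducing the hypergeometric concentration to the binomial case. Write $X = X_1 + \cdots + X_n$, where $X_i$ is the indicator that the $i$-th draw (without replacement) returns a good element. By symmetry each $X_i$ is marginally distributed as $\mathrm{Ber}(p)$, though the $X_i$ are of course not independent. For any $s > 0$, Markov's inequality gives
\[ \Pr[X \geq (p+\varepsilon)n] \;\leq\; e^{-s(p+\varepsilon)n} \, \mathbb{E}[e^{sX}] \ . \]

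The crux of the argument is Hoeffding's observation that sampling without replacement is more tightly concentrated than sampling with replacement, in the precise sense that
\[ \mathbb{E}[e^{sX}] \;\leq\; \mathbb{E}[e^{sY}], \qquad Y \sim \mathrm{Bin}(n, p) \ . \]
I would establish this by viewing $X$ as a symmetric convex function evaluated on a size-$n$ subset drawn uniformly at random from the population of $N$ items, and then showing that the expectation of any convex symmetric function of a uniform without-replacement sample is dominated by the same expectation under i.i.d.\ (with-replacement) sampling. The cleanest route is an exchangeability/coupling argument: one writes the with-replacement sample as a random permutation applied to a sample that, conditioned on its multiset, agrees with the without-replacement sample whenever no collisions occur, and uses convexity on the collision event. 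This is the technical heart of the lemma.

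Once the MGF comparison is in hand, the remainder is routine Chernoff--Hoeffding bookkeeping. Apply Hoeffding's lemma to each summand $Y_i - p \in [-p, 1-p] \subseteq [-1,1]$ of mean zero, which gives $\mathbb{E}[e^{s(Y_i - p)}] \leq e^{s^2/8}$; independence of the $Y_i$ then yields $\mathbb{E}[e^{s(Y - pn)}] \leq e^{ns^2/8}$. Substituting back produces $\Pr[X \geq (p+\varepsilon)n] \leq e^{-s\varepsilon n + ns^2/8}$, and optimizing with $s = 4\varepsilon$ gives the claimed bound $e^{-2\varepsilon^2 n}$. The main obstacle is the MGF comparison step; the rest is a standard calculation that one could also bypass by citing Hoeffding's theorem directly, as the paper does.
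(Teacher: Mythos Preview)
The paper does not give its own proof of this lemma; it simply cites Hoeffding and uses the result as a black box. Your proposal is the standard argument from Hoeffding's original paper---the MGF comparison between sampling without and with replacement (Hoeffding's Theorem~4) followed by the usual Chernoff--Hoeffding optimization---and it is correct, so there is nothing to contrast.
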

\begin{restatable}{lemma}{xorBiasHyp}
    \label{lem: xor_bias_no_replacement}
    If $0 \leq t \leq m \leq n$, $X \sim \text{Hyp}(n, m, t)$, then
    \[ \frac{1}{2} + \frac{1}{2} \min_{\frac{m-t}{n} \leq p_i \leq \frac{m}{n-t}} \prod_{i=1}^t (1-2p_i)  \leq {\operatorname{Pr}}[\text{$X$ is even}] \leq \frac{1}{2} + \frac{1}{2} \max_{\frac{m-t}{n} \leq p_i \leq \frac{m}{n-t}} \prod_{i=1}^t (1-2p_i) \ .\]
\end{restatable}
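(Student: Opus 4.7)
The plan is to sample the hypergeometric as a sequence $X_1, \ldots, X_t \in \{0,1\}$ of $t$ draws without replacement, so that $X = X_1 + \cdots + X_t$, and use $\Pr[X \text{ is even}] = \tfrac{1}{2} + \tfrac{1}{2} E[(-1)^X]$ to translate the claim into a two-sided bound on $E[(-1)^X]$. The conditional probability $p_i := \Pr[X_i = 1 \mid X_1, \ldots, X_{i-1}]$ equals $(m-k)/(n-i+1)$ where $k = \sum_{j<i} X_j$ is the number of good elements drawn so far. A direct case analysis over $(k,i) \in \{0,\ldots,i-1\} \times \{1,\ldots,t\}$ shows that every such $p_i$ lies in $I := [(m-t)/n,\; m/(n-t)]$. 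Moreover, after conditioning on $X_1 = b$ the remaining process is itself a hypergeometric with parameters $(n-1, m-b, t-1)$, whose own admissible probability interval is contained in $I$.

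Define $V_i(\vec x) := E[(-1)^{X_i + \cdots + X_t} \mid X_{<i} = \vec x]$, so that $V_{t+1} \equiv 1$ and $V_1(\emptyset) = E[(-1)^X]$. The recursion
\[
V_i(\vec x) \;=\; (1 - p_i)\,V_{i+1}(\vec x, 0)\;-\;p_i\,V_{i+1}(\vec x, 1), \qquad p_i \in I,
\]
together with the algebraic decomposition
\[
(1 - p) V_0 - p V_1 \;=\; \tfrac{V_0 + V_1}{2}(1 - 2p)\;+\;\tfrac{V_0 - V_1}{2},
\]
splits the step into a ``symmetric'' contribution (picking up an extra factor $1 - 2 p_i$ taking values in $1 - 2I$) and an ``antisymmetric'' contribution. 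I would then prove by backward induction on $s = t - i + 1$ that for every history $\vec x$, $V_i(\vec x)$ lies in the interval $[\min_{\vec q \in I^s} \prod_{j=1}^s (1-2q_j),\; \max_{\vec q \in I^s} \prod_{j=1}^s (1-2q_j)]$. The inductive step applies the hypothesis to the two subtrees and uses the decomposition above, with a sign analysis split into whether $I \subseteq [0,1/2]$, $I \subseteq [1/2, 1]$, or $I$ contains $1/2$.

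The main obstacle is bounding the antisymmetric correction $\tfrac{V_0 - V_1}{2}$. A naive inductive bound that treats $V_0, V_1$ as arbitrary elements of the inductive interval is too weak; indeed, one can construct simple two-step history-dependent Bernoulli processes (not hypergeometric) with conditional probabilities in $I$ whose parity bias exceeds the claimed product interval, so the induction genuinely requires hypergeometric-specific structure. The key point is that the two conditional subproblems after $X_i = 0$ and $X_i = 1$ differ only by the removal of a single good element from the pool, which forces $V_0 - V_1$ to carry an extra factor from the same interval $1 - 2I$ as the product bound, matching its scale. Turning this ``controlled-perturbation'' observation into a rigorous bound uniformly across the three sign regimes of $I$ is the technical core of the argument.
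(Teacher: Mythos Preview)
Your caution is warranted, and in fact the difficulty you isolate is fatal: the lemma as stated is false, so neither the paper's argument nor your proposed completion can succeed. The paper's proof defines $p_{i+1}=\Pr[X_{i+1}=1\mid X_1\oplus\cdots\oplus X_i=0]$ and then writes $a_{i+1}=(1-p_{i+1})a_i+p_{i+1}(1-a_i)$, tacitly using the same $p_{i+1}$ on the branch $X_1\oplus\cdots\oplus X_i=1$; for sampling without replacement this is wrong (the conditional probability depends on the \emph{count} of good draws, not merely its parity), and this is exactly the ``antisymmetric correction'' you flagged. For a concrete counterexample to the lemma itself, take $n=10$, $m=3$, $t=3$: then $\Pr[X\text{ even}]=\bigl(\binom{7}{3}+\binom{3}{2}\binom{7}{1}\bigr)/\binom{10}{3}=7/15$, while $[(m-t)/n,\,m/(n-t)]=[0,\,3/7]$ forces every $1-2p_i\in[1/7,1]$, giving a claimed lower bound of $\tfrac12+\tfrac12(1/7)^3>\tfrac12>7/15$. (Even simpler: $n=10,\,m=5,\,t=2$ yields $\Pr[\text{even}]=4/9<9/20$.) So the hypergeometric is not special enough to rescue the bound; the ``controlled perturbation'' you hoped to exploit does not, in general, keep $E[(-1)^X]$ inside the product interval.

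What the paper actually uses downstream is much weaker: either the Corollary's one-sided estimate $\bigl|\,2\Pr[\text{even}]-1\,\bigr|\le \max_{p\in I}|1-2p|^{t}$, or the lower bound only in the asymptotic regime $t=O(\log n)\ll m$ with $m/n$ a constant below $1/2$, where $\mathrm{Hyp}(n,m,t)$ is $O(t^2/n)$-close in total variation to $t$ i.i.d.\ $\mathrm{Ber}(m/n)$ draws and the i.i.d.\ product formula applies verbatim. Either of those admits a direct argument (for instance via the exact identity $E[(-1)^X]=\binom{n}{t}^{-1}\,[x^t]\,(1-x)^m(1+x)^{n-m}$, or by coupling to the binomial), but that is a different and strictly weaker statement than the one you were asked to prove.
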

\begin{restatable}{corollary}{xorBiasHypCor}
    \label{cor: xor_bias_no_replacement}
    If $0 \leq t \leq m \leq n$, $X \sim \text{Hyp}(n, m, t)$ and $p$ is a value maximizing $|1-2p|$ subject to $(m-t)/n \leq p \leq m/(n-t)$, then
    \[{\operatorname{Pr}}[\text{$X$ is even}] \leq \frac{1}{2} + \frac{1}{2} |1-2p|^t , \quad  \text{and}
    \quad    {\operatorname{Pr}}[\text{$X$ is odd}] \leq \frac{1}{2} + \frac{1}{2} |1-2p|^t \ .\]
\end{restatable}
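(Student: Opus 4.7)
The plan is to derive the corollary directly from \cref{lem: xor_bias_no_replacement} by observing that the choice of $p$ in the statement controls the magnitude of each factor $(1-2p_i)$ in the product that appears in the lemma's bounds.

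The first step is to set $a = (m-t)/n$ and $b = m/(n-t)$ and let $M = \max(|1-2a|, |1-2b|) = |1-2p|$, where $p$ is the prescribed maximizer. Any $p_i \in [a,b]$ satisfies $|1-2p_i| \le M$ since $q \mapsto |1-2q|$ is convex on $[0,1]$ and thus attains its maximum on $[a,b]$ at an endpoint. In particular, for any choice of $p_1, \dots, p_t$ in the feasible range,
\[
\left| \prod_{i=1}^t (1-2p_i) \right| \;\le\; \prod_{i=1}^t |1-2p_i| \;\le\; M^t \;=\; |1-2p|^t,
\]
so both
\[
\max_{\frac{m-t}{n} \le p_i \le \frac{m}{n-t}} \prod_{i=1}^t (1-2p_i) \;\le\; |1-2p|^t
\qquad\text{and}\qquad
\min_{\frac{m-t}{n} \le p_i \le \frac{m}{n-t}} \prod_{i=1}^t (1-2p_i) \;\ge\; -|1-2p|^t.
\]

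Plugging the upper bound into the right-hand inequality of \cref{lem: xor_bias_no_replacement} immediately yields $\Pr[X \text{ is even}] \le \tfrac{1}{2} + \tfrac{1}{2}|1-2p|^t$, which is the first claim. For the second claim, I would use $\Pr[X \text{ is odd}] = 1 - \Pr[X \text{ is even}]$ together with the left-hand inequality of \cref{lem: xor_bias_no_replacement}, giving
\[
\Pr[X \text{ is odd}] \;\le\; \tfrac{1}{2} - \tfrac{1}{2} \min_{p_i} \prod_{i=1}^t (1-2p_i) \;\le\; \tfrac{1}{2} + \tfrac{1}{2}|1-2p|^t,
\]
where the last step uses the lower bound on the minimum derived above.

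There is no real technical obstacle here, as the whole argument is a one-line magnitude bound once \cref{lem: xor_bias_no_replacement} is in hand; the only point requiring mild care is handling the sign, since for odd $t$ the product $\prod (1-2p_i)$ can be negative, so one must invoke the lower bound on the minimum (rather than the upper bound on the maximum) to control $\Pr[X \text{ is odd}]$. The factor $|1-2p|$ ends up dominating both cases uniformly because it is defined as the larger of $|1-2a|$ and $|1-2b|$.
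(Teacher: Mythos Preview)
Your proposal is correct and follows essentially the same approach as the paper: both invoke \cref{lem: xor_bias_no_replacement} and bound the product $\prod_i (1-2p_i)$ in absolute value by $|1-2p|^t$, then read off the two inequalities from the lemma's upper and lower bounds. Your write-up is slightly cleaner in that you explicitly justify why $|1-2p_i|$ is maximized at an endpoint (via convexity) and handle both signs at once through the single bound $|\prod_i (1-2p_i)| \le |1-2p|^t$.
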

\fullversion{
\noindent See \cref{sec: appendixB} for proofs of \cref{lem: xor_bias_no_replacement} and \cref{cor: xor_bias_no_replacement}.
}

\begin{lemma}
    \label{lem: birthday_bound}
    Let $X_1, \dots, X_Q$ be uniformly distributed over $[N]$.
    $$\underset{X_1, \dots, X_Q}{\operatorname{Pr}}[\exists i \neq j, X_i = X_j] \leq \frac{Q^2}{N} \ . $$
\end{lemma}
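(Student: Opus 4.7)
The plan is to apply a simple union bound over all pairs. First I would enumerate the ``bad'' event: there exist indices $i \neq j$ in $[Q]$ with $X_i = X_j$. By the union bound,
\[
\Pr[\exists i \neq j,\ X_i = X_j] \leq \sum_{1 \leq i < j \leq Q} \Pr[X_i = X_j].
\]
Next I would observe that since each $X_i$ is uniform on $[N]$ and (implicitly) the $X_i$ are independent, for any fixed pair $i \neq j$ we have $\Pr[X_i = X_j] = 1/N$ (condition on $X_i = a$ and note $\Pr[X_j = a] = 1/N$, then average over $a$).

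Plugging this in gives
\[
\Pr[\exists i \neq j,\ X_i = X_j] \leq \binom{Q}{2} \cdot \frac{1}{N} = \frac{Q(Q-1)}{2N} \leq \frac{Q^2}{N},
\]
which is the desired bound. There is really no obstacle here; the only subtlety worth flagging is that the statement in the lemma leaves independence implicit (it only says ``uniformly distributed''), but the intended reading is that the $X_i$ are i.i.d.\ uniform, which is what makes the elementary pairwise calculation $\Pr[X_i = X_j] = 1/N$ go through. Under that reading the proof is just the two displayed lines above.
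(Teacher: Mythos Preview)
Your proof is correct; the union bound over $\binom{Q}{2}$ pairs with $\Pr[X_i=X_j]=1/N$ is exactly the standard argument, and your remark about independence being implicit is apt. The paper itself states this lemma without proof (it is a well-known fact), so there is nothing further to compare.
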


\begin{definition}
    The statistical distance (also known as the total variation distance) of two distribution $X$ and $Y$ on a finite domain $D$ is defined as
    $$\Delta(X, Y) = \frac{1}{2} \sum_{z \in D} \left| {\operatorname{Pr}}[X=z] - {\operatorname{Pr}}[Y=z] \right|$$
\end{definition}
We say two distributions $X$ and $Y$ are statistically indistinguishable if $\Delta(X, Y) = \negl(n)$.
\begin{fact}
    \label{fact: rejection_sample_SD}
    Let $A$ be a set and $B \subseteq A$. If $X$ is uniformly distributed over $A$, and $Y$ is uniformly distributed over $B$, then $\Delta(X, Y) = 1-|B|/|A|$.
\end{fact}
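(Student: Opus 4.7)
The plan is to prove this by directly expanding the definition of statistical distance and splitting the sum over the sample space into the pieces where the two densities agree in support (namely $B$) and the piece where only $X$ has mass (namely $A \setminus B$). Since both distributions are uniform on their respective supports, the pointwise probabilities are constant on each piece, so the sum reduces to two terms times the corresponding cardinalities.

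Concretely, I would first record that $\Pr[X=z] = 1/|A|$ for $z \in A$ and $\Pr[Y=z] = 1/|B|$ for $z \in B$, and both are $0$ outside. Then I would write
\[
\Delta(X,Y) = \frac{1}{2}\sum_{z \in B}\left|\frac{1}{|A|} - \frac{1}{|B|}\right| + \frac{1}{2}\sum_{z \in A \setminus B}\left|\frac{1}{|A|} - 0\right|.
\]
Since $B \subseteq A$ gives $1/|B| \ge 1/|A|$, the absolute values can be removed, and the two sums evaluate to $\tfrac{1}{2}\bigl(|B|/|A|\bigr)^{-1}\cdot |B| \cdot (1 - |B|/|A|)$-type expressions; a line of algebra collapses them to $\tfrac{1}{2}(1 - |B|/|A|) + \tfrac{1}{2}(1 - |B|/|A|) = 1 - |B|/|A|$, which is the claimed identity.

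There is no real obstacle here; the only thing to be slightly careful about is handling the edge case $B = \emptyset$ (where the first sum vanishes and the formula still gives the correct value $1$) and making sure we only sum $z$ over $A$ (since outside $A$ both densities are zero and contribute nothing to $\Delta$). The proof is a two- or three-line calculation and can be presented directly without any auxiliary lemmas.
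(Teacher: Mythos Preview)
The paper states this as a \emph{Fact} without proof, so there is no argument to compare against. Your approach is correct: split the sum defining $\Delta(X,Y)$ into the contributions from $B$ and from $A\setminus B$, evaluate each using the explicit uniform densities, and simplify. One minor remark: the edge case $B=\emptyset$ is not really meaningful here, since there is no uniform distribution over the empty set (so $Y$ would be undefined); you can safely assume $B\neq\emptyset$ rather than treat it separately.
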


\subsection{Indistinguishability and LPN}
For a class of functions $\varepsilon$, we say two distribution ensembles $\{ D_n \}_{n \in \mathbb{N}}, \{ E_n \}_{n \in \mathbb{N}}$ are $\varepsilon$-indistinguishable if for any probabilistic polynomial time, non-uniform adversary $\mathcal{A}$, there exists a function $\varepsilon' \in \varepsilon$ such that
$$\left| \underset{
    x \leftarrow D_n}{\operatorname{Pr}}[\mathcal{A}(x) = 1] - 
    \underset{
    x \leftarrow E_n}{\operatorname{Pr}}[\mathcal{A}(x) = 1] \right| \leq \varepsilon'(n)$$
We say that $\{ D_n \}_{n \in \mathbb{N}}$ and $\{ E_n \}_{n \in \mathbb{N}}$ are computationally indistinguishable if they are $\negl(n)$-indistinguishable. 

The learning parity with noise (LPN) assumption is going to be critical in \cref{sec: weakxor}. For a linear code specified by a generator matrix $G$, an LPN sample is generated by sampling a random codeword $Gs$, and then adding some Bernoulli distributed noise $e$ to it. The LPN assumption says that an LPN sample is indistinguishable from random. Intuitively, the assumption says that noisy codewords from a random linear code are indistinguishable from random.
\begin{assumption}
    \label{assump: LPN}
    For $\eta: \mathbb{N} \rightarrow \mathbb{R}$ which is a function of $n$, the $\mathsf{LPN}[\eta]$ assumption states that for all $m = \poly(n)$ and all probabilistic $\poly(n)$ time algorithm $\mathcal{A}$,
    $$\Biggl| \underset{
    \begin{array}{c}G \leftarrow \F_2^{n \times m},\\ s \leftarrow \F_2^{m},\\ e \leftarrow \text{Ber}(n, \eta)\end{array}
    }{\operatorname{Pr}}[\mathcal{A}(G, Gs + e) = 1] - 
    \underset{
    \begin{array}{c}G \leftarrow \F_2^{n \times m},\\ u \leftarrow \F_2^{n}\end{array}
    }{\operatorname{Pr}}[\mathcal{A}(G, u) = 1] \Biggr| = \negl(n)$$
\end{assumption}
While \cref{assump: LPN} is stated for a single LPN sample, for a randomly sampled $G$, a polynomial number of LPN samples would still be computationally indistinguishable from random. This follows from a standard hybrid argument.

In our construction, we will actually rely on the following assumption where the secret $s$ is sampled from the same distribution as the noise. Lemma 2 of \cite{Fast_crypto_from_LPN} shows \cref{assump: LPN_sparse_secret} implied by \cref{assump: LPN}.
\begin{assumption}
    \label{assump: LPN_sparse_secret}
    For $\eta: \mathbb{N} \rightarrow \mathbb{R}$ which is a function of $n$, the $\mathsf{LPN}[\eta]$ assumption states that for all $m = \poly(n)$ and all probabilistic $\poly(n)$ time algorithm $\mathcal{A}$,
    $$\Biggl| \underset{
    \begin{array}{c}G \leftarrow \F_2^{n \times m},\\ s \leftarrow \text{Ber}(m, \eta),\\ e \leftarrow \text{Ber}(n, \eta)\end{array}
    }{\operatorname{Pr}}[\mathcal{A}(G, Gs + e) = 1] - 
    \underset{
    \begin{array}{c}G \leftarrow \F_2^{n \times m},\\ u \leftarrow \F_2^{n}\end{array}
    }{\operatorname{Pr}}[\mathcal{A}(G, u) = 1] \Biggr| = \negl(n)$$
\end{assumption}

\subsection{Pseudorandom Codes}
\begin{definition}[\cite{christ2024pseudorandom}]
    \label{def: p-bounded}
    We say that a length-preserving binary channel $\mathcal{E}: \{ 0, 1\}^* \rightarrow \{ 0, 1\}^*$ is $p$-bounded if for all $n \in \mathbb{N}$, $\underset{
    x \leftarrow \{ 0, 1\}^n}{\operatorname{Pr}}[|\mathcal{E}(x) \oplus x| > pn] \leq \negl(n)$.
\end{definition}

\fullversion{
\begin{definition}
    The $d$-hypergeometric channel $\mathcal{E}: \{ 0, 1\}^n \rightarrow \{ 0, 1\}^n$ is defined as the channel $\mathcal{E}$ which takes in $x$, samples $y \leftarrow \mathcal{S}_{d, n}$, and outputs $\mathcal{E}(x) = x \oplus y$.
\end{definition}
}

We now define secret and public key pseudorandom codes. 

\begin{definition}[Secret-key PRC \cite{christ2024pseudorandom}]
    \label{def: secret-keyPRC}
    Let $\Sigma$ be a fixed alphabet. An ($\alpha, \beta, \gamma$)-\emph{secret-key pseudorandom error-correcting code} (abbreviated as secret-key PRC) with robustness to a channel $\mathcal{E}: \Sigma^* \rightarrow \Sigma^*$ and pseudorandomness against a class of adversaries $\mathcal{C}$ is a triple of polynomial time randomized algorithms $(\mathsf{KeyGen}, \mathsf{Encode}, \mathsf{Decode})$ satisfying
    \begin{itemize}
        \item (Syntax) There exists functions $\ell, n, k: \mathbb{N} \rightarrow \mathbb{N}$ such that for all $\lambda \in \mathbb{N}$, $\mathsf{KeyGen}(1^\lambda) \in \{ 0,1\}^{\ell(\lambda)}$, $\mathsf{Encode}: \{ 1^\lambda \} \times \{ 0,1 \}^{\ell(\lambda)} \times \Sigma^{k(\lambda)} \rightarrow \Sigma^{n(\lambda)}$, and $\mathsf{Decode}: \{ 1^\lambda \} \times \{ 0,1 \}^{\ell(\lambda)} \times \Sigma^* \rightarrow \Sigma^{k(\lambda)} \cup \{ \bot \}$.

        \item (Error correction, or robustness) For any $\lambda \in \mathbb{N}$ and any message $\mathsf{m} \in \Sigma^{k(\lambda)}$,
        $$\underset{\mathsf{sk} \leftarrow \mathsf{KeyGen}(1^\lambda)}{\operatorname{Pr}}[\mathsf{Decode}(1^\lambda, \mathsf{sk}, \mathcal{E}(x)) = \mathsf{m} : x \leftarrow \mathsf{Encode}(1^\lambda, \mathsf{sk}, \mathsf{m})] \geq \alpha$$

        \item (Soundness) For any fixed $c \in \Sigma^*$,
        $$\underset{\mathsf{sk} \leftarrow \mathsf{KeyGen}(1^\lambda)}{\operatorname{Pr}}[\mathsf{Decode}(1^\lambda, \mathsf{sk}, c) = \bot] \geq \beta$$

        \item  (Pseudorandomness) For any adversary $\mathcal{A} \in \mathcal{C}$,
        $$\left| \underset{\mathsf{sk} \leftarrow \mathsf{KeyGen}(1^\lambda)}{\operatorname{Pr}}[\mathcal{A}^{\mathsf{Encode}(1^\lambda, \mathsf{sk}, \cdot)}(1^\lambda) = 1] - 
        \underset{\mathcal{U}}{\operatorname{Pr}}[\mathcal{A}^{\mathcal{U}}(1^\lambda) = 1] \right| \leq \gamma$$
        where $\mathcal{A}^{\mathcal{U}}$ means that the adversary has access to an oracle that, on any (even previously queried) input, outputs a freshly drawn uniform value from $\Sigma^{n(\lambda)}$. 
    \end{itemize}
\end{definition}

\begin{definition}[Public-key PRC \cite{christ2024pseudorandom}]
    \label{def: public-keyPRC}
    Let $\Sigma$ be a fixed alphabet. An $(\alpha, \beta, \gamma)$-\emph{public-key pseudorandom error-correcting code} (abbreviated as public-key PRC) with robustness to a channel $\mathcal{E}: \Sigma^* \rightarrow \Sigma^*$ and pseudorandomness against a class of adversaries $\mathcal{C}$ is a triple of polynomial time randomized algorithms $(\mathsf{KeyGen}, \mathsf{Encode}, \mathsf{Decode})$ satisfying
    \begin{itemize}
        \item (Syntax) There exists functions $\ell_{\mathsf{Dec}}, \ell_{\mathsf{Enc}}, n, k: \mathbb{N} \rightarrow \mathbb{N}$ such that for all $\lambda \in \mathbb{N}$, $\mathsf{KeyGen}(1^\lambda) \in \{ 0,1\}^{\ell_{\mathsf{Dec}}(\lambda)} \times \{ 0,1\}^{\ell_{\mathsf{Enc}}(\lambda)}$, $\mathsf{Encode}: \{ 1^\lambda \} \times \{ 0,1 \}^{\ell_{\mathsf{Enc}}(\lambda)} \times \Sigma^{k(\lambda)} \rightarrow \Sigma^{n(\lambda)}$, and $\mathsf{Decode}: \{ 1^\lambda \} \times \{ 0,1 \}^{\ell_{\mathsf{Dec}}(\lambda)} \times \Sigma^* \rightarrow \Sigma^{k(\lambda)} \cup \{ \bot \}$.

        \item (Error correction, or robustness) For any $\lambda \in \mathbb{N}$ and any message $\mathsf{m} \in \Sigma^{k(\lambda)}$,
        $$\underset{\mathsf{(sk, pk)} \leftarrow \mathsf{KeyGen}(1^\lambda)}{\operatorname{Pr}}[\mathsf{Decode}(1^\lambda, \mathsf{sk}, \mathcal{E}(x)) = \mathsf{m} : x \leftarrow \mathsf{Encode}(1^\lambda, \mathsf{pk}, \mathsf{m})] \geq \alpha$$

        \item (Soundness) For any fixed $c \in \Sigma^*$,
        $$\underset{\mathsf{(sk, pk)} \leftarrow \mathsf{KeyGen}(1^\lambda)}{\operatorname{Pr}}[\mathsf{Decode}(1^\lambda, \mathsf{sk}, c) = \bot] \geq \beta$$

        \item  (Pseudorandomness) For any adversary $\mathcal{A} \in \mathcal{C}$,
        $$\left| \underset{\mathsf{(sk, pk)} \leftarrow \mathsf{KeyGen}(1^\lambda)}{\operatorname{Pr}}[\mathcal{A}^{\mathsf{Encode}(1^\lambda, \mathsf{pk}, \cdot)}(1^\lambda, \mathsf{pk}) = 1] - 
        \underset{\mathcal{U}}{\operatorname{Pr}}[\mathcal{A}^{\mathcal{U}}(1^\lambda, \mathsf{pk}) = 1] \right| \leq \gamma$$
        where $\mathcal{A}^{\mathcal{U}}$ means that the adversary has access to an oracle that, on any (even previously queried) input, outputs a freshly drawn uniform value from $\Sigma^{n(\lambda)}$. 
    \end{itemize}
\end{definition}

\fullversion{
We will say that a $(\alpha, \beta, \gamma)$-public-key or $(\alpha, \beta, \gamma)$-secret-key PRC scheme $(\mathsf{KeyGen}, \mathsf{Encode}, \mathsf{Decode})$ is robust to a channel $\mathcal{E}$ and pseudorandom/secure against $\C$ if the definition \cref{def: public-keyPRC} or \cref{def: secret-keyPRC} respectively holds given $\mathcal{E}$ is instantiated as the channel and $\mathcal{C}$ is instantiated as the class of adversaries. We adopt this notation of $\mathcal{E}$ and $\mathcal{C}$ as implicit parameters in \cref{def: public-keyPRC} and \cref{def: secret-keyPRC} so as not to clutter the parameters but one can just as easily parameterize the definition by all relevant variables (e.g. $(\alpha, \beta, \gamma, \mathcal{E}, \C)$-public-key PRC).

We have expanded the definitions of secret-key PRC and public-key PRC from \cite{christ2024pseudorandom} by including the parameters $\alpha, \beta, \gamma$ in the definition and including the new implicit parameter $\C$ (which is necessary to formalize security against space-bounded adversaries \cref{sec: time-space}). If we take $\C$ to be all non-uniform, probabilistic, polynomial time (PPT) algorithms, and $\alpha = 1-\negl(\lambda), \beta = 1-\negl(\lambda), \gamma=\negl(\lambda)$, we recover the original definitions given in \cite{christ2024pseudorandom}. When $\C$ and $\mathcal{E}$ are clear from context, we will often say PRC to mean a $(1-\negl(\lambda), 1-\negl(\lambda), \negl(\lambda))$-public-key PRC or $(1-\negl(\lambda), 1-\negl(\lambda), \negl(\lambda))$-secret-key PRC.
}

\begin{definition}
    \label{def: rate}
    For \cref{def: public-keyPRC} and \cref{def: secret-keyPRC}, we define $k(\lambda)/n(\lambda)$ as the rate of a PRC.
\end{definition}

\begin{definition}
    \label{def: zero-bit}
    We say a PRC scheme is a zero-bit PRC scheme if the only message $\mathsf{m}$ that is ever encrypted is 1.
\end{definition}
The image of the decoding function of a zero-bit scheme should only be $\{ 1, \bot\}$ since we know that $0$ is never encoded by the PRC. Informally, a zero-bit PRC requires only that we distinguish corrupted PRC outputs from strings which are not PRC outputs. We will focus on zero-bit PRCs since when $\C$ is all PPT algorithms, \cite{christ2024pseudorandom} shows that the existence of a zero-bit secret-key or public-key PRC implies the existence of a secret-key or public-key PRC respectively which has essentially the same robustness as the original but a worse rate. See \cite{christ2024pseudorandom} for a formal statement.

Say we have a zero-bit encryption scheme where corrupted codewords are identified as such with probability $\alpha(\lambda)$, random words are identified as codewords with probability $\alpha(\lambda)-1/\poly(n)$, and any polynomial number of codewords are $\gamma$-indistinguishable from random. This is not quite a PRC since we do not have the soundness property. However, our next lemma shows that we can use such a scheme to construct a $(1-\negl(\lambda), 1-\negl(\lambda), \gamma)$ zero-bit PRC.
\begin{lemma}
    \label{lem: amplify_public}
    Suppose that there exist PPT algorithms $(\mathsf{KeyGen}, \mathsf{Encode}, \mathsf{Decode})$ such that
    \begin{enumerate}
        \item There exists functions $\ell_{\mathsf{Dec}}, \ell_{\mathsf{Enc}}, n, k: \mathbb{N} \rightarrow \mathbb{N}$ such that for all $\lambda \in \mathbb{N}$, $\mathsf{KeyGen}(1^\lambda) \in \{ 0,1\}^{\ell_{\mathsf{Dec}}(\lambda)} \times \{ 0,1\}^{\ell_{\mathsf{Enc}}(\lambda)}$, $\mathsf{Encode}: \{ 1^\lambda \} \times \{ 0,1 \}^{\ell_{\mathsf{Enc}}(\lambda)} \times \{ 1 \} \rightarrow \Sigma^{n(\lambda)}$, and $\mathsf{Decode}: \{ 1^\lambda \} \times \{ 0,1 \}^{\ell_{\mathsf{Dec}}(\lambda)} \times \Sigma^* \rightarrow \{ 1, \bot \}$.

        \item 
        $n(\lambda) = \poly(\lambda)$.
        
        \item
        \label{item: strong_robustness}
        For every $d \leq p \cdot n(\lambda)$, $d$-hypergeometric channel $\mathcal{E}$, and a $1-\negl(\lambda)$ fraction of keys $(\mathsf{sk}, \mathsf{pk}) \leftarrow \mathsf{KeyGen}(1^\lambda)$,
        $$\underset{\mathcal{E}}{\operatorname{Pr}}[\mathsf{Decode}(1^\lambda, \mathsf{sk}, \mathcal{E}(x)) = 1: x \leftarrow \mathsf{Encode}(1^\lambda, \mathsf{pk}, 1)] \geq \alpha(\lambda)$$
        where the randomness is over the randomness of the encoding algorithm and the errors of $\mathcal{E}$.
        
        \item
        \label{item: weak_soundness}
        There exists a $\delta(n) = 1/\poly(n)$ where $\alpha(\lambda)-\delta(\lambda) \geq 1/\poly(\lambda)$ such that for a $1-\negl(\lambda)$ fraction of keys $(\mathsf{pk}, \mathsf{sk}) \leftarrow \mathsf{KeyGen}(1^\lambda)$,
        $$\underset{x \leftarrow \{ 0, 1\}^n}{\operatorname{Pr}}[\mathsf{Decode}(1^\lambda, \mathsf{sk}, x) = 1] \leq \delta(\lambda) \ . $$
        \item
    
        For any $q = \poly(\lambda)$, $X_1, \dots, X_q \leftarrow \mathsf{Enc}(1^\lambda, \mathsf{pk}, 1)$ is $\gamma$-indistinguishable from $Y_1, \dots, Y_q \leftarrow \{ 0, 1\}^{n(\lambda)}$.
    \end{enumerate}
    Then for every constant $\varepsilon > 0$, there exists of a zero-bit $(1-\negl(\lambda), 1-\negl(\lambda), \gamma(\lambda))$-public-key PRC robust to any $(p-\varepsilon)$-bounded channel and pseudorandom against any PPT adversary.
\end{lemma}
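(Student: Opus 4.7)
I plan to prove the lemma by parallel repetition with independent public random offsets, combined with a Chernoff-style threshold decoder. The amplified scheme is defined as follows: $\mathsf{KeyGen}'(1^\lambda)$ runs $\mathsf{KeyGen}$ to produce $(\mathsf{sk}, \mathsf{pk})$ and additionally samples $L$ independent uniform offsets $\rho_1, \dots, \rho_L \leftarrow \{0,1\}^{n(\lambda)}$, placing $(\rho_1,\dots,\rho_L)$ in both $\mathsf{sk}'$ and $\mathsf{pk}'$, where $L = \poly(\lambda)$ is chosen so that $L \cdot (\alpha - \delta)^2 = \omega(\log \lambda)$. The encoder samples $c_i \leftarrow \mathsf{Enc}(1^\lambda, \mathsf{pk}, 1)$ independently for $i \in [L]$ and outputs $(c_1 \oplus \rho_1) \| \cdots \| (c_L \oplus \rho_L)$. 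The decoder parses the received string into $L$ blocks $y_1, \dots, y_L$, sets $b_i = \mathsf{Dec}(1^\lambda, \mathsf{sk}, y_i \oplus \rho_i)$, and outputs $1$ iff at least $tL$ of the $b_i$ equal $1$, where $t := (\alpha+\delta)/2$; otherwise $\bot$.

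Pseudorandomness of $\mathsf{Enc}'$ follows from assumption~5 by a standard hybrid argument across the $Lq$ blocks consumed during $q = \poly(\lambda)$ oracle queries, noting that XOR with fixed strings $\rho_i$ preserves the distinguishing advantage. For soundness, I fix an arbitrary word $y$ and condition on the $1 - \negl$ event that $\mathsf{sk}$ satisfies the weak soundness bound. Because each $\rho_i$ is sampled independently of $\mathsf{sk}$ and of the other $\rho_j$'s, each $y_i \oplus \rho_i$ is uniformly distributed in $\{0,1\}^n$ and these strings are jointly independent conditional on $\mathsf{sk}$. Hence each $[b_i = 1]$ is an independent Bernoulli of bias at most $\delta$, and \cref{lem: chernoff} bounds the probability of at least $tL$ successes by $\exp(-\Omega(L (t-\delta)^2 / \delta)) = \negl(\lambda)$.

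For completeness, consider a codeword $c'$ put through an arbitrary $(p-\varepsilon)$-bounded channel $\mathcal{E}$. Pseudorandomness of $c'$ implies that the total error weight $|\mathcal{E}(c') \oplus c'|$ is at most $(p - \varepsilon/2) L n$ with probability $1 - \negl$ --- otherwise the channel's error count would distinguish $c'$ from uniform. I then aim to argue that the per-block error vectors look effectively like $d_i$-hypergeometric errors with $d_i \leq p n$, so that assumption~3 yields per-block decoding success probability at least $\alpha$. A Chernoff concentration across the $L$ blocks, using the gap $\alpha - t \geq 1/\poly(\lambda)$, then gives that at least $tL$ blocks decode to $1$ with probability $1 - \negl$.

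The main obstacle is this per-block channel reduction. An adversarial $(p-\varepsilon)$-bounded channel can in principle cluster errors within a few blocks and place them in adversarially chosen positions, whereas assumption~3 only certifies decoding for uniformly-placed hypergeometric errors. The resolution follows \cite{christ2024pseudorandom}: pseudorandomness of $c'$ decouples the adversary's error placement from the codeword structure, and combined with the block-wise independence of the $c_i$'s built into $\mathsf{Enc}'$, one can couple the per-block error distributions to hypergeometric ones while retaining enough independence for Chernoff. Making this coupling precise --- and checking that the residual joint dependencies arising from a global channel do not destroy the concentration needed to beat the threshold $tL$ --- is the technically involved step of the proof.
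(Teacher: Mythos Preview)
Your construction is missing the key ingredient that makes the robustness argument go through: a uniformly random permutation $\pi:[Ln]\to[Ln]$ applied on top of the offset blocks. The paper's amplified encoder outputs $\pi\bigl((c_1\oplus z_1)\|\cdots\|(c_L\oplus z_L)\bigr)$ with $\pi$ (and the $z_i$) sampled during key generation and placed in both keys; the decoder first applies $\pi^{-1}$, then undoes the offsets, then runs the block decoders and thresholds exactly as you do.

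Without $\pi$, the step you flag as ``the technically involved step'' does not work. Consider the deterministic channel that simply flips the first $(p-\varepsilon)Ln$ coordinates of whatever it is given. This channel is $(p-\varepsilon)$-bounded for every input, yet it places \emph{all} of its errors inside the first $(p-\varepsilon)L$ blocks, each of which then sees a fixed, non-hypergeometric error pattern of weight $n$. Assumption~3 says nothing about such patterns, and there is no reason $\mathsf{Decode}(\mathsf{sk},\overline{c_i})$ should output $1$ with probability near $\alpha$; in the paper's intended base schemes it does not. Your proposed resolution---that pseudorandomness of $c'$ ``decouples the adversary's error placement from the codeword structure''---cannot help here, because this channel is completely oblivious to its input. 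Pseudorandomness of $c'$ constrains nothing about where an oblivious channel chooses to place its errors.

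The permutation is what actually performs the reduction. With the offsets $z_i$, the transmitted word $c'=\pi(b)$ is \emph{statistically} uniform (so you do not need to invoke computational pseudorandomness to bound the total error weight, and you should not, since $\gamma$ need not be negligible). Moreover $\pi$ is independent of $c'$, and the channel's error vector $e'$ depends only on $c'$ and the channel's coins; hence $\pi$ is independent of $e'$. After applying $\pi^{-1}$, the effective error $\pi^{-1}(e')$ is, conditioned on its weight $m\le(p-\varepsilon)Ln$, a uniformly random weight-$m$ vector in $\{0,1\}^{Ln}$. Now the number of errors landing in each block is $\mathrm{Hyp}(Ln,m,n)$, which by \cref{lem: hypergeometric_chernoff} is at most $pn$ in every block with probability $1-\negl(\lambda)$, and within each block the error positions are uniformly random given their count---i.e., a $d_i$-hypergeometric channel with $d_i\le pn$. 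At that point assumption~3 applies per block and your Chernoff argument finishes the proof. Your soundness and pseudorandomness arguments are essentially the paper's.
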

\begin{proof}

\fullornot{
Let $\varepsilon > 0$ be an arbitrarily small constant. Say $\alpha(\lambda)-\delta(\lambda) \geq 1/\lambda^c$ for some constant $c$ and sufficiently large $n$, and let $t = \lambda^{100c}/\delta(\lambda)$. We now construct a $(1-\negl(\lambda), 1-\negl(\lambda), \gamma(\lambda))$-public-key PRC with key generation, encoding, and decoding functions $\mathsf{KeyGen'}, \mathsf{Encode'}, \mathsf{Decode'}$ respectively. 
    \begin{itemize}
        \item 
        $\mathsf{KeyGen'}(1^\lambda)$: Sample $(\mathsf{sk}, \mathsf{pk}) \leftarrow \mathsf{KeyGen}(1^\lambda)$, $z_1, \dots, z_t \leftarrow \{ 0, 1\}^{n(\lambda)}$, and a random permutation $\pi: [tn] \rightarrow [tn]$. Output $(\mathsf{sk'}= (\mathsf{sk}, z_1, \dots, z_t, \pi), \mathsf{pk'} = (\mathsf{pk}, z_1, \dots, z_t, \pi))$.
        \item 
        $\mathsf{Encode'}(1^\lambda, (\mathsf{pk}, z_1, \dots, z_t, \pi), 1)$: Let $a_i \leftarrow \mathsf{Encode}(1^\lambda, \mathsf{pk}, 1)$ for $i \in [1, t]$. Output $\pi((a_1 \oplus z_1) || \dots || (a_t \oplus z_t))$.
        \item 
        $\mathsf{Decode'}(1^\lambda, (\mathsf{sk}, z_1, \dots, z_t, \pi), x)$: Decompose $\pi^{-1}(x) \in \{ 0, 1\}^{nt}$ into $\Tilde{a}_1 || \dots || \Tilde{a}_t$ with $\Tilde{a}_i \in \{ 0, 1\}^{n(\lambda)}$ and let $a_i = \Tilde{a}_i \oplus z_i$. Let $w_i = 1$ if and only if $\mathsf{Decode}(1^\lambda, \mathsf{sk}, a_i) = 1$. Output $1$ if $\sum_{i=1}^t w_i \geq t \cdot \frac{\alpha(n)+\delta(n)}{2}$ and output $\bot$ otherwise.
    \end{itemize}
    Observe that all the algorithms needed to specify our new PRC scheme $\mathsf{PRC'} = (\mathsf{KeyGen'}, \mathsf{Encode'}, \mathsf{Decode'})$ inherit $\poly(\lambda)$ running times from $(\mathsf{KeyGen}, \mathsf{Encode}, \mathsf{Decode})$.
    
    Consider a codeword $x \in \{ 0, 1\}^{tn}$ generated by $\mathsf{Encode}'$ and let $x' = \mathcal{E}(x)$. Because of the shifts $z_1, \dots, z_t$ and permutation $\pi$ applied to the codeword, for the purpose of showing robustness, we can assume without loss of generality that that $\mathsf{Encode}'$ and $\mathsf{Decode}'$ do not apply the shift $z$ or the permutation $\pi$ and $\mathcal{E}$ samples $m$ values without replacement from some distribution with support on $[1, (p-\varepsilon) tn]$ and then distributes $m$ errors randomly into the $tn$ coordinates of the codeword $x$. One way to imagine sampling the errors $\mathcal{E}$ introduces is by sampling from some joint distribution $(P_1, \dots, P_t)$ where $P_i$ represents represents the number of errors to introduce into the $i$th block of $n$ bits, and then flipping $P_i$ bits of the $i$th block of $n$ bits randomly. Let us consider a fixed $m \in [1, (p-\varepsilon) tn]$, notice that the number of errors in the $i$th section of $n$ bits is exactly distributed as $\text{Hyp}(nt, m, n)$. By \cref{lem: hypergeometric_chernoff}, $P_i > pn$ with $\negl(n) = \negl(\lambda)$ probability. By union bound, there is a $1-\negl(\lambda)$ probability that $P_i \leq pn$ for all $i \in [1, t]$. Therefore, we can assume without loss of generality that the errors in each block of $n$ bits come from a $p$-hypergeometric channel.
    
    We now show robustness of this new scheme. Notice that during the computation of \\$\mathsf{Decode'}(1^\lambda, (\mathsf{sk}, z_1, \dots, z_t), x')$, each $w_i = 1$ independently with probability $\alpha(n)$ by \cref{item: strong_robustness} and since $P_i \leq pn$. Furthermore, $t \cdot ((\alpha(\lambda)+\delta(\lambda))/2) = t (\alpha(\lambda) - (\alpha(\lambda)-\delta(\lambda))/2) = t (\alpha(\lambda) - 1/(2\lambda^c))$. So, by \cref{lem: chernoff}, the probability that $\sum_{i=1}^t w_i < t \cdot \frac{\alpha(\lambda)+\delta(\lambda)}{2}$ is at most $e^{-(1/(2\lambda^c))^2 \alpha(\lambda) t/3} \leq e^{-(1/(2\lambda^c))^2 \delta(n) t/3} = \negl(\lambda)$ by our choice of $t$. Therefore, there is a $1-\negl(\lambda)$ chance that $\mathsf{Decode'}$ outputs $1$.

    We now show soundness of the new scheme. Consider a fixed word $x \in \{ 0, 1\}^{tn}$. If we sample $(\mathsf{sk}, z_1, \dots, z_t)$ and run $\mathsf{Decode'}(1^\lambda, (\mathsf{sk}, z_1, \dots, z_t), x) = \bot$, the probability $w_i = 0$ is exactly the probability that $\mathsf{Decode}(1^\lambda, s_i, \Tilde{a}_i \oplus z) = \bot$. This probability is $\delta(\lambda)$ by assumption since $\Tilde{a}_i \oplus z$ is a uniformly random value in $\{ 0, 1\}^n$. Therefore, by \cref{lem: chernoff}, the probability that $\sum_{i=1}^t w_i \geq  t \cdot \frac{\alpha(\lambda)+\delta(\lambda)}{2} = t \cdot (\delta(\lambda) + (\alpha(\lambda) - \delta(\lambda))/2) = t \cdot (\delta(\lambda) + 1/(2\lambda^c))$ is at most $e^{-(1/(2\lambda^c)^2) \delta(\lambda)t/3} = \negl(\lambda)$ (where the last equality follows from our choice of $t$). Therefore there is a $1-\negl(\lambda)$ chance that $\mathsf{Decode'}$ outputs $\bot$.

    Pseudorandomness of our new scheme follows immediately from the pseudorandomness of the old scheme since the new scheme simply consists of codewords from the old scheme concatenated together (with a shift $z$ and permutation $\pi$ applied on top).
    }
    {See full version.}
\end{proof}

\fullversion{
Just as in \cite{christ2024pseudorandom}, for the remainder of the paper, we will identify $n$ with the security parameter. This may be confusing since when $\lambda$ is the security parameter, we say $n(\lambda)$ is the length of the code. However, letting $n$ be the security parameter brings our notation in line with the works most closely related to our own \cite{BKR, christ2024pseudorandom, RazLearningRequiresGoodMemory}.
}

\section{A warmup}
\fullornot{
\label{sec: warmup}
Here we give informal an description of a zero-bit PRC schemes with is only robust to $o(n)$ errors and pseudorandom against any PPT adversary. Our reasons for presenting this warmup are twofold. First, we believe it provides intuition into what types of assumptions are good for constructing PRCs. Second, we find it interesting that building PRCs which are robust to any sub-constant error rate is surprisingly easy (\cref{thm: warmup_thm}) but building PRCs which are robust to a constant error rate seems to require much more involved constructions. Very similar PRF based constructions have already been described in \cite{christ2024pseudorandom, christ2024undetectable, kirchenbauer2023watermark, scott2022}. Though we note that those seem to achieve robustness to any $o(1/\log n)$ error rate, whereas the following scheme is robust to any $o(1)$ error rate.

We will examine a simple construction of PRCs, which, for \emph{any} $\tau(n) = \omega(1)$, is robust against $\text{BSC}(1/\tau(n))$. We choose $\text{BSC}(1/\tau(n))$ instead of $(1/\tau(n))$-bounded channels only for ease of presentation and one can achieve robustness to $p$-bounded channels by applying techniques similar to those used in \cref{lem: amplify_public}. 

\begin{theorem}
    \label{thm: warmup_thm}
    Let $p(n)$ be any $o(1)$ function. If one-way functions exist, then there exists a $(1-\negl(n), 1-\negl(n), \negl(n))$-private-key PRC scheme robust to $\text{BSC}(p(n))$ and pseudorandom against all PPT adversaries.
\end{theorem}

\begin{proof}
\fullornot{
Let $\tau(n) = 1/p(n)$. The existence of one-way functions implies the existence of a keyed pseudorandom function family $f_k: \{ 0, 1\}^{\sqrt{\tau(n)} \log(n)} \rightarrow \{ 0, 1\}^n$ using the GGM construction \cite{GGM}. Let $t = n^3$. The key generation algorithm selects the key $k \leftarrow \{ 0, 1\}^n$, the encoding algorithm samples $x_1, \dots, x_t \leftarrow \{ 0, 1\}^{\sqrt{\tau(n)} \log(n)}$ and outputs $x_1 || f_k(x_1) || \dots || x_t || f_k(x_t)$, and the decoding algorithm on an input $\Tilde{x}_1 || \Tilde{y}_1 || \dots || \Tilde{x}_t || \Tilde{y}_t$ for $\Tilde{x}_i \in \{ 0, 1\}^{\sqrt{\tau(n)} \log(n)}, \Tilde{y}_i \in \{ 0, 1\}^n$ outputs $1$ if $\Delta(f_k(\Tilde{x}_i), \Tilde{y}_i) \leq n/10$ for any $i \in [1, t]$ and $\bot$ otherwise. 

\textsc{Robustness.} Imagine subjecting a codeword $x_1 || f_k(x_1) || \dots x_t || f_k(x_t)$ from this scheme to $\text{BSC}(p)$ for $p = 1/\tau(n)$. As long as there exists a $i \in [t]$ such that the bits in $x_i$ remain unchanged and the bits of $f_k(x_i)$ suffer less than $n/10$ flips, the corrupted codeword will be decoded to zero. For a fixed $i$, the probability that the bits of $x_i$ remain unchanged is
$$\left( 1-p \right)^{\log(n) \sqrt{\tau(n)}} = \left( \left( 1-\frac{1}{\tau(n)} \right)^{\log(n) \tau(n)} \right)^{\frac{1}{\sqrt{\tau(n)}}} \geq \left( \frac{1}{n^2} \right)^{\frac{1}{\sqrt{\tau(n)}}}$$
for sufficiently large $n$. The probability that every $x_i$ is changed is at most
$$\left(1- \frac{1}{n^{\frac{2}{\sqrt{\tau(n)}}}} \right)^t = \negl(n)$$
by our choice of $t$.
Furthermore, for any fixed $i \in [t]$, the probability that more than $n/10$ corruptions occur in bits $f_k(x_i)$ is $\negl(n)$ by \cref{lem: chernoff}. By the union bound, there is a $1-\negl(n)$ chance that there exists some $i \in [t]$ such that $x_i$ is unchanged, and $f_k(x_i)$ suffers fewer than $n/10$ bit-flips.

\textsc{Soundness.} Suppose the scheme were not sound, and there existed a series fixed string (parameterized by $n$) $x_1^* || y_1^* || \dots || x_t^* || y_t^*$ where $x_i^* \in \{ 0, 1\}^{\sqrt{\tau(n)} \log(n)}, y_t \in \{ 0, 1\}^n$ which were decoded to $1$ with non-negligible probability (where the probability is over the key $k$). Notice that for a truly random function $f$, for any $i \in [1, t]$, $\Delta(f(x_i^*), y_i^*) \leq n/10$ with $\negl(n)$ probability (by \cref{lem: chernoff}). Now consider the polynomial time distinguisher $\mathcal{A}(1^n)$ which when given oracle access to a function $f': \{ 0, 1\}^{\sqrt{\tau(n)} \log(n)} \rightarrow \{ 0, 1\}^n$ outputs $1$ if and only if there exists an $i \in [t]$ such that $\Delta(x_i^*, f'(x_i^*)) \leq n/10$. Notice that by assumption, if $f'$ is a PRF, then $\mathcal{A}$ outputs $1$ with non-negligible probability, and if $f'$ is truly random, it outputs $1$ with negligible probability. Therefore, $\mathcal{A}$ distinguishes between $f_k$ and a truly random function with non-negligible probability. This contradicts the fact that $f_k$ is a PRF. We have arrived at a contradiction, so the scheme must be sound. 

\textsc{Pseudorandomness.}
We will show that the outputs of this scheme are pseudorandom against PPT adversaries by applying the hybrid lemma. Let $q= \poly(t) = \poly(n)$ and consider the following distributions:
\begin{enumerate}
    \item
    $(x_1||f_k(x_1)|| \dots || x_q||f_k(x_q))$ where $k, x_1, \dots, x_q \leftarrow \{ 0, 1\}^n$
    \item
    $(x_1||f(x_1)|| \dots || x_q||f(x_q))$ where $f$ is a random function and $x_1, \dots, x_q \leftarrow \{ 0, 1\}^n$
    \item
    $(x_1||f(x_1)|| \dots || x_q||f(x_q))$ where $f$ is a random function and $x_1, \dots, x_q \leftarrow \{ 0, 1\}^n$ conditioned on all $x_i$ being distinct
    \item
    $(x_1||y_1|| \dots || x_q||y_q)$ where $x_1, \dots, x_q, y_1, \dots, y_q \leftarrow \{ 0, 1\}^n$ conditioned on all $x_i$ being distinct
    \item
    $(x_1||y_1|| \dots || x_q||y_q)$ where $x_1, \dots, x_q, y_1, \dots, y_q \leftarrow \{ 0, 1\}^n$
\end{enumerate}
Distributions 1 and 2 are computationally indistinguishable by the pseudorandomness of $f_k$. Distributions 2 and 3 are statistically indistinguishable by \cref{fact: rejection_sample_SD} since conditioning on $x_i$ being distinct only eliminates $t^2/2^{\log(n) \sqrt{t(n)}} = \negl(n)$ fraction of possibilities (\cref{lem: birthday_bound}). Distribution 3 and 4 are the same. Distribution 4 and 5 are statistically indistinguishable by the same reasoning as showed distribution 2 and 3 are indistinguishable. Therefore, by the hybrid lemma, distributions 1 and 5 are indistinguishable. Thus, the scheme satisfies the pseudorandomness property.
}
{See full version.}
\end{proof}

We see that the minimal cryptographic assumption of one-way functions lets us achieve robustness to any $o(1)$ error rate. We also observe that the presented PRC can be turned into a PRC over a larger alphabet $|\Sigma| = \tau(n)$ which is robust to a constant error rate by simply identifying each block of $\log_2(\tau(n))$ bits with a symbol in $\Sigma$. This has the benefit that the probability that some block of $\sqrt{\tau(n)} \log(n)$ bits (in particular, one corresponding to $x_i$) remains unchanged when the codeword is subjected to a constant error rate channel goes up to $1-\negl(n)$. This resolves the bottleneck in our robustness proof above and allows us to achieve robustness to a constant error rate. Therefore, it is trivial to construct PRCs with superconstant alphabet size. This sets a baseline and tells us that for a scheme to be considered non-trivial, it must be robust to a constant error rate and have constant alphabet size. 

The critical weakness of the PRF construction is that for a codeword to be decoded correctly, all $\omega(\log n)$ bits of some $x_i$ must remain intact. One approach to fix this (used to construct secret-key PRCs in \cite{golowich2024editdistancerobustwatermarks}) is to use a local weak PRF family so that if $\Delta(x, x')$ being small implies that $\Delta(f(x), f'(x))$ is small. A different approach is to aim for a scheme in which the decoding algorithm looks at a small number of bits of the codeword. In particular, if the decoding algorithm only looks at $O(\log n)$ of the the codeword, we may be able to achieve robustness to a constant error rate. This is the intuition that guides all of our upcoming schemes. This is also the approach guiding the scheme proposed in \cite{christ2024pseudorandom}.

}
{See full version.}

\section{Planted hyperloop construction}
\label{sec: hyperloop}
In this section, we use will use the planted hyperloop assumption and the security of Goldreich's PRG instantiated with the $P_5$ predicate to construct a public-key PRC scheme.
\begin{theorem}
    \label{thm: hyperloop_bigthm}
    Let $\delta, m, \ell, t$ be the parameters specified in \cref{assump: planted_hyperloop} and $p$ be a constant in $[0, 1/2)$. Under \cref{assump: planted_hyperloop} and \cref{assumption: GoldreichsPRG}, there exists a $(1-\negl(n), 1-\negl(n), o(1))$-public-key PRC robust to any $p$-bounded channel and pseudorandom against all PPT adversaries.
\end{theorem}

\subsection{The assumptions}
We begin by reviewing the assumptions used by \cite{BKR} to construct public-key cryptography. All hypergraphs are assumed to have ordered hyperedges (a hyperedge is an ordered tuple of vertices rather than a set of vertices).
\begin{definition}
    A hyperloop is a 3-hypergraph where each vertex has degree two and we define the size of a hyperloop as the number of hyperedges it contains.
\end{definition}
 The construction of \cite{BKR} plants $t = 2^{\Theta(\ell)}$ hyperloops $S_1, \dots, S_t$ of size $\ell = O(\log n)$ into a random hypergraph to create a 5-hypergraph $H$. The secret key is the set of $t$ hyperloops and the public key is $H$. We now formally present this construction.

\begin{construction}[\cite{BKR}]
    \label{cons: planted_hyperloop}
    Let $L_0$ be a fixed hyperloop of size $\ell = O(\log n)$. $H$  is sampled as follows. Let:
    \begin{enumerate}
    \itemsep=0ex
        \item
        $L$ be the union of $t = 2^{\Theta(\ell)}$ vertex-disjoint copies of $L_0$,
        \item 
        $Q$ be a random 3-hypergraph with $n$ vertices and hyperedge probability $O(n^{-3/2-\delta})$,
        \item 
        $P = Q \cup L$ where $L$ is planted on a random subset of the vertices of $Q$,
        \item
        If $P$ has more than $n^{3/2-\delta}$ hyperedges, output $H = \bot$. Otherwise, $P'$ is obtained by adding random 3-hyperedges to $P$ until it has $m = n^{3/2-\delta}$ hyperedges.
        \item 
        $H$ is obtained by randomly adding 2 vertices to each hyperedge in $P'$ (where those 2 vertices will be the last two in the ordered hyperedge)
    \end{enumerate}
    The public key is the 5-hypergraph $H$ and the secret key is $S_1, \dots, S_t$ where $S_i \subseteq \{ 1, \dots, m \}$ are they hyperedges corresponding to the $i^{th}$ planted copy of $L_0$. 
\end{construction}
\begin{figure}
    \centering
    \resizebox{.9\linewidth}{!}{\includegraphics{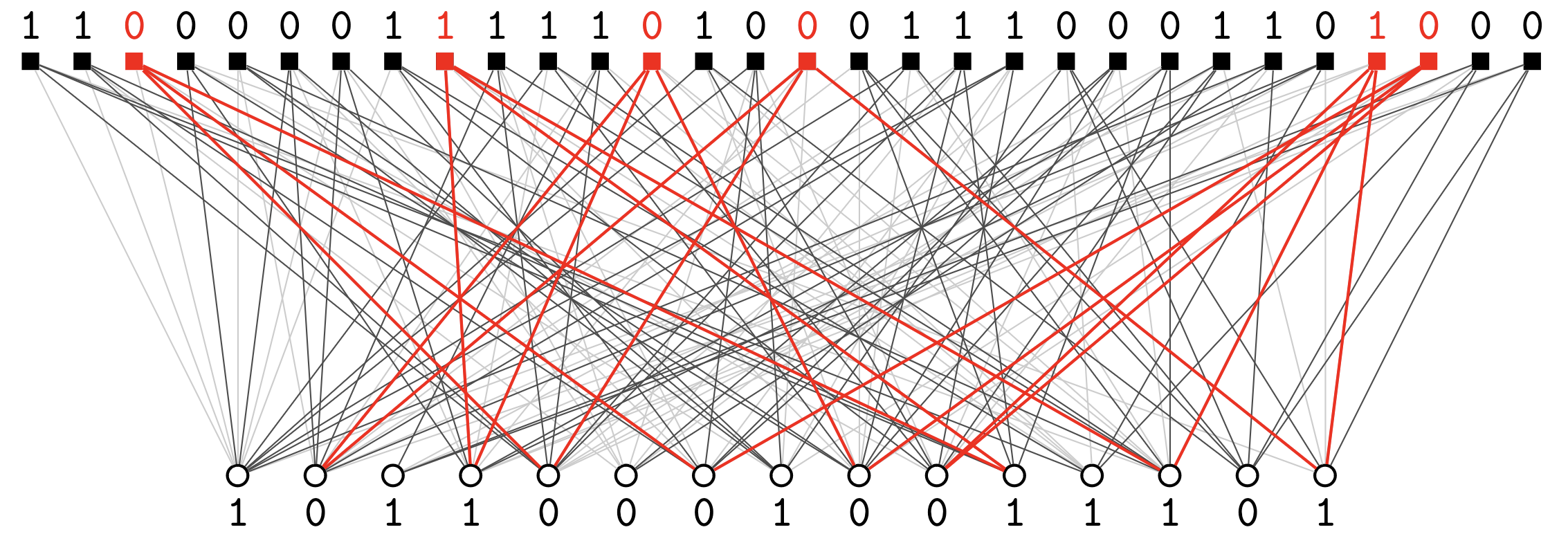}}
    \caption{A public key and PRG output with a single planted hyperloop $L_0$. The secret key is marked in red (from \cite{BKR}).}
    \label{fig:planted_hpyerloop}
\end{figure}

For consistency, we define $F_{\bot}(x) = 0^{(n^{-3/2-\delta})}$. We refer to any hypergraph generated using \cref{cons: planted_hyperloop} as a planted hyperloop graph. This leads us to our first assumption.

\begin{assumption}
    \label{assump: planted_hyperloop}
    For a sufficiently small constant $\delta$, $\ell = 0.36 \log n$, and $t = n^{0.75 - \delta}$, $P$ and $Q$ are $o(1)$-indistinguishable in $n^{O(1)}$ time.
\end{assumption}
\cref{assump: planted_hyperloop} is the planted hyperloop assumption of \cite{BKR} with the exception that it assumes $o(1)$-indistinguishability rather than $(1-\Omega(1))$-indistinguishability.
\fullversion{
However, in the following statement, \cite{BKR} indicates that $o(1)$-indistinguishability is also a fair assumption:
\begin{displayquote}
    Our security argument applies to distinguishers of any \emph{constant} advantage.
\end{displayquote}
The fact that the security arguments of \cite{BKR} apply to distinguishers of \emph{any} constant advantage rather than for some fixed constant advantage which is less than one leads us to believe that the assumption of $o(1)$-indistinguishability in \cref{assump: planted_hyperloop} is indeed fair. If one wishes to use the more conservative assumption of $(1-\Omega(1))$-indistinguishability for \cref{assump: planted_hyperloop}, \cref{thm: hyperloop_bigthm} holds with $(1-\Omega(1))$-indistinguishability rather than $o(1)$-indistinguishability.

We note $(1-\Omega(1))$-indistinguishability is a perfectly fine assumption for \cite{BKR} since they are able to amplify this to $\negl(n)$-indistinguishability using standard amplification techniques. However, applying such amplification techniques would mean that our PRC would not be robust to a constant error rate anymore. So we must make do with worse than $\negl(n)$-indistinguishability.

\begin{remark}
    \label{remark: quasi-polynomial-security}
    Since $\ell = O(\log n)$, a brute-force $2^{O(\log^2(n))}$ time algorithm can distinguish $P$ from $Q$ by searching for the implanted hyperloop. This will imply that \cref{cons: enc_hyperloop} will not be secure against $2^{O(\log^2(n))}$ time adversaries. Indeed, to our knowledge, all PRC are vulnerable to quasi-polynomial time adversaries.
\end{remark}
}

Hypergraphs with certain parameters (including planted hyperloop hypergraphs) can be used as PRGs. To show how, we review Goldreich's PRG. Fix the predicate $P_5(x_1, \dots, x_5) = x_1 \oplus x_2 \oplus x_3 \oplus x_4 x_5$. For an $n$ vertex, $m$ hyperedge, 5-hypergraph $H$, we define the PRG $F_H: \{ 0, 1\}^n \rightarrow \{ 0, 1\}^m$ as follows. On an input $x$, the bits of $x$ are projected onto the vertices of $H$, and bit $i$ of $F_H(x)$ is given by applying $P_5$ to the labeling of the vertices of hyperedge $i$. 

\cref{fig:planted_hpyerloop} gives a way to visualize Goldreich's PRG. We interpret our hypergraph $H$ as a bipartite graph $B$ where the input vertices of $B$ represent vertices of $H$, the output vertices of $B$ represent the hyperedges of $H$, and edge $(a, b) \in B$ if and only if vertex $a$ is contained in hyperedge $b$ in $H$. See \cref{fig:planted_hpyerloop} for the bipartite graph visualization with an example of the computation of $F_H$. Our second assumption is the security of Goldreich's PRG instantiated with the $P_5$ predicate.

\begin{assumption}
    \label{assumption: GoldreichsPRG}
    For every $\delta$, $m = n^{1.5 - \delta}$, and $s = \poly(n)$, random $Q$ belonging to set of 5-hypergraphs on $n$ verticies with $m$ hyperedges, and random $x_1, \dots, x_s \in \{ 0, 1\}^n$, $y_1, \dots, y_s \in \{ 0, 1\}^m$, $(Q, F_Q(x_1), \dots, F_Q(x_{s}))$ and $(Q, y_1, \dots, y_s)$ are $o(1)$-indistinguishable in $n^{O(1)}$ time. 
\end{assumption}
We now justify \cref{assumption: GoldreichsPRG}. Notice that it is too much to hope for $\negl(n)$-indistinguishability in \cref{assumption: GoldreichsPRG}. To see why, notice that there is a $\Omega(1/n^5)$ chance that the first two hyperedges in $Q$ contain the exact same vertices in the same order, which would cause the first bit of the output of $F_Q$ to always equal the second bit of the output. Such a function $F_Q$ is clearly not a PRG. 

\fullversion{
The authors of \cite{BKR} use the weaker assumption that for  every $\delta$, $m = n^{1.5 - \delta}$, random $Q, x \in \{ 0, 1\}^n, y \in \{ 0, 1\}^m$, that $(Q, F_Q(x))$ and $(Q, y)$ are $o(1)$-indistinguishable in $n^{O(1)}$ time. This differs from \cref{assumption: GoldreichsPRG} in that it only guarantees $o(1)$-indistinguishability for one sample from the PRG. However,} \cref{assumption: GoldreichsPRG} is in line with standard assumption about Goldreich's PRG \cite{Min_Complexity_Goldreich, CouteauConcreteGoldreich}.

\subsection{The construction}
\begin{construction}[{Hyperloop Construction, \textsf{Hyperloop}[$\delta, m, \ell, t$]}]
\label{cons: enc_hyperloop}
Let $\delta, m, \ell, t$ be efficiently computable functions of the security parameter $n$.
\begin{itemize}
    \item
    \textsf{KeyGen}$(1^n)$: Sample $H$ and $S$ as in \cref{cons: planted_hyperloop} conditioned on the last two vertices of each hyperedge in $S_1$ being pairwise disjoint. output $(\textsf{sk}=S, \textsf{pk}=H)$.
    \item
    \textsf{Encode}$(1^n, H, 1)$: Sample $u \leftarrow \{ 0, 1\}^{n}$ and output $F_H(u)$.
    \item 
    \textsf{Decode}$(1^n, S_1, x)$: Compute $w = \bigoplus_{j \in S_1} x_j$. If $w = 0$, output $1$, otherwise output $\bot$.
\end{itemize}
\end{construction}

\subsection{Robustness}
We first review a basic fact about decoding in planted hyperloop graphs when the output is not subjected to errors and then show that this decoding mechanism is robust to errors.
\begin{lemma}[{Claim 1 in \cite{BKR}}]
    \label{lem: bias_hyperloop}
    If $H, S$ come from \cref{cons: planted_hyperloop} where the hyperedges of $S_1$ are disjoint, if we sample $x$ uniformly at random from $\{ 0, 1\}^m$ and let $y = F_H(x)$, then $w = \bigoplus_{j \in S_1} y_j$ has bias $2^{-\ell}$ towards being $0$.
\end{lemma}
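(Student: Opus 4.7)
The plan is to decompose $w = \bigoplus_{j \in S_1} y_j$ into a linear and a quadratic piece (in $x$), show the linear piece vanishes deterministically because of the hyperloop structure, and then compute the bias of the quadratic piece via \cref{lem: xoring_bias}.

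First I would write out, for each hyperedge $j \in S_1$ with ordered vertex tuple $(v_1^j, v_2^j, v_3^j, v_4^j, v_5^j)$, the definition of $F_H$ and the $P_5$ predicate to get
\[ y_j \;=\; x_{v_1^j} \oplus x_{v_2^j} \oplus x_{v_3^j} \oplus \bigl( x_{v_4^j} \cdot x_{v_5^j} \bigr) . \]
Summing over $j \in S_1$ then splits $w$ as $w = L \oplus Q$, where $L := \bigoplus_{j \in S_1}\bigl( x_{v_1^j} \oplus x_{v_2^j} \oplus x_{v_3^j}\bigr)$ is the linear part and $Q := \bigoplus_{j \in S_1} x_{v_4^j} x_{v_5^j}$ is the quadratic part.

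Next I would invoke the defining property of a hyperloop. By construction, the 3-hypergraph formed by the first three coordinates $\{(v_1^j, v_2^j, v_3^j) : j \in S_1\}$ is (a copy of) $L_0$, a hyperloop of size $\ell$, so every vertex appearing in it has degree exactly $2$. Hence each variable $x_v$ appears an even number of times (either $0$ or $2$) in the XOR defining $L$, and thus $L \equiv 0$ as a function of $x$. The problem is therefore reduced to computing the bias of $Q$.

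Finally, since $\mathsf{KeyGen}$ conditions on the $5$-hyperedges of $S_1$ being pairwise vertex-disjoint, the $2\ell$ indices $\{v_4^j, v_5^j\}_{j \in S_1}$ are pairwise distinct. Thus, for uniform $x$, the products $X_j := x_{v_4^j} x_{v_5^j}$ are mutually independent, each distributed as $\mathrm{Ber}(1/4)$. Applying \cref{lem: xoring_bias} with $p_i = 1/4$,
\[ \Pr[w = 0] \;=\; \Pr[Q = 0] \;=\; \tfrac{1}{2}\Bigl(1 + \prod_{i=1}^{\ell} (1 - 2 \cdot \tfrac{1}{4})\Bigr) \;=\; \tfrac{1}{2} + 2^{-\ell-1} , \]
so $w$ has bias $2\Pr[w=0] - 1 = 2^{-\ell}$ towards $0$, as claimed.

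The only step deserving a second look is the independence and Bernoulli$(1/4)$ claim for the $X_j$'s: one needs $v_4^j \neq v_5^j$ within each hyperedge (otherwise $X_j$ would be Bernoulli$(1/2)$) as well as pairwise distinctness across hyperedges. The cross-hyperedge condition is exactly the pairwise disjointness enforced by $\mathsf{KeyGen}$, while the within-hyperedge condition is a sanity property of the random pair of tail vertices added in Step~4 of \cref{cons: planted_hyperloop}, which holds with overwhelming probability and can be absorbed into the conditioning without affecting the distribution meaningfully. This is the one place a careful writeup needs a line of justification; everything else is a direct calculation.
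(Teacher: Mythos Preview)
Your proof is correct and follows the paper's argument essentially verbatim: split $w$ into its linear and quadratic parts, kill the linear part using the degree-$2$ property of the hyperloop, and compute the bias of the XOR of $\ell$ independent $\mathrm{Ber}(1/4)$ variables via \cref{lem: xoring_bias}. One small wording point: you invoke ``pairwise vertex-disjoint $5$-hyperedges'' to get distinctness of the $2\ell$ tail vertices, but full vertex-disjointness is incompatible with the hyperloop structure you just used (the first three coordinates must share vertices); what the disjointness hypothesis is really buying you---and all the paper uses---is that the added fourth and fifth vertices are distinct across the hyperedges of $S_1$, which is exactly what you need for the independence step.
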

\begin{proof}
    \fullornot{
    We use the notation $x_{j_i}$ to denote the value of the $i$th vertex of hyperedge $j$ when $x$ is projected onto the vertices of $H$. Since all vertices in $S_1$ have degree two, $\bigoplus_{j \in s_1} (x_{j_1} \oplus x_{j_2} \oplus x_{j_3}) = 0$. So, $\bigoplus_{j \in S_1} y_j = \bigoplus_{j \in S_1} (x_{j_1} \oplus x_{j_2} \oplus x_{j_3} \oplus x_{j_4} x_{j_5}) = \bigoplus_{j \in S_1} x_{j_4} x_{j_5}$. Notice that each term $x_{j_4} x_{j_5}$ in the xor is 1 with probability $1/4$. Furthermore, since the last two vertices of each hyperedge in $S_1$ are disjoint, the terms $x_{j_4} x_{j_5}$ are independent for each $j$. Therefore, the bias of the xor of these biased independent terms, $\bigoplus_{j \in S_1} y_j$, has bias $2^{-\ell}$.
    }
    {See full version.}
\end{proof}

We now use this to prove that the output of Goldreich's PRG instantiated with planted hyperloop graphs is indeed a robust to errors.
\begin{lemma}
    \label{lem: hyperloop_robustness}
    Let $\delta, m, \ell, t$ be the parameters specified in \cref{assump: planted_hyperloop} and $p$ be any constant in $[0, 1/2)$. There exists some polynomial $p(n)$ such that for \cref{cons: enc_hyperloop}, for all $d \leq pm$, for all keys $(\mathsf{sk}, \mathsf{pk}) \leftarrow \mathsf{KeyGen}(1^n)$,
        $$\underset{\mathcal{E}}{\operatorname{Pr}}[\mathsf{Decode}(1^n, \mathsf{sk}, \mathcal{E}(x)) = 1: x \leftarrow \mathsf{Encode}(1^n, \mathsf{pk}, 1)] \geq \frac{1}{2} + \frac{1}{p(n)} \ .$$
        Here $\mathcal{E}$ is the $d$-hypergeometric channel and the randomness is over the randomness of the encoding algorithm and the errors of $\mathcal{E}$.
\end{lemma}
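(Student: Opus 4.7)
The plan is to track how the $d$-hypergeometric channel perturbs the single parity check computed by $\mathsf{Decode}$, then combine an independence argument with the two bias lemmas already in hand. Write $y = F_H(u)$ for $u \leftarrow \{0,1\}^n$ and let $e \in \{0,1\}^m$ be the error pattern introduced by $\mathcal{E}$, so that the received word is $y' = y \oplus e$ and $e$ is uniform over $\mathcal{S}_{d,m}$. The decoder outputs $1$ iff $w' := \bigoplus_{j \in S_1} y'_j = 0$, and since XOR is linear,
\[ w' \;=\; w \oplus E, \qquad \text{where } w = \bigoplus_{j \in S_1} y_j \text{ and } E = \bigoplus_{j \in S_1} e_j. \]
The channel samples $e$ independently of $u$, so $w$ and $E$ are independent. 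Hence, writing $\Pr[w=0] = \tfrac12 + \alpha$ and $\Pr[E=0] = \tfrac12 + \beta$, a quick calculation gives $\Pr[w' = 0] = \tfrac12 + 2\alpha\beta$, and it suffices to show $\alpha,\beta \geq 1/\poly(n)$.

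The bias $\alpha$ is handled directly by \cref{lem: bias_hyperloop}: since the KeyGen procedure conditions on the hyperedges of $S_1$ being pairwise disjoint, that lemma gives $\alpha = 2^{-\ell-1}$, which is $1/\poly(n)$ because $\ell = 0.36 \log n$.

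For $\beta$, note that the indicator variables $\{e_j\}_{j \in S_1}$ count how many of $d$ uniformly chosen positions land in a fixed set of size $\ell$, so the number of set bits of $e$ restricted to $S_1$ is distributed as $\text{Hyp}(m, d, \ell)$ and $E$ is its parity. I would invoke \cref{lem: xor_bias_no_replacement} (with the substitution $n \to m$, $m \to d$, $t \to \ell$) to conclude
\[ \Pr[E = 0] \;\geq\; \tfrac12 + \tfrac12 \prod_{i=1}^\ell (1 - 2p_i), \qquad \tfrac{d-\ell}{m} \leq p_i \leq \tfrac{d}{m-\ell}. \]
The key parameter check is that the upper endpoint stays bounded away from $1/2$: since $d \leq pm$ for constant $p < 1/2$ and $\ell/m = O(\log n)/n^{1.5-\delta} = o(1)$, for large enough $n$ every admissible $p_i$ satisfies $1 - 2p_i \geq (1-2p)/2 > 0$. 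This gives $\beta \geq \tfrac12 \bigl((1-2p)/2\bigr)^\ell$, which is again $1/\poly(n)$ since $\ell = O(\log n)$.

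Putting the pieces together, $2\alpha\beta \geq 2^{-\ell-1}\bigl((1-2p)/2\bigr)^\ell = n^{-O(1)}$, so there is a polynomial $p(n)$ with $\Pr[\mathsf{Decode} = 1] \geq \tfrac12 + 1/p(n)$, as required. The only mildly subtle step is keeping the hypergeometric bias $\beta$ nontrivial, which reduces to the two observations $p < 1/2$ is a fixed constant and $\ell \ll m$; everything else is a clean invocation of previously stated lemmas.
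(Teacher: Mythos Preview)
Your proposal is correct and follows essentially the same route as the paper: decompose the parity check into the encoding bias $w$ (handled by \cref{lem: bias_hyperloop}) and the channel bias $E$ (handled by \cref{lem: xor_bias_no_replacement}), note their independence, and combine via the two-bit case of \cref{lem: xoring_bias}. The only cosmetic differences are that the paper writes the combination lemma abstractly rather than as the explicit $\tfrac12 + 2\alpha\beta$ formula, and it bounds $\beta$ using $1 - 2\,\tfrac{pm}{m-\ell}$ directly while you pass through the slightly cruder $(1-2p)/2$; neither affects the argument.
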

\begin{proof}
    \fullornot{
    Let $p'(n)$ be any polynomial such that the following holds:
    \[\frac{1}{2} \left( 1-2 \frac{pm}{m-\ell} \right)^\ell \geq \frac{1}{2} + 1/p'(n) \ .\] We will show that for all keys, over the randomness in the encoding function and $\mathcal{E}$, that a codeword from the PRC has a decent chance of being decoded to 1. Fix the key public and private key $H, S$. Formally, we wish to show that $$\underset{u, \mathcal{E}}{\operatorname{Pr}}[\textsf{Decode}(1^n, S_1, \mathcal{E}(F_H(u))) = 1] = \frac{1}{2} + \frac{1}{p(n)}$$
    Since $\mathcal{E}$ is the $d$-hypergeometric channel, we can model channel $\mathcal{E}$ as an error vector $e \leftarrow \mathcal{S}_{d, m}$. So the above statement is equivalent to
    $$\underset{u, e}{\operatorname{Pr}}[\textsf{Decode}(1^n, S_1, F_H(u) \oplus e) = 1] = \frac{1}{2} + \frac{1}{p(n)}$$
    Let us determine the probability that $w_i$ is zero for a fixed $H, S$. By \cref{lem: bias_hyperloop} and \cref{lem: xor_bias_no_replacement} respectively,
    $$\underset{u}{\operatorname{Pr}}[\oplus_{j \in S_1} (F_H(u)_j) = 0] \geq \frac{1}{2} + \frac{1}{n}$$
    $$\underset{e}{\operatorname{Pr}}[\oplus_{j \in S_1} e_{j} = 0] \geq \frac{1}{2} + \frac{1}{2} \left( 1-2 \frac{pm}{m-\ell} \right)^\ell \geq \frac{1}{2} + \frac{1}{p'(n)} \ .$$
    Since $F_H(u_i)$ and $e_i$ are independent, the events $\oplus_{j \in S_1} (F_H(u_i)_j) = 0$ and $\oplus_{j \in S_1} e_{ij} = 0$ are independent, and therefore, by \cref{lem: xoring_bias} we have
    $$\underset{u, e}{\operatorname{Pr}}[w = 0] \geq \frac{1}{2} + \frac{1}{n \cdot p'(n)}\ . \qedhere $$
    }
    {See full version.}
\end{proof}

\subsection{A form of soundness}
\begin{lemma}

    \label{lem: hyperloop_pseudosoundness}
    Let $\delta, m, \ell, t$ be the parameters specified in \cref{assump: planted_hyperloop}. For \cref{cons: enc_hyperloop}, for any key pair $(\mathsf{pk}, \mathsf{sk}) \leftarrow \mathsf{KeyGen}(1^n)$,
        $$\underset{x \leftarrow \{ 0, 1\}^n}{\operatorname{Pr}}[\mathsf{Decode}(1^n, \mathsf{sk}, x) = 1] = 1/2 \ . $$
\end{lemma}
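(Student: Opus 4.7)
The plan is to observe that this is essentially a one-line calculation relying on the standard fact that the XOR of any fixed non-empty subset of independent uniform bits is itself a uniform bit. The decoding algorithm in \cref{cons: enc_hyperloop} only outputs $1$ when $w = \bigoplus_{j \in S_1} x_j = 0$, and $S_1$ is determined entirely by the secret key (it is the set of $\ell = O(\log n)$ hyperedge indices corresponding to the first planted copy of $L_0$), so once we fix the key, $S_1$ is simply a fixed non-empty subset of $[m]$.

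First I would fix an arbitrary key pair $(\mathsf{pk}, \mathsf{sk}) = (H, S) \leftarrow \mathsf{KeyGen}(1^n)$, extract the subset $S_1 \subseteq [m]$, and note that $|S_1| = \ell \geq 1$ (indeed, by the conditioning in $\mathsf{KeyGen}$, the hyperedges of $S_1$ are pairwise disjoint, but we do not even need this here, only that $S_1$ is non-empty). Next I would argue that since $x \leftarrow \{0,1\}^m$ is uniform, the coordinates $\{x_j\}_{j \in S_1}$ are independent uniform bits, so their XOR $w$ is a uniform bit. Therefore $\Pr[w = 0] = 1/2$, which is exactly $\Pr[\mathsf{Decode}(1^n, \mathsf{sk}, x) = 1]$.

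There is essentially no obstacle here: unlike the robustness analysis in \cref{lem: hyperloop_robustness}, we do not need to invoke \cref{lem: bias_hyperloop} or the structure of $F_H$, because the input $x$ is uniformly random on the ambient codeword space rather than being a PRG output. The hyperloop structure plays no role in this calculation. One small point to be careful about is just pedantically confirming that the probability is taken only over $x$ (with the key fixed), so no averaging over $H$ or $S$ occurs; the identity $\Pr[w=0]=1/2$ then holds pointwise for every fixed key, which is actually a stronger statement than needed.
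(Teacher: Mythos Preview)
Your proposal is correct and follows exactly the same reasoning as the paper's own proof: once the key is fixed, $S_1$ is a fixed non-empty subset of coordinate indices, and the XOR of the corresponding bits of a uniformly random string is a uniform bit, giving probability exactly $1/2$. The paper states this in a single sentence; your version just spells out the same observation more carefully (and, incidentally, correctly takes $x \leftarrow \{0,1\}^m$ rather than $\{0,1\}^n$).
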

\begin{proof}
    \fullornot{
    Recall that decoding in \cref{cons: enc_hyperloop} outputs $\bot$ if some fixed $O(\log n)$ bits (depending on $\mathsf{sk}$) of $x$ xor to $1$. For a random string $x$, this happens with probability exactly $1/2$.
    }
    {See full version.}
\end{proof}

\subsection{Pseudorandomness}
\begin{lemma}
    \label{lem: H_ind}
    Let $\delta, m, \ell, t$ be the parameters specified in \cref{assump: planted_hyperloop}. Under \cref{assump: planted_hyperloop}, $H$ is $o(1)$-indistinguishable from a random 5-hypergraph with $n$ vertices and $m = n^{1.5-\delta}$ edges.
\end{lemma}
\begin{proof}
    \fullornot{
    Let $\mathcal{D}$ be uniform distribution on 5-hypergraphs with $n$ vertices and $m = n^{1.5-\delta}$ edges and $\overline{P}$ be the $P$ rejection sampled to ensure $P$ has at most $n^{3/2-\delta}$ hyperedges. Let $T(G)$ be the function that outputs $\bot$ if $G$ has more than $n^{3/2-\delta}$ hyperedges, otherwise adds hyperedges to $G$ until it has $n^{3/2-\delta}$ hyperedges and then adds two verticies at random to each hyperedge. Recall that, under \cref{assump: planted_hyperloop}, $P$ and $Q$ are $o(1)$-indistinguishable. Notice that $T$ is efficiently computable. Therefore $H = T(P)$ and $T(Q)$ are $o(1)$-indistinguishable. Let $\overline{Q}$ be the distribution obtained by rejection sampling $Q$ to ensure that the $\overline{Q}$ has at most $n^{3/2-\delta}$ hyperedges. Since there is only a $\negl(n)$ chance that $Q$ has more than $n^{3/2-\delta}$ hyperedges, $Q \approx_c \overline{Q}$. Therefore, $T(Q) \approx_c T(\overline{Q})$. It is also not hard to see that $T(\overline{Q}) = \mathcal{D}$. Therefore, $\mathcal{D} = T(\overline{Q}) \approx_c T(Q)$ is $o(1)$-indistinguishable from $H = T(P)$. This clearly implies that $H$ is indistinguishable from a random 5-hypergraph with $n$ vertices and $m = n^{1.5-\delta}$ edges.
    }
    {See full version.}
\end{proof}

\begin{lemma}
    \label{lem: hyperloop_pseudorandomness}
    Let $\delta, m, \ell, t$ be the parameters specified in \cref{assump: planted_hyperloop}. Under \cref{assump: planted_hyperloop} and \cref{assumption: GoldreichsPRG}, the outputs of \cref{cons: enc_hyperloop} are $o(1)$-indistinguishable from random by any PPT adversary.
\end{lemma}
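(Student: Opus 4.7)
My plan is a standard three-step hybrid argument that interleaves the planted hyperloop assumption with the pseudorandomness of Goldreich's PRG. Fix a polynomial $q=q(n)$ and consider distinguishing the following distributions (with the understanding that $\mathsf{pk}$ also appears as input to the adversary):
\begin{itemize}
    \item $D_0 = (\mathsf{pk},\, F_{\mathsf{pk}}(u_1),\dots,F_{\mathsf{pk}}(u_q))$ where $(\mathsf{sk},\mathsf{pk}) \leftarrow \mathsf{KeyGen}(1^n)$ and $u_i \leftarrow \{0,1\}^n$.
    \item $D_1 = (H',\, F_{H'}(u_1),\dots,F_{H'}(u_q))$ where $H'$ is sampled by drawing a random 3-hypergraph $Q$ with $n$ vertices and $m$ hyperedges and then appending two random vertices to each hyperedge.
    \item $D_2 = (H',\, y_1,\dots,y_q)$ where $y_i \leftarrow \{0,1\}^m$.
    \item $D_3 = (\mathsf{pk},\, y_1,\dots,y_q)$ where $y_i \leftarrow \{0,1\}^m$.
\end{itemize}
This is what I would set up, with the goal of chaining $D_0 \approx D_1 \approx D_2 \approx D_3$.

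For $D_0 \approx_{o(1)} D_1$ and $D_2 \approx_{o(1)} D_3$, I would appeal to \cref{assump: planted_hyperloop}. Two small bookkeeping items are needed. First, $\mathsf{KeyGen}$ conditions on the $O(\log n)$ pairs of appended vertices in $S_1$ being distinct from each other and from the vertices of the planted $L_0$; a birthday-bound calculation shows this event has probability $1-o(1)$ even under the unconditional sampling, so the conditional distribution is $o(1)$-close to the unconditional one in statistical distance. Second, \cref{assump: planted_hyperloop} is stated for the 3-hypergraphs $P$ and $Q$, while $\mathsf{pk}$ is a 5-hypergraph; but the extension from a 3-hypergraph to a 5-hypergraph by appending two uniformly random vertices to each edge is a public, efficient randomized procedure, so any PPT distinguisher between the 5-hypergraph versions yields a PPT distinguisher between $P$ and $Q$ with the same advantage. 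Given the 5-hypergraph indistinguishability, a reduction that samples $u_1,\dots,u_q$ and evaluates $F$ transforms any distinguisher for $D_0$ vs.\ $D_1$ (or $D_2$ vs.\ $D_3$) into an $o(1)$-distinguisher for planted vs.\ random 5-hypergraphs.

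For $D_1 \approx_{o(1)} D_2$, I would invoke \cref{assumption: GoldreichsPRG} directly: by construction $H'$ is exactly a random 5-hypergraph on $n$ vertices with $m$ hyperedges in the sense of that assumption, and \cref{assumption: GoldreichsPRG} is already stated for $\poly(n)$ samples. Combining the three hybrid steps via the triangle inequality yields $|\Pr[\mathcal{A}(D_0)=1] - \Pr[\mathcal{A}(D_3)=1]| = o(1)$, which is precisely the pseudorandomness guarantee required by \cref{def: public-keyPRC}.

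The step I expect to be the most delicate is the lift from the 3-hypergraph statement of \cref{assump: planted_hyperloop} to the 5-hypergraph form used by $\mathsf{KeyGen}$ together with the removal of the conditioning on disjointness of the appended vertices in $S_1$; neither is hard, but both should be written carefully so that the $o(1)$ slack from the assumption is not secretly traded for a $\Omega(1)$ slack elsewhere. Everything else is a routine reduction-and-hybrid chain, and no new ideas beyond those already implicit in \cite{BKR} are needed.
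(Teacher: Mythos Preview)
Your proposal is correct and follows essentially the same hybrid chain as the paper: remove the disjointness conditioning (statistical, $o(1)$), swap planted for random graph (Assumption~\ref{assump: planted_hyperloop}), then swap PRG outputs for uniform (Assumption~\ref{assumption: GoldreichsPRG}). You are in fact slightly more careful than the paper in two places: you explicitly justify lifting the 3-hypergraph assumption to the 5-hypergraph public key via the public random-vertex-appending map, and you add the final hybrid $D_2\to D_3$ swapping the random graph back to $\mathsf{pk}$ so that the endpoint matches the pseudorandomness definition (the paper stops at $(G,Y_1,\dots,Y_s)$).
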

\begin{proof}
    \fullornot{
    Say that the distinguishing algorithm gets $s = \poly(n)$ samples. We will now define several distributions. Let $H$ be the distributions defined in \cref{cons: planted_hyperloop}, $H'$ be the distribution defined for $H$ in \cref{cons: enc_hyperloop}, $G$ be the uniform distribution over 5-hypergraphs with $n$ vertices and $m$ hyperedges, $X_1, \dots, X_s$ all be uniform on $\{ 0, 1\}^n$, and $Y_1, \dots, Y_s$ all be uniform on $\{ 0, 1\}^m$. For any graph $g$, let $g(x)$ denote the output of Goldreich's PRG instantiated with the $P_5$ predicate and graph $g$ on input $x$. By \cref{fact: rejection_sample_SD} and the fact that the $\mathsf{KeyGen}$ algorithm in \cref{cons: enc_hyperloop} rejects a $o(1)$ fraction of keys, the statistical distance between $H$ and $H'$ is $o(1)$. Therefore, $(H, H(X_1), \dots, H(X_s))$ and $(H', H'(X_1), \dots, H'(X_s))$ are distinguishable with advantage at most $o(1)$. By \cref{lem: H_ind}, $(H, H(X_1), \dots, H(X_s))$ is distinguishable from $(G, G(X_1), \dots, G(X_s))$ with $o(1)$ advantage. By \cref{assumption: GoldreichsPRG}, $(G, G(X_1), \dots, G(X_s))$ is distinguishable from $(G, Y_1, \dots, Y_m)$ with at most $1-o(1)$ advantage. These three facts along with the hybrid lemma tell us that $(H', H'(X_1), \dots, H'(X_s))$ is distinguishable from $(G, Y_1, \dots, Y_s)$ with advantage at most $o(1)$.
    }
    {See full version.}
\end{proof}

\subsection{Putting it all together}
\begin{proof}[Proof of \cref{thm: hyperloop_bigthm}]
    \fullornot{
    Observe that the length of the codewords generated by \cref{cons: enc_hyperloop} is $\poly(n)$. Furthermore, \cref{lem: hyperloop_robustness}, \cref{lem: hyperloop_pseudosoundness}, and \cref{lem: hyperloop_pseudorandomness} give us the remaining necessary preconditions to apply \cref{lem: amplify_public} which gives $(1-\negl(n), 1-\negl(n), o(1))$-public-key PRC.
    }
    {See full version.}
\end{proof}

The $o(1)$ pseudorandomness in \cref{thm: hyperloop_bigthm} is not ideal, but for the purpose of watermarking rather than security, seems tolerable. There is only a $o(1)$ probability that watermarking will ever have any noticeable effect (to someone who does not have the secret key).



\section{The weak planted XOR construction}
\label{sec: weakxor}

Christ and Gunn~\cite{christ2024pseudorandom} gave a scheme which is secure if both the \xorassump{} and polynomial hardness of LPN with constant noise rate hold. \footnote{For a particular setting of parameters, it is also secure if LPN with constant noise rate is $2^{O(\sqrt{n})}$ hard.} While polynomial hardness of LPN with constant noise rate is a well believed assumption, the \xorassump{} is a non-standard and relatively unstudied assumption. Therefore, the conjunction of these two assumptions is quite a strong assumption. Therefore, it seems plausible that a scheme based on a strengthened LPN assumption and a weakened \xorassump{} (denoted $\mathsf{XOR}_{m, t, \varepsilon}$) is more secure than the one presented in \cite{christ2024pseudorandom}, and this is what we show in this section.

\WeakXorFinal

\subsection{The assumption}
Let us define $\mathcal{D}_0(m, n)$ as the uniform distribution over $\{ 0, 1\}^{n \times m}$ and we will now define the distribution $\mathcal{D}_1(n, m, t, \varepsilon)$ which corresponds to the distribution of matrices where we strategically implant a low weight vector in the row space.
\begin{construction}[Generalization of \cite{sparse-k-sum}]
    \label{const: sparse_xor_sample}
    We define the distribution $\mathcal{D}_1(n, m, t, \varepsilon)$
    \begin{enumerate}
        \item
        Sample $G \leftarrow \{ 0, 1\}^{n \times m}$,
        \item
        Choose a random tuple $(a_1, \dots, a_t) \subseteq [n]^t$ such that $i \neq j$ implies $a_i \neq a_j$,
        \item
        Let $u = G_{a_1} \oplus \dots \oplus G_{a_{t-1}}$, $v \leftarrow \text{Ber}(m, \varepsilon)$, and update $G_{a_t}$ to $u + v$
        \item
        Output $(G, s)$ where $s \in \{ 0, 1\}^n$ is the $t$ sparse indicator vector for $(a_1, \dots, a_t)$.
        \end{enumerate}
\end{construction}

We are now ready to introduce the (weak) \xorassump{}. 
\begin{assumption}
    \label{assump: sparse-xor}
    For $m, t: \mathbb{N} \rightarrow \mathbb{N}$ and $\varepsilon: \mathbb{N} \rightarrow [0, 1/2]$ which are efficiently computable functions of $n$, the $\mathsf{XOR}_{m, t, \varepsilon}$ assumption states that for every probabilistic polynomial-time adversary $\mathcal{A}$,
    $$\left| \underset{G \leftarrow \mathcal{D}_0(n, m)}{\operatorname{Pr}}[\mathcal{A}(G) = 1] - \underset{(G,s) \leftarrow \mathcal{D}_1(n, m, t, \varepsilon)}{\operatorname{Pr}}[\mathcal{A}(G) = 1] \right| = \negl(n)$$
\end{assumption}
What is refereed to as the \xorassump{} in \cite{christ2024pseudorandom} is simply $\mathsf{XOR}_{m, O(\log n), 0}$.
As one of their major contributions, the authors of \cite{christ2024pseudorandom} give a PRC scheme which is secure if (i) \cref{assump: sparse-xor} with $\varepsilon = 0$, and $m = n^{1-\Omega(1)}, t = \Theta(\log n)$ is true, and (ii) constant noise rate LPN is hard. In this section, we show that such a scheme can be based on a more expansive set of assumptions. Informally, we will show that if for any $m = \poly(n)$, $\mathsf{XOR}_{m, \Theta(\log n), O(\log(m)/(m \eta))}$ holds and $\mathsf{LPN}[\eta]$ holds, then pseudorandom codes exist. For concreteness, one may wish to read this section with the parameter regime $\eta = 1/\sqrt{n}$ in mind since $\mathsf{LPN}[1/\sqrt{n}]$ is a well believed assumption and the weakest LPN assumption known to imply public-key cryptography \cite{alekhnovichMoreAverageCase2003}.

\subsection{Evidence $\mathsf{XOR}_{m, t, \varepsilon}$ is a weaker assumption than $\mathsf{XOR}_{m, t, 0}$}
\label{subsec: evidence}
Before proceeding with our PRC construction, we give two pieces of evidence that $\mathsf{XOR}_{m, t, \varepsilon}$ is indeed a weaker assumption than $\mathsf{XOR}_{m, t, 0}$. The first is a reduction which shows that $\mathsf{XOR}_{m, t, 0}$ implies $\mathsf{XOR}_{m, t, \varepsilon}$.

\begin{theorem}
    For any $m, t: \mathbb{N} \rightarrow \mathbb{N}$ and $\varepsilon: \mathbb{N} \rightarrow [0, 1/2]$ which are efficiently computable functions of $n$, if the $\mathsf{XOR}_{m, t, 0}$ assumption holds and $\text{Ber}(n, \varepsilon)$ is efficiently sampable, then the $\mathsf{XOR}_{m, t, \varepsilon}$ assumption holds.
\end{theorem}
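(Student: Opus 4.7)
The plan is to exhibit a simple efficient reduction $R$ on $n \times m$ binary matrices such that $R(\mathcal{D}_0(n,m)) = \mathcal{D}_0(n,m)$ and $R(\mathcal{D}_1(n,m,t,0)) = \mathcal{D}_1(n,m,t,\varepsilon)$. Given such an $R$, any PPT distinguisher $\mathcal{A}$ against $\mathsf{XOR}_{m,t,\varepsilon}$ yields a PPT distinguisher $\mathcal{A} \circ R$ against $\mathsf{XOR}_{m,t,0}$ with identical advantage, and the theorem follows contrapositively from the $\mathsf{XOR}_{m,t,0}$ assumption.

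The reduction $R$ is: on input $G$, sample an independent noise matrix $E$ whose $n$ rows are IID draws from $\text{Ber}(m,\varepsilon')$ and output $G + E$, where $\varepsilon'$ is chosen so that the XOR of $t$ independent $\text{Ber}(\varepsilon')$ bits is exactly a $\text{Ber}(\varepsilon)$ bit. By \cref{lem: xoring_bias}, this forces $\varepsilon' = \tfrac{1}{2}\bigl(1-(1-2\varepsilon)^{1/t}\bigr)$. The first identity $R(\mathcal{D}_0) = \mathcal{D}_0$ is immediate, since adding a matrix independent of $G$ to a uniformly random $G$ preserves uniformity.

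For the second identity, fix a realization of the planted tuple $(a_1,\dots,a_t)$ in $G \sim \mathcal{D}_1(n,m,t,0)$. The non-planted rows of $G+E$ stay IID uniform and independent of the planted rows. For the planted rows, the key observation used repeatedly is that $G_{a_j} + e_{a_j}$ is uniform and independent of $e_{a_j}$ itself, because $G_{a_j}$ is uniform and independent of $e_{a_j}$; this implies that the perturbed first $t-1$ planted rows $\{(G+E)_{a_j}\}_{j<t}$ are jointly IID uniform and independent of the entire noise vector $(e_{a_1},\dots,e_{a_t})$. Using $\bigoplus_{j=1}^t G_{a_j} = 0$, the last planted row satisfies
\[ (G+E)_{a_t} = \sum_{j<t}(G+E)_{a_j} + w, \qquad w := \bigoplus_{j=1}^t e_{a_j}. \]
By our choice of $\varepsilon'$, each coordinate of $w$ is $\text{Ber}(\varepsilon)$, so $w \sim \text{Ber}(m,\varepsilon)$ and is independent of $\{(G+E)_{a_j}\}_{j<t}$. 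This precisely matches the specification of $\mathcal{D}_1(n,m,t,\varepsilon)$ from \cref{const: sparse_xor_sample}, as the planted tuple itself has the same uniform distribution over ordered $t$-tuples in both cases.

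The main technicality I anticipate is the efficient samplability of $\text{Ber}(\varepsilon')$ from the assumed efficient sampler for $\text{Ber}(\varepsilon)$: the formula for $\varepsilon'$ involves a $t$-th root of $1-2\varepsilon$, but since $\varepsilon$ and $t$ are efficiently computable, one can compute $\varepsilon'$ to inverse-polynomial precision in polynomial time and sample a distribution statistically $\negl(n)$-close to $\text{Ber}(m,\varepsilon')$. This negligible statistical error is absorbed into the overall negligible bound on the distinguishing advantage.
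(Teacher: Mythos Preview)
Your proof is correct and takes a genuinely different route from the paper's. The paper's reduction adds a single $\text{Ber}(m,\varepsilon)$ noise vector to one uniformly random row of $G$; when the input is $\mathcal{D}_1(n,m,t,0)$ this hits a planted row only with probability $t/n$, so the resulting distribution is the mixture $(t/n)\mathcal{D}_1(n,m,t,\varepsilon) + (1-t/n)\mathcal{D}_1(n,m,t,0)$, and the distinguishing advantage drops by a factor of $t/n$. Your reduction instead perturbs \emph{every} row with independent $\text{Ber}(m,\varepsilon')$ noise, with $\varepsilon'$ calibrated via \cref{lem: xoring_bias} so that the XOR of the $t$ planted noise vectors is exactly $\text{Ber}(m,\varepsilon)$; this maps $\mathcal{D}_1(n,m,t,0)$ \emph{exactly} to $\mathcal{D}_1(n,m,t,\varepsilon)$ and preserves the full advantage. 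The cost you pay is the samplability of $\text{Ber}(\varepsilon')$, which the paper avoids entirely by using the assumed $\text{Ber}(\varepsilon)$ sampler directly. One small imprecision: you write ``inverse-polynomial precision'' for $\varepsilon'$ but then claim a $\negl(n)$ statistical error; to actually get negligible error you should compute $\varepsilon'$ to $\Theta(n)$ bits of precision (still polynomial time), so that the per-entry error is $2^{-\Omega(n)}$ and the total variation over all $nm$ entries remains negligible.
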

\begin{proof}
    \fullornot{
    We will show the contrapositive statement that if $\mathsf{XOR}_{m, t, \varepsilon}$ does not hold, then $\mathsf{XOR}_{m, t, 0}$ does not hold. If the $\mathsf{XOR}_{m, t, \varepsilon}$ assumption does not hold, then there exists a polynomial time distinguisher $\mathcal{A}$ which can distinguish between $\mathcal{D}_0(n, m)$ and $\mathcal{D}_1(n, m, t, \varepsilon)$ with non-negligible advantage $p(n)$. Consider now the distinguisher $\mathcal{A}'$, which on an input $G \in \F_q^{n \times m}$, samples $i \leftarrow [1, n], v \leftarrow \text{Ber}(m, \varepsilon)$, creates a new matrix $G' \in \F_q^{n \times n}$ where $G' = G$, sets $G'_i \leftarrow G'_i \oplus v$, and then outputs $\mathcal{A}(G')$.\

    Notice first that $\mathcal{A}'$ is a polynomial time distinguisher since constructing $G'$ and running $\mathcal{A}(G')$ are efficient computations. We now show that $\mathcal{A}'$ can distinguish between $\mathcal{D}_0(m, n)$ and $\mathcal{D}_1(m, n, t, 0)$ with non-negligible advantage. Let us first consider the case when $G$ is sampled from $\mathcal{D}_0(m, n)$. In this case $G'$ is distributed as $\mathcal{D}_0(m, n)$. 
    
    Now let us consider the case when $G$ is sampled from $\mathcal{D}_1(m, n, t, 0)$. We now define three families of distributions:
    \begin{enumerate}
        \item
         For any $s \in \{ 0, 1\}^n$, let $\mathcal{D}^s_0(m, n)$ be distribution of $G$ when it is sampled uniformly from $\{ 0, 1\}^{n \times m}$ subject to $s^T G = 0$.
         \item 
         For any $s \in \{ 0, 1\}^n$, let $\mathcal{D}^s_1(m, n)$ denote the distribution of $G$ when it is sampled as follows: sample $v \leftarrow \text{Ber}(n, \varepsilon)$, sample $G$ uniformly from $\{ 0, 1\}^{n \times m}$ subject to $s^T G = v$. Also denote $\mathcal{D}_1(m, n) = \mathcal{D}_1(m, n, t, \varepsilon)$. 
         \item 
         Let $\mathcal{D}^s_2(m, n)$ denote the distribution of $G$ when sample it as follows: sample $i \leftarrow [1, n]$, $v \leftarrow \text{Ber}(n, \varepsilon)$, $G \leftarrow \mathcal{D}_0^s(m, n)$, set $G_i \leftarrow G_i \oplus v$. Let $\mathcal{D}_2(m, n)$ denote the distribution of $G$ as follows: Sample $s \leftarrow \mathcal{S}_{t, n}$ and $G \leftarrow \mathcal{D}^s_2(m, n)$.
    \end{enumerate}
    Notice that $\mathcal{D}_2^s(m, n) = (t/n) \mathcal{D}^s_1(m, n) + (1-t/n) \mathcal{D}^s_0(m, n)$ since there is a $t/n$ chance that $s_i = 1$, in which case $G$ is sampled from $\mathcal{D}^s_1(m, n)$. Therefore, the distribution of $\mathcal{D}_2(m, n)$ is
    \begin{align*}
        \mathcal{D}_2(m, n) &= \sum_{\substack{s \\ |s| = t}} \frac{\mathcal{D}_2^s(m, n)}{{n \choose t}}\\
        &= \sum_{\substack{s \\ |s| = t}} \left( \frac{t}{n} \frac{\mathcal{D}_1^s(m, n)}{{n \choose t}} + \left( 1-\frac{t}{n} \right) \frac{ \mathcal{D}_0^s(m, n)}{{n \choose t}} \right)\\
        &= \frac{t}{n} \sum_{\substack{s \\ |s| = t}} \frac{\mathcal{D}_1^s(m, n)}{{n \choose t}} + \left( 1-\frac{t}{n} \right) \sum_{\substack{s \\ |s| = t}} \frac{\mathcal{D}_0^s(m, n)}{{n \choose t}}\\
        &= \frac{t}{n} \mathcal{D}_1(m, n) + \left( 1-\frac{t}{n} \right) \sum_{\substack{s \\ |s| = t}} \mathcal{D}_0(m, n)
    \end{align*}

    In our reduction, if $G \leftarrow \mathcal{D}_0(m, n)$, then $G' \sim \mathcal{D}_0(m, n)$, and if $G \leftarrow \mathcal{D}_1(m, n, t, \varepsilon)$, then $G' \sim \mathcal{D}_2(m, n)$. The distinguishing advantage of $\mathcal{A}$ on these two distributions is
    \begin{align*}
        &\left| \underset{
        x \leftarrow \mathcal{D}_0(m, n)}{\operatorname{Pr}}[\mathcal{A}(x) = 1] - 
        \underset{
        x \leftarrow \mathcal{D}_2(m, n)}{\operatorname{Pr}}[\mathcal{A}(x) = 1] \right|\\
        =& \left| \underset{
        x \leftarrow \mathcal{D}_0(m, n)}{\operatorname{Pr}}[\mathcal{A}(x) = 1] - 
        \frac{t}{n} \underset{
        x \leftarrow \mathcal{D}_1(m, n)}{\operatorname{Pr}}[\mathcal{A}(x) = 1] - \left( 1 - \frac{t}{n}\right) \underset{
        x \leftarrow \mathcal{D}_0(m, n)}{\operatorname{Pr}}[\mathcal{A}(x) = 1] \right|\\
        =& \frac{t}{n} \left| \underset{
        x \leftarrow \mathcal{D}_0(m, n)}{\operatorname{Pr}}[\mathcal{A}(x) = 1] - 
        \underset{
        x \leftarrow \mathcal{D}_1(m, n)}{\operatorname{Pr}}[\mathcal{A}(x) = 1] \right|\\
        =& \frac{t}{n} p(n)
    \end{align*}
    
    Since $f(n)$ is non-negligible, $(t/n) p(n)$ is non-negligible. Therefore $\mathcal{A}$ distinguishes $\mathcal{D}_0(m, n)$ (the distribution of $G'$ when $G$ comes form $\mathcal{D}_0(m, n)$) and $\mathcal{D}_2(m, n)$ (the distribution of $G'$ when $G$ comes form $\mathcal{D}_1(m, n, t, \varepsilon)$) with non-negligible probability. So $\mathcal{A}'$ is a distinguisher falsifying the $\mathsf{XOR}_{m, t, \varepsilon}$ assumption.
    }
    {See full version.}
\end{proof}

Our second piece of evidence that $\mathsf{XOR}_{m, t, \varepsilon}$ is a weaker assumption than $\mathsf{XOR}_{m, t, 0}$ is that $\mathsf{XOR}_{m, t, \varepsilon}$ seems more robust to known attacks than $\mathsf{XOR}_{m, t, 0}$. The first version of \cite{christ2024pseudorandom} assumed $\mathsf{XOR}_{\Theta(n), t, 0}$. However, a subsequent version of \cite{sparse-k-sum} gave an attack showing $\mathsf{XOR}_{\Theta(n), O(\log n), 0}$ is not true. The attack (Thm 4.26 of \cite{sparse-k-sum}) consists of sampling random $m/2 \times m$ submatricies of the input matrix $G$ and then using Gaussian elimination to determine the submatrix contains a sparse subset of rows which xor to zero. The newest version of \cite{christ2024pseudorandom} circumvents this problem by setting $m = n^{1-\Omega(1)}$. 
We note that while $\mathsf{XOR}_{\Theta(n), O(\log n), 0}$ is susceptible to this type of attack, $\mathsf{XOR}_{\Theta(n), O(\log n), \varepsilon}$ is not for reasonable $\varepsilon$ (say $\varepsilon = 1/\sqrt{m}$). When attempting this attack against $\mathsf{XOR}_{\Theta(n), O(\log n), 0}$, we can use Gaussian elimination since we were looking for a zero vector in a $m/2$ dimensional subspace. When attempting this attack against $\mathsf{XOR}_{\Theta(n), O(\log n), \varepsilon}$, we must find a low weight vector in a $m/2$ dimensional subspace. This problem is the average case version of the problem finding a planted low weight codeword $v$ in a linear code, a problem which is generally believed to be intractable.

\subsection{The construction}
\begin{construction}[{Weak sparse xor construction, \textsf{weakXOR}[$m, t, \varepsilon, \eta$]}]
\label{cons: enc_LPN}
Let $m, t, \varepsilon, \eta$ be efficiently computable functions of the security parameter $n$
\begin{itemize}
    \item
    \textsf{KeyGen}$(1^n)$: Sample $(G, s)$ from $\mathcal{D}_1(n, m, t, \varepsilon)$. Output $(\textsf{sk} = s, \textsf{pk}=G)$.
    \item
    \textsf{Encode}$(1^n, G)$: Sample $u \leftarrow \text{Ber}(m, \eta)$, $e \leftarrow \text{Ber}(n, \eta)$. Output $Gu + e$.
    \item 
    \textsf{Decode}$(1^n, s, x)$: If $s^T x = 0$, output $1$. Otherwise, output $\bot$.
\end{itemize}
\end{construction}

\subsection{Robustness}
\begin{lemma}
    \label{lem: weakxor_robustness}
    Let $m = \poly(n), \eta = o(1), t = O(\log n), \varepsilon = O(\log(m)/(\eta m))$, and $p$ be any constant in $[0, 1/2)$. There exists a polynomial $p(n)$ such that for \cref{cons: enc_LPN}, for any $d \leq pn$, for a $1-\negl(n)$ fraction of keys $(\mathsf{pk}, \mathsf{sk}) \leftarrow \mathsf{KeyGen}(1^n)$,
        $$\underset{\mathcal{E}}{\operatorname{Pr}}[\mathsf{Decode}(1^n, \mathsf{sk}, \mathcal{E}(x)) = 1: x \leftarrow \mathsf{Encode}(1^n, \mathsf{pk}, 1)] \geq \frac{1}{2} + \frac{1}{p(n)}$$
        where $\mathcal{E}$ is the $d$-hypergeometric channel. the randomness is over the randomness of the encoding algorithm and the errors of $\mathcal{E}$.
\end{lemma}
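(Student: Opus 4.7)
Fix a key $(\mathsf{pk},\mathsf{sk}) = (G,s)$ output by $\mathsf{KeyGen}(1^n)$, and let $v = s^{\top} G \in \{0,1\}^m$ be the planted low-weight vector from \cref{const: sparse_xor_sample}; by construction $v \sim \text{Ber}(m,\varepsilon)$. A fresh codeword has the form $x = Gu + e$ with $u \leftarrow \text{Ber}(m,\eta)$ and $e \leftarrow \text{Ber}(n,\eta)$, and after the $d$-hypergeometric channel we see $\tilde{x} = x + y$ with $y$ uniform on $\mathcal{S}_{d,n}$. The key identity is
\[
s^{\top}\tilde{x} \;=\; s^{\top}Gu \;+\; s^{\top}e \;+\; s^{\top}y \;=\; v^{\top}u \;+\; s^{\top}e \;+\; s^{\top}y,
\]
and since $u$, $e$, $y$ are mutually independent and also independent of the key, these three summands are independent $\{0,1\}$-valued random variables. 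So the decoder's statistic $s^{\top}\tilde{x}$ is the XOR of three independent biased bits, and I just need to show each of the three biases is at least $1/\poly(n)$ on a $1-\negl(n)$ fraction of keys.

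First I would use \cref{lem: chernoff} to argue that with probability $1-\negl(n)$ over $\mathsf{KeyGen}$, the planted vector $v$ has weight $|v| \le 2\varepsilon m$; condition on this good event to define the ``good key'' set. Given $|v| \le 2\varepsilon m$, \cref{lem: xoring_bias} gives
\[
\Pr_u[v^{\top}u = 0] \;=\; \tfrac{1}{2} + \tfrac{1}{2}(1-2\eta)^{|v|} \;\ge\; \tfrac{1}{2} + \tfrac{1}{2}(1-2\eta)^{2\varepsilon m}.
\]
With $\varepsilon = O(\log(m)/(\eta m))$ we have $\eta \varepsilon m = O(\log m)$, so $(1-2\eta)^{2\varepsilon m} \ge e^{-4\eta\varepsilon m} = 1/\poly(m) = 1/\poly(n)$. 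Similarly \cref{lem: xoring_bias} gives bias $(1-2\eta)^t$ for $s^{\top}e$, and since $t = O(\log n)$ and $\eta = o(1)$ we have $\eta t = o(\log n)$, so this bias is at least $e^{-2\eta t} = n^{-o(1)} \ge 1/\poly(n)$.

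For the channel term $s^{\top}y$, the integer-valued inner product $\langle s, y\rangle$ is distributed as $\text{Hyp}(n,d,t)$, and I would apply the lower bound in \cref{lem: xor_bias_no_replacement}. With $d \le pn$ for constant $p < 1/2$ and $t = O(\log n)$, each $p_i$ in the range $[(d-t)/n,\, d/(n-t)]$ satisfies $1 - 2p_i \ge 1 - 2p - o(1) > 0$, so the product $\prod_{i=1}^{t}(1-2p_i) \ge (1-2p-o(1))^{t} = 1/\poly(n)$. Hence $\Pr[s^{\top}y = 0] \ge \tfrac{1}{2} + \tfrac{1}{2\poly(n)}$ as well.

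Finally, since the three summands are independent and each is biased towards $0$ by at least some $1/\poly(n)$ amount, iterating \cref{lem: xoring_bias} (equivalently, multiplying the three bias factors) gives
\[
\Pr_{u,e,y}[s^{\top}\tilde{x} = 0] \;\ge\; \tfrac{1}{2} + \tfrac{1}{p(n)}
\]
for some polynomial $p(n)$, which is exactly the required lower bound on the probability that $\mathsf{Decode}$ outputs $1$. The main obstacle is simply the parameter balancing: the choice $\varepsilon = O(\log(m)/(\eta m))$ is dictated precisely by the need to keep $\eta\varepsilon m = O(\log m)$ so that the bias $(1-2\eta)^{|v|}$ does not collapse below $1/\poly(n)$; verifying that this particular three-way trade-off between noise-in-the-key, noise-in-the-message, and channel error all leave room for a polynomial advantage is the only nontrivial part, and the rest is bookkeeping with independent Bernoulli XORs.
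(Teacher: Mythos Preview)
Your proposal is correct and follows essentially the same approach as the paper: decompose $s^\top\tilde{x}$ into the three independent bits $v^\top u$, $s^\top e$, $s^\top y$, bound the bias of each by $1/\poly(n)$ using \cref{lem: xoring_bias} (for the first two) and \cref{lem: xor_bias_no_replacement} (for the hypergeometric channel term), and multiply the biases. The paper uses the cutoff $|v|\le 1.5\varepsilon m$ where you use $2\varepsilon m$, but this and your more explicit parameter bookkeeping are the only differences.
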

\begin{proof}
    \fullornot{
    Let $p(n) = p'(n) \cdot p''(n) \cdot p'''(n)$ where $p'(n)$, $p''(n)$, and $p'''(n)$ are polynomials we will choose later. Consider the key sampling procedure and let $s^T G = v$. Notice that by \cref{lem: chernoff}, there is a $e^{-\Omega(\varepsilon m)}$ chance that $|v| \geq 1.5 \varepsilon m$. As long as $\varepsilon = \omega(\log(m)/m)$, this means there is a $1-\negl(n)$ chance (over the key sampling procedure) that $|v| \leq 1.5 \varepsilon m$.  Let $p'(n)$ be a polynomial so that $(1-2\eta)^{1.5 \varepsilon m} \geq 1/p'(n)$. We will assume for the remainder of the proof that $(1-2\eta)^{|v|} \geq 1/n^c$ since this happens for a $1-\negl(n)$ fraction of keys.
    
    Fix the keys. We must show that over the randomness encoding function and the channel, that a codeword from the PRC has a reasonable chance of being decoded to one. Formally, we wish to show that
    $$\underset{u, e, \mathcal{E}}{\operatorname{Pr}}[\textsf{Decode}(1^n, s, \mathcal{E}(Gu + e) = 1] = \frac{1}{2} + \frac{1}{p(n)}$$
    Since $\mathcal{E}$ is the $d$-hypergeometric channel, we can model the error from the channel as an error vector $e' \leftarrow \mathcal{S}_{d, n}$. We then need to prove
    $$\underset{u, e, e'}{\operatorname{Pr}}[\textsf{Decode}(1^n, s, Gu + e + e') = 1] = \frac{1}{2} + \frac{1}{p(n)}$$
    By the definition of our decoding function, the above probability is equal to
    $$\underset{u, e, e'}{\operatorname{Pr}}[s^T(Gu + e + e') = 0]
    = \underset{u, e, e'}{\operatorname{Pr}}[(s^T G)u + s^Te + s^T e') = 0]$$

    Since $(s^T G)u = v^T u$ is the xor of the coordinates of $u$ on which $v$ is one, $v^T u$ is the xor of $|v|$ i.i.d $\text{Ber}(\eta)$ random variables. Therefore, by \cref{lem: xoring_bias}, $(s^T G)u = v u$ has probability $1/2+(1-2\eta)^{|v|} = 1/2 + 1/p'(n)$ of being zero. Similarly, let $p''(n)$ be a polynomial (which we know exists by \cref{lem: xoring_bias}) so that $s^T e$ has at least a $1/2 + 1/p''(n)$ chance of being $0$. Let $p'''(n)$ be a sufficiently large constant so that $1/2 + (1/2) (pn/(n-t))^t \geq 1/2 + 1/p'''(n)$. Finally, by \cref{lem: xor_bias_no_replacement}, $s^T e'$ also has at least $1/2 + (1/2) (pn/(n-t))^t \geq 1/2 + 1/p'''(n)$ chance of being $0$. Since these events are all independent $s^T G u + s^T e + s^T e'$ has probability at least $1/2 + 1/p(n)$ of being $0$ by \cref{lem: xoring_bias}.
    }
    {See full version.}
\end{proof}

\subsection{A form of soundness}
\begin{lemma}
    \label{lem: weakxor_pseudosoundness}
    For any $m, t, \varepsilon, \eta$, $k = \poly(n)$, for \cref{cons: enc_LPN}, for all key pairs $(\mathsf{pk}, \mathsf{sk}) \leftarrow \mathsf{KeyGen}(1^n)$, we have
    $$\underset{x \leftarrow \{ 0, 1\}^n}{\operatorname{Pr}}[\mathsf{Decode}(1^n, \mathsf{sk}, x) = \bot] = \frac{1}{2}$$
\end{lemma}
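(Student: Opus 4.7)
The plan is to unwind the definitions: by construction, $\mathsf{Decode}(1^n,\mathsf{sk},x)$ outputs $\bot$ exactly when $s^T x \neq 0$ in $\mathbb{F}_2$, so the claim reduces to showing that $\Pr_{x \leftarrow \{0,1\}^n}[s^T x = 1] = 1/2$ for every secret key $s$ that can be output by $\mathsf{KeyGen}$. Thus I fix an arbitrary key pair $(\mathsf{pk},\mathsf{sk}) = (G,s)$ in the support of $\mathcal{D}_1(n,m,t,\varepsilon)$ and work with this fixed $s$.

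By inspection of \cref{const: sparse_xor_sample}, $s$ is the indicator vector in $\{0,1\}^n$ of the tuple $(a_1,\dots,a_t) \in [n]^t$ of pairwise distinct coordinates, so $s$ has Hamming weight exactly $t \geq 1$; in particular $s$ is a nonzero element of $\mathbb{F}_2^n$. Pick any index $i^\ast$ with $s_{i^\ast} = 1$, and decompose
\[
s^T x \;=\; x_{i^\ast} \;+\; \sum_{j \neq i^\ast} s_j x_j \pmod 2 .
\]
Since $x$ is uniform on $\{0,1\}^n$, the bit $x_{i^\ast}$ is uniform on $\{0,1\}$ and independent of $\{x_j\}_{j \neq i^\ast}$; hence, conditioning on the remaining coordinates, $s^T x$ is a uniform shift of $x_{i^\ast}$ and therefore uniform on $\{0,1\}$. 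This yields $\Pr_x[s^T x = 1] = 1/2$, which is exactly $\Pr_x[\mathsf{Decode}(1^n,s,x) = \bot] = 1/2$.

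There is no real obstacle here: the argument is essentially a one-line parity computation that only uses the linearity of the decoding check and the fact that every secret key produced by $\mathsf{KeyGen}$ is a nonzero vector (guaranteed by $t \geq 1$). The statement holds uniformly over all key pairs and does not depend on $m$, $\varepsilon$, $\eta$, or the distribution of the public key $G$.
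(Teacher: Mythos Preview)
Your proof is correct and follows essentially the same approach as the paper: both observe that $\mathsf{Decode}$ outputs $\bot$ exactly when the XOR of the $t$ coordinates of $x$ indexed by $s$ equals $1$, and that this parity is uniform for a uniformly random $x$ since $s\neq 0$. You simply spell out the standard ``condition on all but one bit'' argument that the paper leaves implicit.
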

\begin{proof}
    \fullornot{
    Recall that decoding in \cref{cons: enc_LPN} outputs $\bot$ if some fixed $O(\log n)$ bits (depending on $\mathsf{sk}$) of $x$ xor to $1$. For a random string $x$, this happens with probability exactly $1/2$.
    }
    {See full version.}
\end{proof}

\subsection{Pseudorandomness}
\begin{lemma}
    \label{lem: weakxor_pseudorandomness}
    For any efficiently computable $m = \poly(n), t, \varepsilon, \eta$, if $\textsf{XOR}_{m, t, \varepsilon}$ holds, and $\mathsf{LPN}[\eta]$ holds, then $\mathsf{weakXOR}[m, t, \varepsilon, \eta]$ is pseudorandom.
\end{lemma}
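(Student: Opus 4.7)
The plan is a two-step hybrid argument that separately exploits the two assumptions in the hypothesis. Let $q = \poly(n)$ be the number of oracle queries the adversary $\mathcal{A}$ makes. Since this is a zero-bit scheme, the view given to $\mathcal{A}$ is $(G, c_1, \ldots, c_q)$ where $G$ is the public key and either every $c_i = G u_i + e_i$ is an honestly generated codeword (with independent fresh $u_i \leftarrow \text{Ber}(m, \eta)$, $e_i \leftarrow \text{Ber}(n, \eta)$) or every $c_i \leftarrow \{0,1\}^n$ is uniform. I will exhibit a single intermediate hybrid between these two worlds.

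For the first hop, define the intermediate distribution $H$ to be identical to the real distribution except that $G \leftarrow \mathcal{D}_0(n, m)$ is drawn uniformly rather than from $\mathcal{D}_1(n, m, t, \varepsilon)$. Indistinguishability from the real distribution follows directly from $\mathsf{XOR}_{m, t, \varepsilon}$: any PPT distinguisher yields a \cref{assump: sparse-xor} adversary by sampling $u_1, \ldots, u_q$ and $e_1, \ldots, e_q$ locally, computing $c_i = G u_i + e_i$ from the XOR-challenge $G$, and invoking the distinguisher on $(G, c_1, \ldots, c_q)$. Crucially, this reduction needs only the public matrix $G$, not the planted secret $s$, since $\mathsf{Encode}$ is a public-key algorithm. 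The running time and advantage are preserved up to polynomial factors.

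For the second hop, from $H$ to the ideal world, I run the standard sample-by-sample $\mathsf{LPN}$ hybrid under a single random $G$. Let $H^{(j)}$ denote the hybrid in which the first $j$ codewords are replaced by uniform strings while the remaining $q-j$ are still honest $G u_i + e_i$. Then $H = H^{(0)}$ and the ideal distribution is $H^{(q)}$. Any PPT distinguisher between $H^{(j)}$ and $H^{(j+1)}$ embeds a single $\mathsf{LPN}[\eta]$ challenge in the $(j{+}1)$-st slot---sampling the other slots itself using the same $G$---contradicting \cref{assump: LPN_sparse_secret}. Summing over $j < q$ loses only a factor of $q = \poly(n)$, which preserves negligibility; composing with the first hop via the triangle inequality yields negligible overall distinguishing advantage.

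This lemma is a textbook two-hop reduction, so there is no real obstacle. The only points that deserve a moment's care are (i) verifying that the $\mathsf{XOR}$-step reduction can simulate the codewords using the public $G$ alone (which is immediate, since $\mathsf{Encode}$ never touches $s$), and (ii) noting that although \cref{assump: LPN_sparse_secret} is stated for one sample, the passage to $\poly(n)$ samples under a fixed random $G$ is the standard hybrid, as the paper already remarks immediately after \cref{assump: LPN}.
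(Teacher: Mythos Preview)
Your proposal is correct and follows essentially the same two-step hybrid as the paper: first swap the planted $G\leftarrow\mathcal{D}_1$ for a uniform $G\leftarrow\mathcal{D}_0$ via $\mathsf{XOR}_{m,t,\varepsilon}$, then replace the $q$ LPN samples $Gu_i+e_i$ by uniform strings via (sparse-secret) $\mathsf{LPN}[\eta]$ and a standard hybrid over samples. Your write-up is more explicit about the reductions (especially that the $\mathsf{XOR}$ step only needs $G$, not $s$), but the strategy is identical.
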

\begin{proof}
    \fullornot{
    Let $G$ be distributed uniformly over $\{ 0, 1\}^{n \times m}$, $G'$ be distributed according to $\mathcal{D}_1(m, n, t, \varepsilon)$, $q = \poly(n)$, $x_1, \dots, x_q$ be distributed as $\text{Ber}(m, \eta)$, and $e_1, \dots, e_q$ be distributed as $\text{Ber}(m, \eta)$.
    The output distribution of $\mathsf{weakXOR}[m, t, \varepsilon, \eta]$ is $G' x_1 + e_1, \dots, G' x_q + e_q$. By the $\textsf{XOR}_{m, t, \varepsilon}$ assumption, that distribution is indistinguishable from $G x_1 + e_1, \dots, G x_q + e_q$. The $\mathsf{LPN}[\eta]$ assumption implies $\mathsf{LPN'}[\eta]$, which implies that $G x_1 + e_1, \dots, G x_q + e_q$ is indistinguishable from random. By the hybrid lemma, $G' x_1 + e_1, \dots, G' x_s + e_s$ is also indistinguishable from random.
    }
    {See full version.}
\end{proof}
\begin{remark}
\label{remark: weakenedLPN}
Technically, we require something weaker than $\mathsf{LPN}[\eta]$ to hold for our proof of pseudorandomness. We need only that LPN is secure when the distribution of the secret comes from $\text{Ber}(m, \eta)$ and the error comes from a distribution $\text{Ber}(n, p)$ for any constant $p < 1/2$. However, since the LPN assumption is typically stated solely in terms of the error rate and $\mathsf{LPN}[\eta]$ is sufficient for this construction, we choose to state our results as being based on the (possibly stronger than necessary) $\mathsf{LPN}[\eta]$ assumption. 
\end{remark}

\subsection{Putting it all together}
\begin{proof}[Proof of \cref{thm: weakxor_bigtm}]
    \fullornot{
    Observe that the length of the codewords generated by \cref{cons: enc_LPN} is $\poly(n)$. Furthermore, \cref{lem: weakxor_robustness}, \cref{lem: weakxor_pseudosoundness}, and \cref{lem: weakxor_pseudorandomness} give us the remaining necessary preconditions to apply \cref{lem: amplify_public} which gives $(1-\negl(n), 1-\negl(n), \negl(n))$-public-key PRC.
    }
    {See full version.}
\end{proof}

\section{PRCs for space-bounded adversaries}
\label{sec: time-space}
We now present a zero-bit PRC scheme based on the time-space hardness of the learning parity with noise problem which is robust $\text{BSC}(p)$ for any constant $p<1/2$. The pseudorandomness of this construction is \emph{unconditional} and not based on cryptographic assumptions.

\begin{theorem}
    \label{thm: time-space-main-thm}
    Let $0 \leq \delta \leq 1/100$ be a constant and $p$ be a constant in $[0, 1/2)$. There exists a constant $c >0$ such that $\mathsf{SSR}[c \log(n), \varepsilon, k, k', \delta]$ is a zero-bit secret-key PRC which 
    \begin{itemize}
        \item 
        has output length $O(n)$
        \item
        is robust to BSC($p$)
        \item 
        has key size $O(n)$
        \item
        pseudorandom against probabilistic polynomial time, $O(n^{1.5-2 \delta}/\log^{0.01}(n))$ space adversaries.
    \end{itemize}
\end{theorem}

The celebrated work of \cite{RazLearningRequiresGoodMemory} showed that the learning parity without noise problem requires either a superpolynomial number of samples or $\Omega(n^2)$ memory. Follow-up work \cite{koltime-space} and \cite{garg2021memorysample} expanded this work to the cases where the secret is sparse and the case where the samples are noisy. We will begin by reviewing the relevant definitions and results.
\begin{definition}
    The learning sparse parities problem with density $\ell$ and error rate $\varepsilon$ is defined as follows: The secret vector $s$ is sampled uniformly at random from $\mathcal{S}_{\ell, n}$. An algorithm $\mathcal{A}$ is given samples $(a, a \cdot s + e)$ where $a \leftarrow \{ 0, 1\}^n, e \leftarrow \text{Ber}(1/2 - \varepsilon)$. We say $\mathcal{A}$ succeeds if it successfully outputs $s$.
\end{definition}

\fullversion{
\begin{definition}
    We say that a distribution of bits $X_1, \dots, X_n$ is next-bit unpredictable for a class of adversaries $\C$ if for all $\mathcal{A} \in \C$ and all $i \in [1, n]$, there exists a negligible function $\varepsilon(n)$ such that
    $$
    \underset{}{\operatorname{Pr}} \left[ \mathcal{A}(1^n, X_1, \dots, X_{i-1}) = X_i \right] \leq \frac{1}{2} + \varepsilon(n)
    $$
\end{definition}
}

\begin{lemma}
    \label{lem: TShardness}
    Let $q=\poly(n)$ and $\varepsilon = o(1)$. The distribution $a_1, a_1 \cdot s + e_1, \dots, a_q, a_q \cdot s + e_q$ where $s \leftarrow \mathcal{S}_{\Theta(\log n), n}$ and $a_i \leftarrow \{ 0, 1\}^n, e_i \leftarrow \text{Ber}(1/2-\varepsilon)$ for all $i \in [1, q]$ is next-bit unpredictable for PPT algorithms with $O(n \log^{0.99}(n)/\varepsilon)$ space.
\end{lemma}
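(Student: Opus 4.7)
The plan is to partition the positions of the transcript into two classes and handle each by a separate argument. Each sample $(a_j, a_j\cdot s + e_j)$ contributes $n+1$ bits: call the first $n$ of these \emph{query bits} and the last the \emph{answer bit}, so the whole transcript has length $N = q(n+1) = \poly(n)$.

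For any query bit at position $i$, the underlying vector $a_j$ is drawn uniformly from $\{0,1\}^n$ independently of $s$, all earlier $a_{j'}$, and all errors $e_{j'}$; hence conditioned on the prefix $X_1,\dots,X_{i-1}$ the bit $X_i$ is uniform on $\{0,1\}$, and no algorithm at all (regardless of time or space) predicts it with probability exceeding $1/2$. So next-bit unpredictability holds at every such position with zero advantage.

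For an answer bit I proceed by contradiction. Suppose some PPT adversary $\mathcal{A}$ with space $O(n\log^{0.99}(n)/\varepsilon)$ predicts the $j$-th answer bit with non-negligible advantage $\gamma(n)\ge 1/\poly(n)$ given the prefix $a_1,(a_1\cdot s + e_1),\dots,a_{j-1},(a_{j-1}\cdot s + e_{j-1}),a_j$. Then $\mathcal{A}$ immediately yields a distinguisher $\mathcal{D}$ of essentially the same space that separates the real transcript from a uniform length-$N$ string with advantage $\gamma$: $\mathcal{D}$ simulates $\mathcal{A}$ on the prefix, reads the $i$-th input bit, and outputs $1$ iff $\mathcal{A}$'s guess matches it. A Goldreich--Levin style search-to-decision reduction for LPN with a sparse secret, implemented in a streaming fashion (so that the only state retained at any time is $\mathcal{D}$'s simulation together with an $O(n)$ buffer for the current sample), then converts $\mathcal{D}$ into an algorithm $\mathcal{B}$ that recovers $s$ with inverse-polynomial probability from $\poly(n)$ samples while staying inside space $O(n\log^{0.99}(n)/\varepsilon)$. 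This contradicts the time-space lower bounds for learning $\Theta(\log n)$-sparse parity with noise $1/2-\varepsilon$ of \cite{RazLearningRequiresGoodMemory, koltime-space, garg2021memorysample}, which in this parameter regime force any learner using $\poly(n)$ samples to consume $\omega(n\log^{0.99}(n)/\varepsilon)$ memory.

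The main obstacle is verifying that the search-to-decision reduction can be made genuinely space-preserving. The textbook Goldreich--Levin transformation averages $\mathcal{D}$'s output over many rerandomizations of each sample, and a naive implementation caches $\poly(n)$ bits of auxiliary randomness on top of $\mathcal{D}$'s state. To fit within the stated space budget one has to execute the reduction online, regenerating the required auxiliary randomness from short seeds and discarding intermediate samples as they are consumed; matching the tight dependence on $\varepsilon$ in the memory bound is the delicate step. Everything else -- the decomposition of the transcript, the one-line conversion from a predictor to a distinguisher, and the appeal to the cited lower bounds -- is routine.
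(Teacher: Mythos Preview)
Your decomposition into query bits and answer bits is fine, and the query-bit case is handled correctly. For the answer bits, though, your route diverges from the paper's and leaves its central step unfinished.

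You go predictor $\to$ distinguisher $\to$ learner via a ``Goldreich--Levin style'' search-to-decision reduction, and then contradict the time-space learning lower bound. You yourself flag the hole: the space-preservation of that reduction is asserted but not verified, and you call it ``the delicate step.'' As it stands this is not a proof. (The gap is plausibly fillable---the coordinate-by-coordinate LPN reduction that rerandomizes bit $i$ of each fresh $a$ and watches whether the distinguisher's bias survives needs only $O(n)$ auxiliary workspace on top of $\mathcal{D}$'s state, and fresh samples can be consumed in a single pass---but Goldreich--Levin proper, with its pairwise-independent rerandomizations and list-decoding, is not obviously the right hammer here, and in any case you have not actually carried the argument through to the stated $O(n\log^{0.99}(n)/\varepsilon)$ bound.)

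The paper takes a shorter path that sidesteps search-to-decision entirely. After deriving from the extractor-based branching-program framework that any learner of the allowed width recovers $s$ with probability at most $2^{-\Omega(\log n\log\log n)}$, it argues directly: this forces $s$ to retain $\omega(\log n)$ bits of min-entropy conditioned on the adversary's memory state, and since the inner product with a fresh uniform $a_{t+1}$ is a strong seeded extractor, $a_{t+1}\cdot s$ (hence $a_{t+1}\cdot s + e_{t+1}$) is already statistically close to uniform given that state and $a_{t+1}$. No reduction back to learning is run, so the space-preservation issue you are worried about never arises.
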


\fullversion{See \cref{sec: time-space-rederivation} for a derivation of \cref{lem: TShardness}.}

\subsection{Construction}

\begin{construction}[{small space resilient construction, \textsf{SSR}[$ \ell, \varepsilon, k, \delta$]}]
\label{cons: enc_time-space}
Let $\ell, \varepsilon, k'$ be efficiently computable functions of the security parameter $n$ and $\delta$ be a constant.
\begin{itemize}
    \item
    \textsf{KeyGen}$(1^n)$: Sample $s_{1}, \dots, s_{k'} \leftarrow \mathcal{S}_{\ell, n}$ and output $\textsf{sk} = (s_{1}, \dots, s_{k'})$.
    \item
    \textsf{Encode}$(1^n, (s_{1}, \dots, s_{k'}), 1)$: Sample $a \leftarrow \{ 0, 1\}^n$, $e_{1}, \dots, e_{k'} \leftarrow \text{Ber}(1/2-\varepsilon)$, output 
    $$a || a \cdot s_{1} + e_1 || \dots || a \cdot s_{k'} + e_{k'} \ . $$
    
    \item 
    \textsf{Decode}$(1^n, (s_{1}, \dots, s_{k'}), x)$: Reinterpret $x \in \{ 0, 1\}^{n + k'}$ as $\Tilde{a} || \Tilde{b}_1 || \dots || \Tilde{b}_{k'}$ where $\Tilde{a} \in \{ 0, 1\}^n$ and $\Tilde{b}_{i} \in \{ 0, 1\}$ for all $i \in [1, k']$. If $\Tilde{a}$ is not $1/(2 n^{0.4})$ balanced, output $\bot$. Otherwise, let $w_i$ be one if and only if $\Tilde{a} \cdot s_{i} = \Tilde{b}_{i}$. If $\sum_{i=0}^{k'} w_i \geq k'/2 + n^\delta \sqrt{k'}$ output $1$ and otherwise output $\bot$.
\end{itemize}
\end{construction}

\subsection{Robustness}

Say that the decoder receives a string $x = \Tilde{a} || \Tilde{b}_1 || \dots || \Tilde{b}_{k'}$. Intuitively, for every $i$ such that $\Tilde{a} \cdot s_{i} = \Tilde{b}_{i}$, the decoder gains more confidence that $x$ is a codeword. However, on first inspection, it seems plausible one could flip a just a few of the first $n$ bits of a codeword (turn $a$ into $\Tilde{a}$) to ensure there would exist very few $i \in [k']$ such that $\Tilde{a} \cdot s_{i} = \Tilde{b}_{i}$. The existence of such an attack could potentially imply that the code of \cref{cons: enc_time-space} is not particularly robust to errors. We will show that such an attack does not affect robustness due to the sparsity of the $s_i$.
\fullornot{In order to do so, we first review a version of the Chernoff bound for weakly dependent random variables.}

\begin{definition}[\cite{Gavinsky2012ATB}]
    \label{def: read-k}
    A family $Y_1, \dots, Y_{k'}$ of random variables is read-$d$ if there exists a sequence $X_1, \dots, X_n$ of independent variables, and a sequence $S_1, \dots, S_{k'}$ of subsets of $[n]$ such that
    \begin{enumerate}
        \item Each $Y_i$ is a function of $(X_j: j \in S_i)$, and
        \item No element of $[n]$ appears in more than $d$ of the $S_i$'s.
    \end{enumerate}
\end{definition}

\begin{lemma}[\cite{Gavinsky2012ATB}]
    \label{lem: gavinsky}
    Let $Y_1, \dots, Y_{k'}$ be a family of read-$d$ indicator random variables with $\operatorname{Pr}[Y_i = 1] = p_i$ and let $p$ be the average of $p_1, \dots, p_{k'}$. Then for any $\varepsilon > 0$, the probabilities
    $$\operatorname{Pr}[Y_1 + \dots + Y_{k'} \geq (p+\varepsilon)k'] \qquad \text{and} \qquad
    \operatorname{Pr}[Y_1 + \dots + Y_{k'} \leq (p-\varepsilon)k']$$
    are both at most $e^{-2 \varepsilon^2 {k'}/d}$
\end{lemma}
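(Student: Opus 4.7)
The plan is to run the Chernoff--Hoeffding template and absorb the weak dependence into an extra factor of $d$ in the concentration exponent. By the symmetry $Y_i \leftrightarrow 1-Y_i$ it suffices to establish the upper-tail bound. For any $\lambda > 0$, Markov's inequality gives
\[
\Pr\!\left[\sum_i Y_i - p k' \geq \varepsilon k'\right] \leq e^{-\lambda \varepsilon k'} \cdot E\!\left[\exp\!\left(\lambda \sum_i (Y_i - p_i)\right)\right],
\]
so the whole proof reduces to bounding the MGF on the right by $\exp(\lambda^2 k' d / 8)$ --- i.e.\ the independent-case Hoeffding bound with the exponent worsened by a factor of $d$. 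Once that bound is in hand, optimizing $\lambda = 4\varepsilon / d$ delivers the desired $\exp(-2\varepsilon^2 k' / d)$, and the lower tail follows by applying the upper-tail argument to $1 - Y_i$.

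A first natural attempt for the MGF bound is to partition $[k']$ into $d$ independent sets $I_1, \dots, I_d$ of the intersection graph $H$ on $[k']$ (with $i \sim j$ whenever $S_i \cap S_j \neq \emptyset$) and then apply H\"older's inequality as
\[
E\!\left[\prod_i e^{\lambda (Y_i - p_i)}\right] \leq \prod_{l=1}^d \left(E\!\left[\prod_{i \in I_l} e^{\lambda d (Y_i - p_i)}\right]\right)^{1/d},
\]
factoring the inner expectation by mutual independence within each $I_l$ and finishing with Hoeffding's single-variable MGF bound $E[e^{\lambda d (Y_i - p_i)}] \leq e^{(\lambda d)^2 / 8}$. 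The catch is that the chromatic number, and even the fractional chromatic number, of $H$ can exceed $d$ --- a read-$2$ family whose sets $S_i = \{i, i+1 \bmod 5\}$ gives rise to a $5$-cycle intersection graph, for instance --- so no such $d$-partition need exist and a more delicate argument is required.

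To close this gap, my plan is to use the decoupling/exchangeable-copies technique at the heart of Gavinsky--Lovett--Saks--Srinivasan~\cite{Gavinsky2012ATB}: reveal the shared variables $X_1, \dots, X_n$ one at a time and, at each step, control the conditional MGF using the fact that revealing $X_j$ perturbs at most $d$ of the $Y_i$'s, combined with convexity of the exponential. An accounting through the $n$ revelation steps yields exactly the factor-of-$d$ loss in the MGF exponent needed above, and then optimizing $\lambda$ as described closes out the bound. I expect this MGF manipulation to be the main technical obstacle; the surrounding Chernoff--Hoeffding arithmetic is entirely routine.
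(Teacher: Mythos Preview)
The paper does not prove this lemma --- it is quoted from Gavinsky--Lovett--Saks--Srinivasan --- so there is no in-paper argument to compare against. Your Chernoff--Hoeffding scaffolding and the target MGF bound $E\bigl[\exp\bigl(\lambda\sum_i(Y_i-p_i)\bigr)\bigr]\le \exp(\lambda^2 k' d/8)$ are exactly what is needed, and your diagnosis that the intersection graph need not be $d$-colorable (the $5$-cycle is a clean witness) is correct and important.

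The gap is in the proposed fix. The language ``decoupling/exchangeable-copies'' together with ``control the conditional MGF using the fact that revealing $X_j$ perturbs at most $d$ of the $Y_i$'s, combined with convexity of the exponential'' describes an Azuma/bounded-differences argument on the Doob martingale $M_j = E\bigl[\sum_i (Y_i-p_i)\mid X_1,\ldots,X_j\bigr]$. That route only yields $\Pr\bigl[\sum_i(Y_i-p_i)\ge \varepsilon k'\bigr]\le\exp\bigl(-\varepsilon^2 (k')^2/(2nd^2)\bigr)$, since there are $n$ increments each bounded by $d$; this depends on the (arbitrary, existentially quantified) number $n$ of underlying variables and does not recover the claimed $\exp(-2\varepsilon^2 k'/d)$. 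What actually closes the argument is not convexity of the exponential but a generalized H\"older inequality (Finner's inequality): if each $j\in[n]$ lies in at most $d$ of the sets $S_i$, then for nonnegative $F_i$ depending only on $(X_j:j\in S_i)$ one has
\[
E\Bigl[\prod_i F_i\Bigr]\ \le\ \prod_i \bigl(E[F_i^{\,d}]\bigr)^{1/d}.
\]
This \emph{is} proved by induction on $n$, revealing the $X_j$ one at a time --- but the per-step mechanism is H\"older over $X_j$ across the at most $d$ factors that depend on it (followed by Jensen to push the resulting $L^{|A|}$-norms up to $L^d$-norms), not a bounded-increment estimate. Applying it with $F_i = e^{\lambda(Y_i-p_i)}$ and the single-variable Hoeffding bound $E\bigl[e^{\lambda d(Y_i-p_i)}\bigr]\le e^{(\lambda d)^2/8}$ gives your target MGF bound immediately. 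So your inductive-on-$n$ skeleton is the right one, but the engine at each step must be H\"older, and the clean lemma to invoke (or prove) is Finner's inequality rather than a decoupling/exchangeable-copies device.
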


\begin{lemma}
    \label{lemma: bounded_dependence}
    Let $\ell \leq O(\log n)$, $d = \omega(\log n)$, and $k' \leq n$. Consider $S = \{ S_1, \dots, S_{k'} \}$ where each $S_i$ is drawn uniformly at random from ${[n] \choose \ell}$. Some element $t \in [n]$ occurs in $d$ elements of $S$ with probability $\negl(n)$.
\end{lemma}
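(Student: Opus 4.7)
The plan is to fix an element $t \in [n]$, control the number of sets $S_i$ that contain $t$ via a Chernoff bound, and then union bound over all $n$ choices of $t$. The key observation is that since the $S_i$ are drawn independently and uniformly from $\binom{[n]}{\ell}$, the indicator variables $Y_i = \mathbf{1}[t \in S_i]$ are independent Bernoullis each with success probability $\ell/n$. Hence $X_t := \sum_{i=1}^{k'} Y_i$ has mean $\mu = k'\ell/n$, which is at most $\ell = O(\log n)$ using the hypothesis $k' \leq n$.

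Next I would apply the Chernoff bound from \cref{lem: chernoff} to $X_t$. Writing $d = (1+\delta)\mu$, the hypothesis $d = \omega(\log n)$ together with $\mu = O(\log n)$ forces $\delta = \omega(1)$, so in particular $\delta \geq 1$ for all sufficiently large $n$. Then $2+\delta \leq 3\delta$, giving
\[
\Pr[X_t \geq d] \;\leq\; e^{-\delta^2 \mu/(2+\delta)} \;\leq\; e^{-\delta \mu/3} \;=\; e^{-(d-\mu)/3} \;\leq\; e^{-d/6},
\]
for $n$ large enough that $d \geq 2\mu$. Since $d = \omega(\log n)$, this tail is $e^{-\omega(\log n)} = n^{-\omega(1)}$, i.e., negligible in $n$.

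Finally I would union bound over the $n$ possible values of $t$: the probability that some element occurs in at least $d$ of the sets $S_1, \dots, S_{k'}$ is at most $n \cdot n^{-\omega(1)} = n^{-\omega(1)} = \negl(n)$, which is the claimed conclusion.

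No step here is truly an obstacle; the main thing to watch is verifying that the Chernoff regime is the ``large deviation'' regime (the upper tail form of \cref{lem: chernoff} with $\delta \to \infty$) rather than the small-$\delta$ regime, so that the bound degrades only like $e^{-\Theta(d)}$ rather than $e^{-\Theta(\delta^2 \mu)}$. The interplay $\mu = O(\log n)$ versus $d = \omega(\log n)$ is precisely what makes this work.
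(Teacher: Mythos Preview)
Your proposal is correct and follows essentially the same approach as the paper: fix $t$, observe that the indicators $\mathbf{1}[t\in S_i]$ are i.i.d.\ Bernoulli$(\ell/n)$ so their sum is $\mathrm{Bin}(k',\ell/n)$ with mean $O(\log n)$, apply the upper-tail Chernoff bound to get a $\negl(n)$ probability of reaching $d=\omega(\log n)$, and union bound over the $n$ choices of $t$. Your write-up is in fact more explicit than the paper's about why the Chernoff exponent is $\Theta(d)$ in the large-deviation regime.
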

\begin{proof}
    \fullornot{
    Let $T_t$ denote the event where $i$  occurs in at least $d$ elements of $S$. Since $t$ occurs in each element of $S$ independently with probability $\ell/n$, the probability that it appears in at least $d$ elements of $S$ is the probability that a random variable distributed as $\text{Bin}(k', \ell/n)$ is at least $d$. Since $k' \leq n$ and $d = \omega(\log n)$, this is at most the probability that a random variable distributed as $\text{Bin}(n, c \log(n)/n)$ is at least $\omega(\log n)$. By \cref{lem: chernoff}, this probability is at most $\negl(n)$.  Union bounding over all $t \in [n]$, we see that probability that there exits some element $t \in [n]$ occurring in more than $d$ sets is at most $n \cdot \negl(n) = \negl(n)$.
    }
    {See full version.}
\end{proof}

\begin{lemma}
    \label{lem: time_space_robust}
   Let $\varepsilon$ be some function of $n$, $p$ be a constant in $[0, 1/2)$, $\delta > 0$, and $k' = (2 n^{2\delta}/\varepsilon)^2$. There exists a constant $c >0$ such that for $\ell = c \log(n)$, $\mathsf{SSR}[\ell, \varepsilon, k', \delta]$ is robust to BSC($p$) with probability $1-\negl(n)$.
\end{lemma}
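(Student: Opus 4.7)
The plan is to track the two components of the decoding check separately: the balance test on $\tilde{a}$ and the count $\sum_{i=1}^{k'} w_i$. Write the clean codeword as $a \| b_1 \| \cdots \| b_{k'}$ with $b_i = a \cdot s_i + e_i$, and let the channel output be $\tilde{a} = a + f$, $\tilde{b}_i = b_i + g_i$, where $f \leftarrow \text{Ber}(n,p)$ and $g_i \leftarrow \text{Ber}(p)$ are all independent of everything else. Since $\tilde{a}$ is uniform in $\{0,1\}^n$, \cref{lem: chernoff} shows that $\tilde{a}$ passes the $1/(2 n^{0.4})$-balance check except with probability $\exp(-\Omega(n^{0.2}))$. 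A direct expansion gives $\tilde{a} \cdot s_i - \tilde{b}_i = f \cdot s_i + e_i + g_i$, so $w_i = 1$ precisely when $f \cdot s_i + e_i + g_i = 0$.

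Next I would compute the bias of each $w_i$ and control their joint distribution. By \cref{lem: xoring_bias} applied to the three independent pieces (using that $f \cdot s_i$ is a XOR of $\ell$ i.i.d.\ $\text{Ber}(p)$ bits, conditional on $s_i$), one obtains
\[ \Pr[w_i = 1] \;=\; \tfrac{1}{2} + \varepsilon (1-2p)^{\ell+1}. \]
The $w_i$ are not independent because they share the single channel vector $f$; this is the main obstacle, and it is exactly the regime handled by \cref{lemma: bounded_dependence}. Choosing $d = \log^2 n = \omega(\log n)$, with probability $1 - \negl(n)$ over the key $(s_1, \dots, s_{k'})$, every coordinate of $[n]$ lies in at most $d$ of the supports $\mathrm{supp}(s_i)$. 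Conditional on this good key event, the family $(w_i)_{i=1}^{k'}$ is read-$d$ in the sense of \cref{def: read-k}, with underlying independent variables $\{f_j\}_{j=1}^n \cup \{(e_i, g_i)\}_{i=1}^{k'}$, since each $e_i$ and each $g_i$ appears in exactly one $w_i$.

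Finally I would balance the parameters and invoke \cref{lem: gavinsky}. Setting $\alpha := -\log_2(1-2p) > 0$ (a positive constant) and choosing the constant $c = \delta/(2\alpha)$ in $\ell = c\log n$ yields $(1-2p)^{\ell+1} \geq n^{-\delta/2}$ for all large $n$. Combined with $\sqrt{k'} = 2 n^{2\delta}/\varepsilon$, this gives $k' \varepsilon (1-2p)^{\ell+1} \geq 2 n^{\delta} \sqrt{k'}$, so the event $\sum_i w_i < k'/2 + n^{\delta} \sqrt{k'}$ corresponds to a deviation of at least $n^{\delta} \sqrt{k'}$ below the mean of $\sum_i w_i$. \cref{lem: gavinsky} then bounds this probability by $\exp(-\Omega(n^{2\delta}/\log^2 n)) = \negl(n)$, and combining with the balance-check bound and the key-good event completes the proof.
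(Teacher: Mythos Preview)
Your argument is correct and follows essentially the same route as the paper: compute the bias of each $w_i$ via \cref{lem: xoring_bias}, use \cref{lemma: bounded_dependence} to obtain a read-$d$ structure on the $w_i$'s, and finish with \cref{lem: gavinsky}. The only substantive differences are cosmetic parameter choices (you take $d=\log^2 n$ where the paper takes $d=n^{\delta}/2$, and you aim for $(1-2p)^{\ell+1}\ge n^{-\delta/2}$ where the paper uses $n^{-\delta}$), and you explicitly dispatch the balance test on $\tilde a$, which the paper's proof actually omits; one small slip is that $c=\delta/(2\alpha)$ gives $(1-2p)^{\ell+1}=(1-2p)\,n^{-\delta/2}<n^{-\delta/2}$, but taking $c$ marginally smaller fixes this without affecting the rest of your bounds.
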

\begin{proof}
    \fullornot{
    The probability that a codeword is decoded to $1$ correctly is equal to the probability that the following experiment succeeds.
    We sample $s_{1}, \dots, s_{k'} \leftarrow \mathcal{S}_{\ell, n}$, $e_1, \dots, e_{k'} \sim \text{Ber}(1/2-\varepsilon)$, $a \leftarrow \{ 0, 1\}^n$, and $e'$ from $\text{Ber}(n+k', p)$. Let $\Tilde{a} = a \oplus e'_{[1,n]}$, $\Tilde{b}_{i} = a \cdot s_{i} + e_i + e'_{n+i}$ for every $i \in [1, k']$, and $w_i = 1$ if and only if $\Tilde{a} \cdot s_{i}  = \Tilde{b}_{i}$.  The experiment succeeds if $\sum_{i=1}^{k'} w_{i} \geq k'/2 + n^\delta \sqrt{k'}$.
    
    By \cref{lemma: bounded_dependence}, we can fix $S = \{ S_1, \dots, S_{k'} \}$ and assume that no element $t \in [n]$ occurs in more than $n^\delta/2 = \omega(\log n)$ elements of $S$ since this is true with $1-\negl(n)$ probability.
    
    We see that each $w_{i}$ is a function of $\{ a_r : r \in S_j \} \cup \{ e'_r : r \in S_j\} \cup \{ e_i, e'_{n+i} \}$. Since we assumed no element $t \in [n]$ occurs in more than $n^\delta/2$ of the sets $S_j$, we see that no $w_{i}, w_{i'}$ where $i \neq i'$ share more than $n^\delta$ random variables on which they are dependent. Therefore, $w_{i}$ are read-$n^\delta$ random variables. 
    We will now compute the expectation of $w_{i}$:
    \begin{align*}
        \mathbb{E}[w_i] &= {\operatorname{Pr}}[\Tilde{a} \cdot s_{i} = \Tilde{b}_{i}]\\
        &= {\operatorname{Pr}}[(a+e'_{[1,n]}) \cdot s_i = (a \cdot s_i) + e_i + e'_{n+i}]\\
        &= {\operatorname{Pr}}[e'_{[1,n]} \cdot s_i = e_i + e'_{n+i}] \ .
    \end{align*}
    Recall $|s_i| = O(\log n)$, and by symmetry, we can assume without loss of generality that $s_i$ is a series of ones followed by a series of zeros. So the above probability expression is equal to
    \begin{align*}
        &= {\operatorname{Pr}}[e'_{[1, c \log(n)]} + e_i + e'_{n+i} = 0]\\
        &= \frac{1}{2} \left( 1 + (1-2p)^{c \log(n)+1} \left(1-2\left(\frac{1}{2} - \varepsilon \right) \right) \right)\\
        &= \frac{1}{2} \left( 1 + 2 \varepsilon (1-2p)^{c \log(n)+1} \right)
    \end{align*}
    where the second equality follows from \cref{lem: xoring_bias}.
    We can set $c$ to be a sufficiently small constant such that $(1-2p)^{c \log(n)+1} \geq 1/n^{\delta}$ so that the above is at least $\frac{1}{2} + \varepsilon/n^{\delta}$.
    
    Now that we know the expected value for each $w_{i}$, we can use the Chernoff bound for variables with bounded dependence. For sufficiently large $n$, there are $k'$ such read-$n^\delta$ variables and each has probability at least $1/2 + \varepsilon/n^{\delta}$ of being $1$. The probability that the decoding algorithm decodes to $\bot$ is
    \begin{align*}
        {\operatorname{Pr}}\left[w_{1} + \dots + w_{k'} \leq k'/2 + n^\delta \sqrt{k'} \right]
        &= {\operatorname{Pr}}\left[\sum_{i=1}^{k'} w_{i} \leq (1/2 + \varepsilon/n^{\delta} - \varepsilon/n^{\delta} + n^\delta/\sqrt{k'})k' \right]\\
        &\leq e^{-2 (-\varepsilon/n^{\delta} + n^\delta/\sqrt{k'})^2 k'/n^\delta}\\
        &= e^{-\Omega(\varepsilon/n^\delta)^2 k'/n^\delta}\\
        &= e^{-\Omega(\varepsilon^2/n^{2\delta}) k'/n^\delta}\\
        &= e^{-\Omega(\varepsilon^2/n^{2\delta}) \cdot \Omega(n^{4 \delta}/\varepsilon^2)/n^\delta}\\
        &= e^{-\Omega(n^{2 \delta})/n^\delta}\\
        &= \negl(n)
    \end{align*}
    where the second inequality follows from \cref{lem: gavinsky} and the third follows by assumption on the value of $k'$.
    Therefore, there is a $\negl(n)$ chance that codeword is decoded to $\bot$.
    }
    {See full version.}
\end{proof}

\subsection{Soundness}
On first inspection, it may seem strange that we output $\bot$ when trying to decode strings where $\Tilde{a}$ is not balanced. This is to ensure soundness. To see why this exit condition is necessary, consider what happens when the codeword is the string of all zeros. \cref{cons: enc_time-space} would certainly decode this codeword to $0$ regardless of what $\mathsf{sk}$ is. The requirement that $\Tilde{a}$ eliminates the possibility of such edge cases.
We will now show the soundness of our zero bit encryption scheme by showing that any fixed $x \in \{ 0, 1\}^{n+k'}$ decodes to $\bot$ with high probability.

\begin{lemma}
    \label{lem: decoding_partial_key}
    Let $a \in \{ 0, 1\}^n$ be a $1/n^{0.4}$-biased string, $b \in \{ 0, 1\}$, $c$ be an arbitrary constant and $s$ be drawn uniformly at random from $\mathcal{S}_{c \log(n), n}$.
    $$\underset{s}{\operatorname{Pr}} \left[ a \cdot s = b \right] \leq 1/2 + \negl(n)$$
\end{lemma}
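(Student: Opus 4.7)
The plan is to observe that $a \cdot s$ is exactly the XOR of the $c \log n$ entries of $a$ selected by the support of $s$, where the support is a uniformly random size-$c\log n$ subset of $[n]$. So if $m$ denotes the number of ones in $a$ and $X$ is the number of ones among the selected entries, then $X \sim \text{Hyp}(n, m, c\log n)$ and $a \cdot s = 1$ iff $X$ is odd. This reduces the lemma to a direct application of \cref{cor: xor_bias_no_replacement}.

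First I would translate the bias hypothesis into a bound on $m/n$. By the definition of $1/n^{0.4}$-biased, $|n - 2m| \le n^{0.6}$, so $m/n \in [1/2 - 1/(2n^{0.4}), 1/2 + 1/(2n^{0.4})]$. Then I would apply \cref{cor: xor_bias_no_replacement} with $t = c \log n$ to get
\[
\Pr_s[a \cdot s = b] \le \frac{1}{2} + \frac{1}{2}\,|1-2p|^{c \log n},
\]
where $p$ maximizes $|1-2p|$ subject to $(m-t)/n \le p \le m/(n-t)$.

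The main routine step is to bound $|1-2p|$ in this interval. Using $2(m-t)/n \le 2p \le 2m/(n-t)$, each endpoint differs from $1$ by at most $|n-2m|/n + O(t/n) \le 1/n^{0.4} + O(\log(n)/n) = O(1/n^{0.4})$. Substituting back,
\[
\frac{1}{2}|1-2p|^{c \log n} \le \frac{1}{2} \cdot O\!\left(n^{-0.4}\right)^{c \log n} = n^{-\Omega(\log n)} = \negl(n),
\]
which completes the proof. There is no serious obstacle: the bias hypothesis was chosen precisely so that raising $|1-2p|$ to a $\Theta(\log n)$ power yields a super-polynomially small quantity, and all the needed hypergeometric-XOR machinery is already given in \cref{cor: xor_bias_no_replacement}. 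The only mild care required is to handle the off-by-$t$ terms in the interval $[(m-t)/n, m/(n-t)]$, which are easily absorbed since $t/n = o(1/n^{0.4})$.
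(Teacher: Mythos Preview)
Your proposal is correct and follows essentially the same argument as the paper: identify $a\cdot s$ with the parity of a $\text{Hyp}(n, m, c\log n)$ variable, invoke \cref{cor: xor_bias_no_replacement}, and use the bias hypothesis to bound $|1-2p|$ by $O(n^{-0.4})$ so that the $c\log n$ power is negligible. Your handling of the $O(t/n)$ perturbation in the interval endpoints is in fact slightly tighter than the paper's (which states the looser $O(1/n^{0.1})$ bound), but the structure is identical.
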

\begin{proof}
    \fullornot{
    Let $r = |\{ i : a_i = 1\}|$ and notice $n/2-n^{0.6} \leq r \leq n/2+n^{0.6}$ since $a$ is $\sigma$-balanced. The probability that $a \cdot s = 0$ is the probability $X \sim \text{Hyp}(n, r, c \log(n))$ is even. Since $1/2 - O(1/n^{0.1}) \leq (r-c \log(n))/n \leq 1/2 + O(1/n^{0.1})$ and $1/2 - O(1/n^{0.1}) \leq r/(n-c \log n) \leq 1/2 + O(1/n^{0.1})$, by \cref{cor: xor_bias_no_replacement}, the probability that $X \sim \text{Hyp}(n, r, c \log(n))$ is even is at most $1/2+O(1/n^{0.1})^{c \log(n)} = 1/2 + \negl(n)$. A similar argument shows that the probability that $a \cdot s = 1$ is at most $1/2 + \negl(n)$.
    }
    {See full version.}
\end{proof}

\begin{lemma}
    \label{lem: time-space-sound}
    Let $0 \leq \delta \leq 1/100$ be constant, $\epsilon$ be any function of $n$, $k' \geq n^\delta$ be $\poly(n)$, and $\ell = O(\log n)$. For any fixed $x \in \{ 0, 1\}^{n+k'}$, in the $\mathsf{SRR}[\ell, \varepsilon, k', \delta]$ scheme,
    $$\underset{\mathsf{sk}}{\operatorname{Pr}} \left[ \mathsf{Decode}(\mathsf{sk}, x) = \bot \right] \geq 1-\negl(n) \ . $$
\end{lemma}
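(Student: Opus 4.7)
The plan is to dichotomize on the first $n$ bits of $x$. Write $x = \tilde{a} \| \tilde{b}_1 \| \cdots \| \tilde{b}_{k'}$. If $\tilde{a}$ is not $1/(2n^{0.4})$-biased, then by inspection of \cref{cons: enc_time-space} the decoder always outputs $\bot$, so the conclusion is trivial. Henceforth assume $\tilde{a}$ is $1/(2n^{0.4})$-biased.

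Next I would observe that since the secrets $s_1, \dots, s_{k'}$ are drawn independently from $\mathcal{S}_{\ell, n}$, the indicator random variables $w_i = \mathbb{1}[\tilde{a} \cdot s_i = \tilde{b}_i]$ are mutually independent. By \cref{lem: decoding_partial_key} applied to the (fixed, biased) $\tilde{a}$ and the fixed bit $\tilde{b}_i$, each $w_i$ satisfies $\mathbb{E}[w_i] \leq 1/2 + \mu(n)$ for some negligible function $\mu$. Consequently
\[
\mathbb{E}\Bigl[\sum_{i=1}^{k'} w_i\Bigr] \leq \frac{k'}{2} + k'\mu(n) = \frac{k'}{2} + \negl(n),
\]
since $k' = \poly(n)$.

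The decoder outputs $1$ only when $\sum_i w_i \geq k'/2 + n^\delta \sqrt{k'}$. For $n$ large enough, $n^\delta \sqrt{k'} - k'\mu(n) \geq \tfrac{1}{2} n^\delta \sqrt{k'}$ (using $k' \geq n^\delta$ to ensure the deviation term dominates). Applying Hoeffding's inequality to the independent $[0,1]$-valued summands $w_1,\dots,w_{k'}$,
\[
\Pr\Bigl[\sum_{i=1}^{k'} w_i \geq \tfrac{k'}{2} + n^\delta\sqrt{k'}\Bigr] \leq \exp\!\left(-\frac{2 (\tfrac{1}{2} n^\delta \sqrt{k'})^2}{k'}\right) = \exp\!\left(-\tfrac{1}{2}\, n^{2\delta}\right),
\]
which is negligible since $\delta > 0$. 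Combining both cases yields the claim.

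The main obstacle is simply lining up the two approximations: showing that the expected bias $\mu(n)$ guaranteed by \cref{lem: decoding_partial_key} is genuinely negligible (this needs the $1/(2n^{0.4})$-biasedness to make the hypergeometric parity argument go through, which is precisely why the balance test is included in the decoder), and then verifying that the gap $n^\delta \sqrt{k'}$ exceeds $k'\mu(n)$ so that Hoeffding can be invoked with a threshold on the order of $n^\delta \sqrt{k'}$. Everything else is a routine application of independence and concentration.
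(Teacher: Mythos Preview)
Your proof is correct and follows essentially the same approach as the paper: handle the unbalanced-$\tilde a$ case by the decoder's early exit, then use \cref{lem: decoding_partial_key} to bound each $\mathbb{E}[w_i]$ by $1/2+\negl(n)$ and apply a concentration inequality to the independent $w_i$'s (you use Hoeffding, the paper uses the multiplicative Chernoff bound \cref{lem: chernoff}). One small remark: your final line invokes ``negligible since $\delta>0$,'' whereas the statement formally allows $\delta=0$; the paper's proof has the same implicit restriction, so this is a shared boundary-case artifact rather than a flaw in your argument.
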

\begin{proof}
    \fullornot{
    Let us reanalyze $x$ as $a || b_1 || \dots || b_{k'}$ where $a \in \{ 0, 1\}^n$ and $b_i \in \{ 0, 1\}$ for $i \in [1, k']$. Consider the set $G = \{ j : a \cdot s_j = b_j \}$. Recall that for $x$ to not decode to $\bot$, we need $|G| \geq k'/2 + n^\delta \sqrt{k'}$. By \cref{lem: decoding_partial_key}, each $i$ is in $G$ independently with probability $1/2 + \negl(n)$. So the mean value of $|G|$ is $k'/2 + \negl(n) k'$. Therefore
    \begin{align*}
        \underset{\mathsf{sk}}{\operatorname{Pr}} \left[ \mathsf{Decode}(\mathsf{sk}, x) \neq \bot \right]
        & =  \underset{\mathsf{sk}}{\operatorname{Pr}} \left[ |G| \geq \frac{k'}{2} + n^\delta \sqrt{k'}  \right]\\
        & \leq \underset{\mathsf{sk}}{\operatorname{Pr}} \left[ |G| \geq  \left( 1+\frac{n^{\delta/100}}{\sqrt{k'}} \right) \left( \frac{k'}{2}+\negl(n) \right) \right]\\
        &=\negl(n)
    \end{align*}
    where the second to last inequality is true for sufficiently large $n$ and the last inequality follows from \cref{lem: chernoff}.
    }
    {See full version.}
\end{proof}

\subsection{Pseudorandomness}
We will show pseudorandomness of \cref{cons: enc_time-space} by first showing that any polynomial number of codewords is next-bit unpredictable for a polynomial time, space-bounded adversary. \cref{lem: TShardness} shows that sparse parity learning examples $a || a \cdot s + e$ are next bit unpredictable. In this case, $a$ is random and one pseudorandom bit is output per freshly sampled $a$. However, in \cref{cons: enc_time-space}, the samples are of the form $a || a \cdot s_1 + e_1 || \dots || a \cdot s_{k'} + e_{k'}$. In this case, $a$ is random and multiple pseudorandom bits are output per freshly sampled $a$. Fortunately, next-bit unpredictability of samples of the form $a || a \cdot s + e$ implies next-bit unpredictability of samples of the form $a || a \cdot s_1 + e_1 || \dots || a \dots s_{k'} + e_{k'}$.

\begin{lemma}
    \label{lem: unpredictability_of_Enc}
    Let $0 \leq \delta \leq 1/100$ be a constant, $\varepsilon = 1/\poly(n)$, $\log(\varepsilon) \in \mathbb{Z}, q = \poly(n), k' = \poly(n)$, and $\ell = \Theta(\log n)$. Let $\mathsf{Enc}$ be the encoding function of $\mathsf{SSR}[\ell, \varepsilon, k', \delta]$. Consider the distribution induced by $\mathsf{sk} \leftarrow \mathsf{KeyGen}(1^n)$ and $X_1, \dots, X_q \leftarrow \mathsf{Enc}(1^n, \mathsf{sk}, 1)$ where $X_i \in \{ 0, 1\}^{n+k'}$ for all $i \in [1, q]$. No PPT, $O(n \log^{0.99}(n)/\varepsilon)$ space adversary acts as a next bit predictor for $X_1, \dots, X_q$.
\end{lemma}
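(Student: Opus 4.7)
The idea is to reduce a hypothetical next-bit predictor for $X_1, \dots, X_q$ to a next-bit predictor for the single-secret sparse parity stream, so that \cref{lem: TShardness} closes the argument. Fix a position $p$ in the concatenated stream $X_1 \cdots X_q$, and write $X_i = a_i \| b_{i,1} \| \cdots \| b_{i,k'}$ with $b_{i,r} = a_i \cdot s_r + e_{i,r}$. Then $p$ either lies inside some $a_i$ block or picks out a label bit $b_{i,r}$. In the first case, the target is a bit of a freshly drawn uniform vector $a_i$, independent of $s_1, \dots, s_{k'}$ and of every earlier bit of the stream, so the prediction probability is exactly $1/2$ and the advantage is zero.

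For the case $p = b_{i,r}$, suppose $\mathcal{A}$ is a PPT $O(n \log^{0.99}(n)/\varepsilon)$-space predictor at position $p$ with advantage $\mu(n)$. I build $\mathcal{A}'$ for the single-secret sparse parity stream $(a_1', \ell_1'), \dots, (a_i', \ell_i')$ with hidden secret $s \leftarrow \mathcal{S}_{\ell, n}$ and noise $e_j' \sim \text{Ber}(1/2-\varepsilon)$. $\mathcal{A}'$ independently samples auxiliary secrets $s_{r'} \leftarrow \mathcal{S}_{\ell, n}$ for each $r' \in [k'] \setminus \{r\}$, implicitly plants $s_r := s$, and for each incoming sample $(a_j', \ell_j')$ feeds $\mathcal{A}$ the block $a_j' \| b_{j,1}' \| \cdots \| b_{j,k'}'$, where $b_{j,r'}' := a_j' \cdot s_{r'} + e_{j,r'}'$ with fresh $e_{j,r'}' \sim \text{Ber}(1/2-\varepsilon)$ for $r' \neq r$, and $b_{j,r}' := \ell_j'$. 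By construction, the simulated stream is distributed exactly as a real SSR encoding conditioned on $s_r = s$, so $\mathcal{A}$'s prediction at position $p$ is precisely a guess at $\ell_i'$ and inherits advantage $\mu(n)$.

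The main obstacle is the space budget of $\mathcal{A}'$. Since the $k' - 1$ auxiliary secrets must remain available throughout the streaming simulation (each is re-used to compute a label for every incoming sample), naively storing them costs $O(k' \log^2 n)$ bits on top of $\mathcal{A}$'s working tape, so the invocation of \cref{lem: TShardness} is valid only under the parameter regime $k' \log^2 n = O(n \log^{0.99}(n)/\varepsilon)$; this is the setting implicitly targeted. The hypothesis $\log(\varepsilon) \in \mathbb{Z}$ is used to draw each $\text{Ber}(1/2 - \varepsilon)$ noise bit from $O(\log n)$ transient random bits, so noise simulation contributes no persistent state. \cref{lem: TShardness} then forces $\mu(n)$ to be negligible, and combining this with the trivial analysis of the first case yields next-bit unpredictability at every position $p$, as claimed.
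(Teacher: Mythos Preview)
Your proposal is correct and follows essentially the same reduction as the paper: split on whether the target bit lies in some $a_i$ (trivial) or is a label bit $b_{i,r}$, and in the latter case simulate the full SSR stream by sampling the $k'-1$ auxiliary secrets while planting the unknown single-secret at coordinate $r$, then invoke \cref{lem: TShardness}. Your explicit flag that the auxiliary storage $O(k'\log^2 n)$ must fit within the $O(n\log^{0.99}(n)/\varepsilon)$ budget is exactly the paper's space accounting (the paper asserts $k'\cdot O(\log^2 n)=O(n)$ for its intended parameters), so the two arguments match.
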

\begin{proof}
    \fullornot{
    Consider the following distribution on $q(n+1)$ bits: $Y = (a_1, a_1 \cdot s + e_1, \dots, a_q, a_q \cdot s + e_q)$ where $s \leftarrow \mathcal{S}_{\ell, n}$ and $a_i \leftarrow \{ 0, 1\}^n, e_i \leftarrow \text{Ber}(1/2-\varepsilon)$. Say for the sake of contradiction that there exists an algorithm a polynomial time, $O(n \log^{0.99}(n)/\varepsilon)$ space adversary $\mathcal{A}$ and a series of indices $\{i_n\}_{n \in \mathbb{N}}$ such that $\mathcal{A}$ could predict bit $i_n$ of $(X_1, \dots, X_q) \in \{ 0, 1\}^{q(n+k')}$ with non-negligible probability. We will use $\mathcal{A}$ to construct an algorithm $\mathcal{A}'$ which acts as a next bit predictor for $Y$ by predicting bit $\{i'_n\}_{n \in \mathbb{N}}$ of $Y$ with non-negligable probability.

    Bit $i_n$ cannot be a truly random bit belonging to a newly sampled $a$ since then it would not be predictable. Therefore, bit $i_n$ corresponds to $(a_g \cdot s_j + e_{g, j})$ for some $g \in [1, q], j\in [1, k']$. In words, $i_n$ is the $j$th parity bit from codeword $g$. We now construct $\mathcal{A}'$ that will predict bit $i'_n = j(n+1)$ of $Y$. $\mathcal{A}'$ begins by sampling each $s_{1}, \dots, s_{k'}$ \emph{excluding} $s_{j}$ from $\mathcal{S}_{\Theta(\log n), n}$. $\mathcal{A}'$ then simulates $\mathcal{A}$. 
    
    Recall $Y = (a_1, a_1 \cdot s + e_1, \dots, a_q, a_q \cdot s + e_q)$ and $\mathcal{A}'$ wishes to predict $j(n+1)$ of $Y$. $\mathcal{A}'$ samples $e_{u, v} \leftarrow \text{Ber}(1/2-\varepsilon)$ for all $u \in [q], v \in [1, k']$. Let $t_i \in \{ 0, 1\}^{n+k'}$ be $(a_i, a_i \cdot s_1 + e_{i, 1}, \dots, a_i \cdot s_{j-1} + e_{i, j-1}, a_i \cdot s + e_i, a_i \cdot s_{j+1} + e_{i, j+1}, \cdot s_{k'} + e_{i, k'})$. $\mathcal{A}'$ feeds the first $i_n-1$ bits of $(t_1, \dots, t_q)$ to $\mathcal{A}$, which then outputs it prediction $b$. $\mathcal{A}'$ outputs $b$.

    We now confirm that $\mathcal{A}'$ is a polynomial time, $O(n \log^{0.99}(n)/\varepsilon)$ space algorithm. The sampling of any $e \leftarrow \text{Ber}(1/2 - \varepsilon)$ requires some care. If $\varepsilon = 1/2^x$ for some $x = n^{O(1)}$, we can use $x$ bits to sample an integer $y$ uniformly at random from $[1, 2^x]$. If $y \in [1, 2^{x-1}+1]$, 
    we set $e = 0$ and otherwise set $e = 1$. This sampling procedure results in $e \sim \text{Ber}(1/2 - \varepsilon)$. The rest of the computation still clearly proceeds in $\poly(n)$ time. Furthermore, $\mathcal{A}'$ only needs $k' \cdot \log({n \choose \log(n)}) = k' \cdot O(\log^2(n)) = O(n)$ auxiliary space to store and compute with $(s_1, \dots, s_{k'})$. Therefore, $\mathcal{A}'$ is a $\poly(n)$ time, $O(n \log^{0.99}(n)/\varepsilon)$ space algorithm.

    Since $(t_1, \dots, t_q)$ has the exact same distribution as $(X_1, \dots, X_q)$, it should be clear that $\mathcal{A}$ predicts bit $i_n$ of $(t_1, \dots, t_q)$ non-negligible probability. Since by construction, that bit corresponds exactly to bit $j(n+1)$ of $Y$, we see that $\mathcal{A}'$ predicts bit $j(n+1)$ of $Y$ with non-negligible probability.

    Therefore, $\mathcal{A'}$ is a $\poly(n)$ time, $O(n \log^{0.99}(n)/\varepsilon)$ space algorithm which predicts bit $i'_n$ of the output of $Y$. This contradicts \cref{lem: TShardness}.
    }
    {See full version.}
\end{proof}

\begin{lemma}
    \label{lem: time-space-pseudorandom}
    Let $0 \leq \delta \leq 1/100$ be a constant, $\varepsilon = 1/\poly(n)$, $\log(\varepsilon) \in \mathbb{Z}, q = \poly(n), k' = \poly(n)$, and $\ell = \Theta(\log n)$. The scheme $\mathsf{SSR}[ \ell, \varepsilon, k', \delta]$ is pseudorandom against $O(n \log^{0.99}(n)/\varepsilon)$ space, $\poly(n)$ time adversaries.
\end{lemma}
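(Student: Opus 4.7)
The plan is to reduce pseudorandomness to next-bit unpredictability via a standard Yao-style hybrid argument, then invoke the preceding lemma (\cref{lem: unpredictability_of_Enc}) to obtain the final conclusion. The only thing requiring attention is that the hybrid/predictor transformation must preserve the $O(n \log^{0.99}(n)/\varepsilon)$ space bound, so I need to be mindful of how the reduction is instantiated.

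First I would set up notation. Fix $q = \poly(n)$ and let $N = q(n+k')$ be the total number of bits across $q$ codewords. Let $Z = (Z_1, \ldots, Z_N)$ denote the bits of the concatenation $X_1 \| \cdots \| X_q$ produced by $\mathsf{Enc}$, and let $U = (U_1, \ldots, U_N)$ be uniform. Suppose for contradiction that there exists a PPT, $O(n \log^{0.99}(n)/\varepsilon)$-space distinguisher $\mathcal{A}$ achieving non-negligible advantage $\mu(n)$ between $Z$ and $U$. Define the $N+1$ hybrids $H_i = (Z_1, \ldots, Z_i, U_{i+1}, \ldots, U_N)$ for $i \in \{0, 1, \ldots, N\}$, so that $H_0 = U$ and $H_N = Z$. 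By a standard averaging argument there exists an index $i^\star = i^\star(n) \in [N]$ (which is hardcoded into the non-uniform reduction) such that $\mathcal{A}$ distinguishes $H_{i^\star - 1}$ from $H_{i^\star}$ with advantage at least $\mu(n)/N$, which remains non-negligible since $N = \poly(n)$.

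Next I would construct a next-bit predictor $\mathcal{B}$ for bit $i^\star$ of $Z$ in the usual way: on input $Z_1, \ldots, Z_{i^\star - 1}$, sample a uniformly random guess bit $g \leftarrow \{0,1\}$ together with fresh uniform bits $u_{i^\star + 1}, \ldots, u_N$, feed $(Z_1, \ldots, Z_{i^\star - 1}, g, u_{i^\star + 1}, \ldots, u_N)$ to $\mathcal{A}$, and output $g$ if $\mathcal{A}$ returns $1$ and $1 - g$ otherwise. The usual calculation shows that $\Pr[\mathcal{B} \text{ predicts } Z_{i^\star}] \geq \tfrac{1}{2} + \mu(n)/N$, which is non-negligible. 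Crucially, $\mathcal{B}$ streams the padding bits on the fly: each uniform padding bit is generated, passed to $\mathcal{A}$, and discarded, so $\mathcal{B}$ uses only the space of $\mathcal{A}$ plus $O(\log N) = O(\log n)$ extra bits for a counter and the guess bit $g$. Hence $\mathcal{B}$ runs in $\poly(n)$ time and uses $O(n \log^{0.99}(n)/\varepsilon)$ space.

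This contradicts \cref{lem: unpredictability_of_Enc}, which asserts that no such space-bounded next-bit predictor exists for $(X_1, \ldots, X_q)$. The main subtlety, and the only place I need to be careful, is the space accounting: the reduction must not buffer $Z_1, \ldots, Z_{i^\star - 1}$ or the padding, and it must generate $\text{Ber}(1/2)$ padding bits in constant space per bit (which is trivial, in contrast to the $\text{Ber}(1/2 - \varepsilon)$ sampling discussed in the proof of \cref{lem: unpredictability_of_Enc}). Once this is verified, the contradiction is immediate and the lemma follows.
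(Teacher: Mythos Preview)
Your proposal is correct and follows exactly the approach the paper takes: the paper's proof simply invokes \cref{lem: unpredictability_of_Enc} and notes that the standard Yao hybrid argument (next-bit unpredictability implies pseudorandomness) goes through for space-bounded adversaries. You have spelled out that hybrid argument in detail and correctly verified the one point that needs checking---that the predictor's overhead (streaming the prefix and padding bits, maintaining a counter and guess bit) is additive $O(\log n)$ and hence preserves the $O(n\log^{0.99}(n)/\varepsilon)$ space bound.
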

\begin{proof}
    \fullornot{
    Follows from \cref{lem: unpredictability_of_Enc} and observing that the standard hybrid argument showing that next bit unpredictability implies pseudorandomness \cite{GoldreichFoundationsOfCrypto} applies even for space-bounded adversaries.
    }
    {See full version.}
\end{proof}

\subsection{Putting it all together}
\label{subsec: space_conclusion}
\begin{theorem}
    \label{thm: time-space-gen-thm}
    Let $0 \leq \delta \leq 1/100$ be a constant, $\varepsilon = 1/\poly(n), k' = (2 n^{2 \delta}/\varepsilon)^2$, and $p$ be a constant in $[0, 1/2)$. There exists a constant $c >0$ such that $\mathsf{SSR}[c \log(n), \varepsilon, k, k', \delta]$ is a zero-bit secret-key PRC which 
    \begin{itemize}
        \item 
        has output length $n+k'$
        \item
        is robust to BSC($p$)
        \item 
        has key size $O(k' \cdot \log^2(n))$
        \item
        pseudorandom against probabilistic polynomial time, $O(n \log^{0.99}(n)/\varepsilon)$ space adversaries.
    \end{itemize}
\end{theorem}
\begin{proof}
    \fullornot{
    Immediate by combining \cref{lem: time_space_robust}, \cref{lem: time-space-sound}, and \cref{lem: time-space-pseudorandom}.
    }
    {See full version.}
\end{proof}

\cref{thm: time-space-main-thm} shows that \cref{cons: enc_time-space} can have quite small key sizes at the expense of being pseudorandom against adversaries with smaller space. \cref{thm: time-space-main-thm} now follows by instantiating the parameter regime we believe to be the most useful.
\begin{proof}[Proof of \cref{thm: time-space-main-thm}]
    \fullornot{
    Follows by setting $\varepsilon = \log(n) n^{-1/2+2 \delta}$ in \cref{thm: time-space-gen-thm}.
    }
    {See full version.}
\end{proof}

Therefore, we have shown zero bit PRCs with $O(n)$ length which are \emph{unconditionally} pseudorandom against $\poly(n)$ time, $O(n^{1.5-\delta})$ space (for any constant $\delta>0$) adversaries. It is natural to ask if this leads to multi-bit PRCs. The construction of multi-bit PRCs with rate $1/n$ (construction 3 of \cite{christ2024pseudorandom}) also works in the space-bounded setting but has codeword length $O(k n)$ when encoding $k$ bits. This would let us build $k$-bit PRCs with codeword length $O(kn)$ which are pseudorandom against PPT, $O(n^{1.5-\delta})$ space adverseries. However, that construction has the undesirable property that it narrows the gap between the space of the adversary and the space of the encoding algorithm, thereby making the scheme less secure. 
It would be interesting to build constant rate PRCs which are unconditionally pseudorandom against PPT, space-bounded adversaries.

\section{Perspectives}
\label{sec: perspectives}
Here we review some of the design decisions we have made in our constructions.

In \cref{sec: time-space}, we prove robustness to the binary symmetric channel rather than $p$-bounded channels (we assume $p$ is a constant in $[0, 1/2)$). One may ask whether it is possible to prove robustness to  all $p$-bounded channels rather than just the binary symmetric channel. To show robustness to $p$-bounded channels, one could choose to apply a similar type of reduction as given in \cref{lem: amplify_public} by including in the secret key a shift $z$ and a permutation $\pi$. This reduces showing robustness against $p$-bounded channels to showing robustness against $d$-hypergeometric channels for all $d \leq pn$, which is very similar to the binary symmetric channel. Since the robustness probability only goes up as $p$ goes down in \fullornot{\cref{lem: time_space_robust}}{\cref{cons: enc_time-space}}, there exists a function $u(n) = \negl(n)$ such that for all $d \in [1, pn]$, \cref{cons: enc_time-space} is robust to $\text{BSC}(d/n)$ with probability $1-u(n)$. This implies \cref{cons: enc_time-space} is robust to any $d$-hypergeometric channel for $d \leq pn$ with probability $O(n) u(n) = \negl(n)$. However, such a reduction incurs an additive $O(n \log n)$ factor in the key size since $\pi$ is $O(n \log n)$ bits. In the space-bounded setting, having small keys is particularly important, so we have chosen to focus on the standard setting of the binary symmetric channel, which allows for remarkably small key sizes. However, it should not be hard to formalize the argument for $p$-bounded channels. Of course, one could use a pseudorandom function to generate $\pi$ and avoid the additive $O(n \log n)$ factor in the key size. We chose not to do this to keep our construction unconditional. Combining the results of \cref{sec: time-space} with other cryptographical objects (such as PRFs) remains an interesting open question.

This work focuses on the theoretical aspects of PRCs but one can also ask if \cref{sec: hyperloop} and \cref{sec: weakxor} are practical for watermarking LLM text. Unfortunately, this seems unlikely.
The problem is that if we set the security parameter $n = 128$ (a reasonable security parameter), the application of the \cref{lem: amplify_public}, which allows us to construct a PRC from a scheme where there is only a small advantage in distinguishing codewords from random words, requires us to concatenate many codewords together, which may result in a code with a length of $\poly(n)$ for some very large polynomial. This is too long to be practical. Fundamentally, \cref{lem: amplify_public} allows us to amplify robustness by concatenating $t$ codewords of length $n$ to form a string $x$ of length $tn$. Every $n$ bit block of $y = \mathcal{E}(x)$ that is decoded to $1$ rather than $\bot$ gives us more certainty that $y$ is a corrupted codeword. 

There are, however, other ways to amplify our confidence. For example, each codeword of length $n$ can contain multiple checks. In \cref{cons: enc_LPN}, (for simplicity, consider the $\epsilon=0$ regime) we sample $G$ uniformly at random subject to $s^T G = 0^m$ and then check if $y$ is a corrupted codeword by checking if $s^T y = 0$. This gives us low confidence that $y$ is a corrupted codeword, so we apply \cref{lem: amplify_public}. However, imagine we had $s_1, \dots, s_\tau$ and sampled $G$ uniformly at random subject to the constraints that $s_i^T G = 0^m$ for all $i \in [1, \tau]$. Then to check if $y$ is a corrupted codeword, we check how many $i \in [1, \tau]$ there were such that $s_i^T y = 0$, and the more there were, the more confidence that we could have that $y$ is a corrupted codeword. This is the approach advocated by \cite{christ2024pseudorandom}. 

Similarly, in \cref{sec: hyperloop}, we implant $\poly(n)$ hyperloops but one use one for decoding (by checking if $\bigoplus_{j \in S_1} y_{j} = 0$) and then amplify our success probability using \cref{lem: amplify_public}. From a theoretical perspective, the polynomial size blowup in the length of the code incurred by \cref{lem: amplify_public} does not matter. However, from a practical perspective, the correct approach would check how many $i \in [1, t]$ there are such that $\bigoplus_{j \in S_i} y_{j} = 0$. In both cases, adding more structure in the encoding/decoding stages means that the decoder knows with greater certainty if a word is a codeword, without incurring a large blowup in codeword length.

\bibliographystyle{alpha}
\bibliography{main}

\appendix

\fullversion{
\label{sec: time-space-rederivation}
\section{Time-space hardness of $\Theta(\log n)$-sparse LPN}
For our results in \cref{sec: time-space}, we need to confirm that any algorithm which solves the learning parity with noise where the secret has weight $\Theta(\log n)$ with non-negligible probability requires a significant amount of memory. We therefore begin by examining which parameter regimes we may expect negligible success probability (\cite{RazLearningRequiresGoodMemory} \cite{koltime-space} \cite{garg2021memorysample} are generally not concerned with the distinction between negligible and non-negligible success probability).
\begin{definition}[\cite{koltime-space}, this version appears in \cite{GargExtractorBasedTS}]
    A set $T \subseteq \{ 0, 1\}^n$ is $(\varepsilon, \delta)$-biased if there are at most $\delta \cdot 2^n$ elements $a \in \{ 0, 1\}^n$ with $|\mathbb{E}_{x \in_R T}[(-1)^{a \cdot x}]| > \varepsilon$, (where $a \cdot x$ denotes the inner product of $a$ and $x$, modulo 2).
\end{definition}

\begin{definition}[\cite{garg2021memorysample}]
    Let $X, A$ be two finite sets. A matrix $M: A \times X \rightarrow \{ -1, 1\}$ is a $(k,\ell)$-$L_2$-extractor with error $2^{-r}$ if for every nonnegative $f: X \rightarrow \mathbb{R}$ with $\| f\|_2/\| f\|_1 \leq 2^{\ell}$, there are at most $2^{-k} \cdot |A|$ rows $a \in A$ with 
    $$\frac{|\langle M_a, f \rangle|}{\| f\|_1} \geq 2^{-r}$$.
    Where $\| f \|_p = (\mathbb{E}_{x \leftarrow X}[|f(x)|^p])^{1/p}$, $M_a: X \rightarrow \mathbb{R}$ is the function corresponding to the $a$th row of $M$, and $\langle M_a, f \rangle = \mathbb{E}_{x \leftarrow X}[f(x) \cdot g(x)]$.
\end{definition}

\begin{lemma}[\cite{koltime-space}, this version appears in \cite{GargExtractorBasedTS}]
    \label{lem: ts-lem1}
    There exists a sufficiently small constant $c$ such that the following holds. Let $\mathcal{S}_{\ell, n} = \{ x \in \{ 0, 1\}^n : |x| = \ell \}$. If $\ell \leq n^{0.9}$, then $\mathcal{S}_{\ell, n}$ is an $(\varepsilon, \delta)$-biased set for $\varepsilon = \ell^{-c \ell}, \delta = 2^{-cn/\ell^{0.01}}$.
\end{lemma}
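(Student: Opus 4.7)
The plan is to exploit the symmetry of the Hamming sphere: for $a \in \{0,1\}^n$ of weight $w$, the quantity $\mathbb{E}_{x \in \mathcal{S}_{\ell,n}}[(-1)^{a \cdot x}]$ depends only on $w$. Identifying $a \cdot x \bmod 2$ with the parity of $|\mathrm{supp}(a) \cap \mathrm{supp}(x)|$, and noting that when $x$ is uniform over $\mathcal{S}_{\ell,n}$ this intersection size is distributed as $X \sim \text{Hyp}(n, w, \ell)$, the bias equals $\Pr[X \text{ even}] - \Pr[X \text{ odd}]$.

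First I would apply \cref{cor: xor_bias_no_replacement} with parameters $t \leftarrow \ell$ and $m \leftarrow w$ to conclude that $|\mathbb{E}_{x \in \mathcal{S}_{\ell,n}}[(-1)^{a \cdot x}]| \leq |1-2p^*|^\ell$, where $p^*$ maximizes $|1-2p|$ over $p \in [(w-\ell)/n,\, w/(n-\ell)]$. A direct estimate gives $|1 - 2p^*| \leq |1 - 2w/n| + 2\ell/n$. Thus the bias can only exceed $\varepsilon = \ell^{-c\ell}$ when $|1 - 2p^*| > \ell^{-c}$. Since the constraint $\ell \leq n^{0.9}$ gives $2\ell/n \leq 2n^{-0.1}$, which is asymptotically dominated by $\ell^{-c}$ for any sufficiently small constant $c$ (using $\ell^c \leq n^{0.9c}$), the bad condition reduces to $|w - n/2| \geq n/(3\ell^c)$.

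Next I would count the set of bad $a$'s by a standard Chernoff/Hoeffding bound for the binomial distribution on the weight $|a|$: the number of $a \in \{0,1\}^n$ with $||a| - n/2| \geq n/(3\ell^c)$ is at most $2^{n+1}\exp(-2n/(9\ell^{2c}))$. Choosing $c$ small enough (any $c < 0.005$ works, since then $0.01 - 2c > 0$ so that $\ell^{0.01 - 2c} \geq 1$, making $2/(9\ln 2) \geq c/\ell^{0.01 - 2c}$, i.e.\ $2n/(9\ell^{2c}) \geq cn \ln(2)/\ell^{0.01}$) turns this bound into $\delta \cdot 2^n$ with $\delta \leq 2^{-cn/\ell^{0.01}}$, as required.

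The main obstacle is that a single constant $c$ appears in both $\varepsilon = \ell^{-c\ell}$ and $\delta = 2^{-cn/\ell^{0.01}}$, so one must balance two competing demands: making $c$ smaller shrinks the set of bad weights (tightening the Chernoff step) but weakens the bias threshold $\varepsilon$. The role of the hypothesis $\ell \leq n^{0.9}$ is precisely to ensure that the boundary slack $\ell/n$ in the range of $p^*$ never overwhelms the threshold $\ell^{-c}$, so that the reduction to a Chernoff bound on binomial weights is valid; once that is verified, any sufficiently small absolute constant $c$ simultaneously achieves both bounds.
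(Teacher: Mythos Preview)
The paper does not prove this lemma; it is quoted as a result from \cite{koltime-space} (in the formulation of \cite{GargExtractorBasedTS}) and used as a black box in the appendix. So there is no in-paper argument to compare against.

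Your argument is sound in outline and, fittingly, leverages the paper's own \cref{cor: xor_bias_no_replacement} to bound the Fourier coefficient by $|1-2p^*|^{\ell}$, then reduces the set of ``bad'' $a$ to a binomial tail and finishes with Hoeffding. Two small points are worth tightening. First, \cref{cor: xor_bias_no_replacement} is stated under the hypothesis $t\le m$, i.e.\ $\ell\le w$; you apply it with $m\leftarrow w$ without handling the regime $w<\ell$. This is harmless because any $a$ with $|a|<\ell\le n^{0.9}$ already satisfies $|w-n/2|\ge n/2-n^{0.9}\ge n/(3\ell^{c})$ for large $n$, so such $a$ land in the bad set by the weight criterion and need no bias estimate; but you should say so. Second, the endpoint $w/(n-\ell)$ can differ from $w/n$ by as much as $w\ell/\bigl(n(n-\ell)\bigr)\le \ell/(n-\ell)$, so the slack in $|1-2p^*|$ is $O(\ell/n)$ rather than exactly $2\ell/n$; this only affects constants and your ``$|w-n/2|\ge n/(3\ell^{c})$'' absorbs it. With those two clarifications the choice of any constant $c<0.005$ indeed makes both the $\varepsilon$-threshold step (which needs $c<1/9$ so that $\ell^{-c}\gg \ell/n$) and the Hoeffding step go through simultaneously.
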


\begin{lemma}[\cite{GargExtractorBasedTS}]
    \label{lem: ts-lem2}
    Let $T \subseteq \{ 0, 1\}^n$ be an $(\varepsilon, \delta)$-biased set, with $\varepsilon \geq \delta$. Then the matrix $M: T \times \{ 0, 1\}^n \rightarrow \{ -1, 1\}$, defined by $M(a, x) = (-1)^{a \cdot x}$ is a $(k, \ell)$-$L_2$-extractor with error $2^{-r}$, for $\ell = \Omega(\log(1/\delta))$, and $k = r = \Omega(\log(1/\varepsilon))$.
\end{lemma}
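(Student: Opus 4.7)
The plan is to prove the $L_2$-extractor property via a second-moment Fourier calculation. After normalizing so that $\|f\|_1 = 1$ (and hence $\|f\|_2 \leq 2^{\ell}$), observe that $\langle M_a, f\rangle = \mathbb{E}_x[(-1)^{a \cdot x} f(x)] = \hat{f}(a)$, the Fourier coefficient of $f$ at $a$ with respect to the uniform measure on $\{0,1\}^n$. The goal then reduces to showing that at most a $2^{-k}$ fraction of $a \in T$ satisfy $|\hat{f}(a)| \geq 2^{-r}$.

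The heart of the argument will be the identity
\[
\mathbb{E}_{a \leftarrow T}\!\left[\hat{f}(a)^2\right] = \mathbb{E}_{x,y}\!\left[f(x)\, f(y) \cdot \beta_T(x \oplus y)\right],
\]
where $\beta_T(z) := \mathbb{E}_{a \in T}[(-1)^{a \cdot z}]$ is the bias of $T$ at character $z$. This follows by expanding $\hat{f}(a)^2$ as an expectation over $(x,y)$, swapping it with the expectation over $a \in T$, and collapsing $(-1)^{a \cdot x}(-1)^{a \cdot y} = (-1)^{a \cdot (x \oplus y)}$.

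Next, I split the right-hand side according to whether $|\beta_T(x \oplus y)| \leq \varepsilon$ or not. The ``good'' part is bounded trivially by $\varepsilon\, \mathbb{E}_{x,y}[f(x) f(y)] = \varepsilon \|f\|_1^2 = \varepsilon$. For the ``bad'' part, let $B = \{z : |\beta_T(z)| > \varepsilon\}$, so $|B| \leq \delta \cdot 2^n$ by the $(\varepsilon,\delta)$-biased hypothesis. Grouping the expectation by $z = x \oplus y$ and using $\mathbb{E}_x[f(x) f(x \oplus z)] \leq \|f\|_2^2$ (Cauchy--Schwarz applied after translation invariance of the uniform distribution), the bad contribution is at most $(|B|/2^n)\|f\|_2^2 \leq \delta \cdot 2^{2\ell}$. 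Combining,
\[
\mathbb{E}_{a \leftarrow T}\!\left[\hat{f}(a)^2\right] \leq \varepsilon + \delta \cdot 2^{2\ell},
\]
and Markov's inequality yields $\Pr_{a \leftarrow T}[|\hat f(a)| \geq 2^{-r}] \leq 2^{2r}(\varepsilon + \delta \cdot 2^{2\ell})$.

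Finally, I will pick small enough constants, say $k = r = \log(1/\varepsilon)/10$ and $\ell = \log(1/\delta)/10$; the hypothesis $\varepsilon \geq \delta$ then makes $2^{2r}(\varepsilon + \delta \cdot 2^{2\ell}) \leq 2^{-k}$, which gives the $(k,\ell)$-$L_2$-extractor property with error $2^{-r}$ in the desired parameter range. The only real obstacle is calibrating these constants: the quadratic blowup $2^{2r}$ from Markov must not swamp the $\varepsilon$ and $\delta \cdot 2^{2\ell}$ savings, and the assumption $\varepsilon \geq \delta$ is precisely what allows $\ell$ to be as large as $\Omega(\log(1/\delta))$ without the $\delta \cdot 2^{2\ell}$ term dominating.
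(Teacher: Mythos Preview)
The paper does not supply its own proof of this lemma; it is quoted verbatim from \cite{GargExtractorBasedTS} and used as a black box. Your second-moment Fourier argument is correct and is essentially the standard proof from that source: expand $\mathbb{E}_{a\in T}[\hat f(a)^2]$ as a collision sum, separate the contribution of characters $z$ with small bias ($\le \varepsilon$) from those in the exceptional set $B$ of density at most $\delta$, bound the latter via Cauchy--Schwarz by $\delta\|f\|_2^2$, and finish with Markov. Your constant calibration ($k=r=\tfrac{1}{10}\log(1/\varepsilon)$, $\ell=\tfrac{1}{10}\log(1/\delta)$) is off by a harmless factor of $2$ in the final inequality, but any slightly smaller constants (or absorbing the $2$ into the hidden $\Omega(\cdot)$) fix this, so there is no real gap.
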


\begin{lemma}[Theorem 5 of \cite{garg2021memorysample}]
    \label{lem: ts-lem3}
    Let $1/100 \leq c \leq \ln(2)/3$. Fix $\gamma$ to be such that $(3c)/\ln(2) \leq \gamma^2 \leq 1$. Let $X, A$ be two finite sets. Let $n = \log_2 |X|$. Let $M: A \times X \rightarrow \{ -1, 1\}$ be a matrix which is a $(k', \ell')$-$L_2$-extractor with error $2^{-r}$, for sufficiently large $k', \ell'$, and $r'$, where $\ell' \leq n$. Let 
    $$r := \min \left\{  \frac{r'}{2}, \frac{(1-\gamma) k'}{2}, \frac{(1-\gamma)\ell'}{2}-1 \right\} \ .$$
    Let $B$ be a branching program, of length at most $2^r$ and width at most $2^{c \cdot k' \cdot \ell'/\varepsilon}$ for the learning problem that corresponds to the matrix $M$ with error probability $\varepsilon$. Then the success probability of $B$ is at most $O(2^{-r})$.
\end{lemma}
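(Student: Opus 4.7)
The plan is to follow the extractor-based branching-program framework pioneered by \cite{RazLearningRequiresGoodMemory} for parity learning and refined for the noisy/sparse settings by \cite{koltime-space, garg2021memorysample}. First I would model the learner as a layered branching program $B$ of length $L \leq 2^r$ and width $W \leq 2^{c k' \ell'/\varepsilon}$. Each vertex $v$ of $B$ carries an induced posterior distribution $\mu_v$ on the secret $x \in X$, conditioned on $B$ having reached $v$; the success probability of $B$ is bounded by $\max_v \mu_v(x^\star)$ over terminal vertices, where $x^\star$ is the true secret.

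The argument splits vertices by concentration of the posterior. Call $v$ \emph{significant} if $\|\mu_v\|_2/\|\mu_v\|_1 > 2^{\ell'}$ and \emph{insignificant} otherwise. At an insignificant vertex the $L_2$-extractor hypothesis applied to $f = \mu_v$ tells us that for all but a $2^{-k'}$ fraction of sample rows $a \in A$ we have $|\langle M_a, \mu_v\rangle|/\|\mu_v\|_1 \leq 2^{-r'}$; the next observed label $M(a, x^\star) \oplus e$, whose bias is further damped by the factor $2\varepsilon$ coming from the noise, therefore cannot sharpen the posterior much. Moreover $\|\mu_v\|_\infty \leq \|\mu_v\|_2 \leq 2^{\ell'}/|X|$ at insignificant $v$, so terminating there gives negligible success. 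The whole task then reduces to bounding the probability that the computation ever reaches a significant vertex.

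For this I would track a progress measure of the form $\Phi_i = \sum_{v \in \text{layer } i} \Pr[\text{reach } v] \cdot g(\mu_v(x^\star))$, with $g$ a convex transform (typically $\mu \mapsto \mu^s$ for $s$ slightly greater than $1$), truncated to insignificant vertices. The extractor bound together with the $\varepsilon$-damping then gives $\mathbb{E}[\Phi_{i+1}\mid\Phi_i] \leq (1+O(2^{-\min(k', r')})) \Phi_i$. Iterating over $L = 2^r$ layers and union-bounding over the $W$ vertices per layer, the total probability of ever reaching a significant vertex comes out to $O(W \cdot L \cdot 2^{-\min(k', r')})$, which under the hypothesized $r \leq \min(r'/2, (1-\gamma)k'/2, (1-\gamma)\ell'/2 - 1)$ collapses to $O(2^{-r})$.

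The main obstacle is designing the truncated potential $g(\cdot)$ and the layer-by-layer bookkeeping so that the potential can only grow on genuinely bad transitions (bad rows at insignificant vertices) rather than on typical random fluctuations; executing this carefully in $L_2$ rather than $L_\infty$ is what produces the $1/\varepsilon$ (and not $1/\varepsilon^2$) dependence in the width budget, and is where I expect the calculation to be most delicate. A subsidiary issue is reducing a general program to a clean layered form of the prescribed length and width, which is standard but should be stated explicitly; and one must verify that the extractor error $2^{-r}$ survives the extra $2^{r'/2}$ blowup when converting the posterior-growth statement into a success-probability bound, which is the reason for the $r \leq r'/2$ constraint.
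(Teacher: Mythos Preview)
The paper does not prove this lemma: it is quoted verbatim as ``Theorem 5 of \cite{garg2021memorysample}'' and used as a black box in the derivation of \cref{lem: TShardness}. There is therefore no in-paper proof to compare your proposal against.

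As a side remark, your sketch is a reasonable high-level summary of the extractor-based branching-program method of \cite{RazLearningRequiresGoodMemory, koltime-space, garg2021memorysample}: track the posterior $\mu_v$ at each vertex, use the $L_2$-extractor property to show that on all but a $2^{-k'}$ fraction of rows the posterior barely sharpens (with the extra $\varepsilon$ damping from noise), and then control via a potential the probability of ever reaching a ``significant'' vertex where $\|\mu_v\|_2/\|\mu_v\|_1$ exceeds $2^{\ell'}$. One inaccuracy worth flagging: in the actual argument the width $W$ does \emph{not} enter via a crude union bound over vertices in a layer; rather, the memory bound $c k'\ell'/\varepsilon$ is what caps how many samples can each contribute a nontrivial posterior update along any single computation path, and the potential is aggregated over the reaching-probability weights (which sum to $1$ per layer) rather than multiplied by $W$. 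Your displayed bound $O(W\cdot L\cdot 2^{-\min(k',r')})$ would be far too weak, since $W=2^{ck'\ell'/\varepsilon}$ is enormous compared with $2^{\min(k',r')}$. If you want to actually reproduce the proof, that is the step to revisit; but for the purposes of this paper no proof is required.
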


\begin{lemma}
    No probabilistic polynomial time, $O(n \log^{0.99}(n)/\varepsilon)$ space algorithm solves the learning sparse parties problem with density $\Theta(\log n)$ and error rate $1/2-\varepsilon$ with non-negligible probability.
\end{lemma}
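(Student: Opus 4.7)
My plan is to chain the three lemmas of the appendix in a direct way, modeling any probabilistic polynomial time, space-bounded algorithm for the learning sparse parities problem as a branching program whose width is bounded by the exponential of the space. The matrix $M: \mathcal{S}_{\ell, n} \times \{0,1\}^n \to \{-1,+1\}$ defined by $M(a,x) = (-1)^{a \cdot x}$ is exactly the learning matrix for the sparse parity problem with secret drawn uniformly from $\mathcal{S}_{\ell, n}$, so Lemma~\ref{lem: ts-lem3} applies once we verify its extractor hypothesis via the previous two lemmas.

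First I would fix $\ell = c_0 \log n$ for a sufficiently small constant $c_0$ and invoke Lemma~\ref{lem: ts-lem1} to conclude that $\mathcal{S}_{\ell, n}$ is $(\varepsilon_{\mathrm{bias}}, \delta_{\mathrm{bias}})$-biased with $\log(1/\varepsilon_{\mathrm{bias}}) = \Theta(\log n \log \log n)$ and $\log(1/\delta_{\mathrm{bias}}) = \Theta(n / \log^{0.01} n)$. Since $\varepsilon_{\mathrm{bias}} \geq \delta_{\mathrm{bias}}$ for large $n$, Lemma~\ref{lem: ts-lem2} converts this into the statement that $M$ is a $(k', \ell')$-$L_2$-extractor with error $2^{-r'}$, where $k' = r' = \Omega(\log n \log \log n)$ and $\ell' = \Omega(n / \log^{0.01} n) \leq n$.

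Next I would feed these parameters into Lemma~\ref{lem: ts-lem3} with noise parameter $\varepsilon$. A PPT algorithm corresponds to a branching program of polynomial length, which is certainly at most $2^r$ since $r = \Omega(\log n \log \log n)$ is superlogarithmic. An $O(n \log^{0.99}(n)/\varepsilon)$-space algorithm has width at most $2^{O(n \log^{0.99}(n)/\varepsilon)}$, and since $k' \ell' = \Theta(n \log^{0.99}(n) \log \log n)$, this width is bounded by $2^{c \cdot k' \ell'/\varepsilon}$ for the required constant $c$ (asymptotically, the extra $\log \log n$ factor in $k'\ell'$ gives us slack). Lemma~\ref{lem: ts-lem3} then yields success probability at most $O(2^{-r}) = 2^{-\Omega(\log n \log \log n)} = \negl(n)$, giving the claim.

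\textbf{The main subtlety.} The delicate step is the bookkeeping on parameters: I need $\ell' \leq n$ (so the extractor hypothesis of Lemma~\ref{lem: ts-lem3} is satisfied), I need the error parameter $r$ of Lemma~\ref{lem: ts-lem3} to be large enough that $O(2^{-r})$ is genuinely negligible in $n$ (and not merely $1/\poly(n)$), and I need the width bound $2^{ck'\ell'/\varepsilon}$ to absorb the claimed space bound $O(n \log^{0.99}(n)/\varepsilon)$. All three requirements are met precisely because of the asymmetry that $k' \ell' = \Theta(n \log^{0.99}(n) \log \log n)$ grows strictly faster than $n \log^{0.99}(n)$, while $r$ grows only like $\log n \log \log n$. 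One must also check the constant-regime constraint $1/100 \leq c \leq \ln(2)/3$ of Lemma~\ref{lem: ts-lem3} is compatible with our choice; this is routine given the freedom to shrink $c_0$. No genuinely new combinatorial or probabilistic argument is needed beyond what is packaged into the cited lemmas.
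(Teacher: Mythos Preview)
Your proposal is correct and follows essentially the same route as the paper: chain Lemma~\ref{lem: ts-lem1} $\to$ Lemma~\ref{lem: ts-lem2} $\to$ Lemma~\ref{lem: ts-lem3} and read off the space/time/success-probability bounds. The parameter bookkeeping you outline ($k'=r'=\Omega(\log n\log\log n)$, $\ell'=\Omega(n/\log^{0.01}n)$, hence $k'\ell'=\Theta(n\log^{0.99}(n)\log\log n)$ and $2^{-r}=\negl(n)$) matches the paper's computation exactly.

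One small point the paper makes explicit that you gloss over: Lemma~\ref{lem: ts-lem3} is stated for \emph{deterministic} branching programs, so one must say a word about why the probabilistic case reduces to the deterministic one. The paper handles this by observing that the algorithm can harvest its random bits from the first coordinate of fresh LPN samples (which are uniform and independent of the secret), thereby derandomizing without increasing the space. Your phrasing ``modeling any probabilistic \ldots\ algorithm as a branching program'' would also be fine if backed by the standard averaging argument (fix the best random string; this is legitimate here since branching programs are a non-uniform model), but you should state which derandomization you are invoking rather than leave it implicit.
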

\begin{proof}
    We begin by observing that any probabilistic polynomial time, $O(n \log^{0.99}(n)/\varepsilon)$ space algorithm implies a deterministic polynomial time, $O(n \log^{0.99}(n)/\varepsilon)$ space algorithm since a deterministic algorithm can simply use the first bit of a freshly sampled LPN sample any time it needs a random bit. Therefore, we only need to show that no deterministic polynomial time, $O(n \log^{0.99}(n)/\varepsilon)$ space algorithm solves the learning sparse parties with density $\Theta(\log n)$ and error rate $1/2-\varepsilon$ with non-negligible probability. 
    
    Let $\ell = \Theta(\log n)$ be the weight of our LPN secret. \cref{lem: ts-lem1} tells us that there exists a constant $c$ such that $\mathcal{S}_{\ell, n}$ is an $(\varepsilon, \delta)$-biased set where $\varepsilon = \ell^{-c \ell}, \delta = 2^{-cn/\ell^{0.01}}$. Since $\varepsilon \geq \delta$ (for sufficiently large $n$), by \cref{lem: ts-lem2}, we see that the matrix $M$ defined by $\mathcal{S}_{\ell, n}$ is a $(k', \ell')$-$L_2$-extractor with error $2^{-r'}$ for $\ell' = \Omega(n/\ell^{0.01}), k' = r' = \Omega(\ell \log(\ell))$. Notice that $r', k', \ell'$ are all $\Omega(\ell \log(\ell))$. Therefore, by \cref{lem: ts-lem3}, we see that branching program which uses $o(k' \ell'/ \varepsilon) = o(\ell \log(\ell) n/(\ell^{0.01} \varepsilon)) = O(n \ell^{0.99} \log(\ell)/\varepsilon)$ memory and $O(2^{\ell \log(\ell)})$ time has as a $O(2^{-\ell \log(\ell)}) = \negl(n)$ of solving the learning parity with noise problem.
\end{proof}

Finally, as pointed out in \cite{koltime-space}, this gives us cryptography against a bounded space adversary. Since the inner product is a strong extractor, if we select our secret $s$ from $\mathcal{S}_{\ell, n}$ and output $(a_1, a_1 \cdot s + e_1), \dots, (a_{t}, a_{t} \cdot s + e_t), a_{t+1}$ (where $t = \poly(n)$, each $a_i$ is sampled uniformly from $\{ 0, 1\}^n$, and each $e_i$ is sampled from $\text{Ber}(1/2 - \varepsilon)$), no polynomial time, $o(n \ell^{0.99} \log(\ell)/\varepsilon)$ space algorithm can predict $a_{t+1} \cdot s$ with noticeable probability.

\section{Hypergeometric distribution lemma}
\label{sec: appendixB}
We restate and prove \cref{lem: xor_bias_no_replacement} and \cref{cor: xor_bias_no_replacement}.

\xorBiasHyp*
\begin{proof}
    Note that $X$ is the number of special elements chosen if have $n$ elements, of which $m$ are special, and we choose $t$. Consider selecting the elements one by one. Let $X_i$ be the indicator random variable denoting if the $i$th element chosen is special, let $a_i$ denote the probability that we have an even number of special elements after $i$ items are chosen. Let $p_i$ denote $\underset{X_1, \dots, X_n}{\operatorname{Pr}}[X_1 \oplus \dots \oplus X_{i+1} = 1 | X_1 \oplus \dots \oplus X_{i} = 0]$. We will show by induction that $a_i = 1/2 + (1 + \prod_{j=1}^i (1-2p_j))$. When $i = 0$, this holds trivially. We now show the inductive case.
    \begin{align*}
        a_{i+1} &= \underset{X_1, \dots, X_n}{\operatorname{Pr}}[X_1 \oplus \dots \oplus X_{i+1} = 0]\\
        &= \underset{X_1, \dots, X_n}{\operatorname{Pr}}[X_1 \oplus \dots \oplus X_{i+1} = 0 \cap X_1 \oplus \dots \oplus X_{i} = 0]\\
        &+ \underset{X_1, \dots, X_n}{\operatorname{Pr}}[X_1 \oplus \dots \oplus X_{i+1} = 0 \cap X_1 \oplus \dots \oplus X_{i} = 1]\\
        &= \underset{X_1, \dots, X_n}{\operatorname{Pr}}[X_1 \oplus \dots \oplus X_{i+1} = 0 | X_1 \oplus \dots \oplus X_{i} = 0] \cdot a_i\\
        &+ \underset{X_1, \dots, X_n}{\operatorname{Pr}}[X_1 \oplus \dots \oplus X_{i+1} = 0 | X_1 \oplus \dots \oplus X_{i} = 1] \cdot (1-a_i)\\
        &= (1-p_{i+1}) \cdot a_i + p_{i+1} \cdot (1-a_i)\\
        &= p_{i+1} + (1-2p_{i+1}) a_i\\
        &= \frac{1}{2} \left( 1 + \prod_{j=1}^{i+1} (1-2p_j) \right)
    \end{align*}
    This concludes the inductive case.
    Notice that $\frac{m-t}{n} \leq p_i \leq \frac{m}{n-t}$ since every time we choose an element, the proportional of special elements to total elements left is at least $(m-t)/n$ and at most $m/(n-t)$. Therefore,
    \[ \min_{\frac{m-t}{n} \leq p_i \leq \frac{m}{n-t}} \frac{1}{2} \left( 1 + \prod_{i=1}^t (1-2p_i) \right) \leq {\operatorname{Pr}}[\text{$X$ is even}] \leq \max_{\frac{m-t}{n} \leq p_i \leq \frac{m}{n-t}} \frac{1}{2} \left( 1 + \prod_{i=1}^t (1-2p_i) \right) \ .\] The result follows by algebraic manipulation.
\end{proof}

\xorBiasHypCor*
\begin{proof}
     Observe that
     \[ \max_{\frac{m-t}{n} \leq p_i \leq \frac{m}{n-t}} \prod_{i=1}^t (1-2p_i) \leq \max_{\frac{m-t}{n} \leq p_i \leq \frac{m}{n-t}} \prod_{i=1}^t |1-2p_i| \leq |1-2p_i|^t \ . \]
     By \cref{lem: xor_bias_no_replacement},
     \begin{equation*}
         {\operatorname{Pr}}[\text{$X$ is even}]
         = \frac{1}{2} + \frac{1}{2} \max_{\frac{m-t}{n} \leq p_i \leq \frac{m}{n-t}} \prod_{i=1}^t (1-2p_i)
         \leq \frac{1}{2} + \frac{1}{2} |1-2p|^t \ .
     \end{equation*}
     Similarly,
     \begin{align*}
         {\operatorname{Pr}}[\text{$X$ is even}]
         &\geq \frac{1}{2} + \frac{1}{2} \min_{\frac{m-t}{n} \leq p_i \leq \frac{m}{n-t}} \prod_{i=1}^t (1-2p_i)\\
         &= \frac{1}{2} + \frac{1}{2} \max_{\frac{m-t}{n} \leq p_i \leq \frac{m}{n-t}} -\prod_{i=1}^t (1-2p_i)\\
         &= \frac{1}{2} - \frac{1}{2} \max_{\frac{m-t}{n} \leq p_i \leq \frac{m}{n-t}} \prod_{i=1}^t (1-2p_i)\\
         &\geq \frac{1}{2} - \frac{1}{2} |1-2p|^t \ .
     \end{align*}
     This implies that the probability that $X$ is odd is at most $1/2 + (1/2) |1-2p|^t$.
\end{proof}
}


\end{document}